\newtheorem{thm}{Theorem}
\newtheorem{lemma}{Lemma}
\newtheorem{proposition}{Proposition}
\newtheorem{assumption}{Assumption}
\newenvironment{talign*}
 {\csname align*\endcsname}
 {\endalign}
\def\@fnsymbol#1{\ensuremath{\ifcase#1\or \dagger\or \ddagger\or
   \mathsection\or \mathparagraph\or \|\or **\or \dagger\dagger
   \or \ddagger\ddagger \else\@ctrerr\fi}}
\title[The price of ignorance]{The price of ignorance: \\how much does it cost to forget \\ noise structure in low-rank matrix estimation?}
\author[Jean Barbier$^*$, TianQi Hou, Marco Mondelli$^{*}$ and Manuel S\'aenz$^{*}$]{Jean Barbier$^*$, TianQi Hou, Marco Mondelli$^{*}$ and Manuel S\'aenz$^{*}$}
\address[Jean Barbier]{International Center for Theoretical Physics (ICTP), Trieste, Italy.}
\email{jbarbier@ictp.it}
\address[TianQi Hou]{Theory Lab, Central Research Institute, 2012 Labs, Huawei Technologies Co., Ltd.}
\email{hou@connect.ust.hk}
\address[Marco Mondelli]{Institute of Science and Technology Austria (ISTA)}
\email{marco.mondelli@ist.ac.at}
\address[Manuel S\'aenz]{International Center for Theoretical Physics (ICTP), Trieste, Italy.}
\email{msaenz@ictp.it}
\thanks{$*$ Equal contribution.}
\begin{document}

\maketitle

\begin{abstract}
We consider the problem of estimating a rank-$1$ signal corrupted by structured rotationally invariant noise, and address the following question: \emph{how well do inference algorithms perform when the noise statistics is unknown and hence Gaussian noise is assumed?} While the matched Bayes-optimal setting with unstructured noise is well understood, the analysis of this mismatched problem is only at its premises. In this paper, we make a step towards understanding the effect of the strong source of mismatch which is the noise statistics. Our main technical contribution is the rigorous analysis of a Bayes estimator and of an approximate message passing (AMP) algorithm, both of which incorrectly assume a Gaussian setup. The first result exploits the theory of spherical integrals and of low-rank matrix perturbations; the idea behind the second one is to design and analyze an artificial AMP which, by taking advantage of the flexibility in the denoisers, is able to ``correct'' the mismatch. Armed with these sharp asymptotic characterizations, we unveil a rich and often unexpected phenomenology. For example, despite AMP is in principle designed to efficiently compute the Bayes estimator, the former is \emph{outperformed} by the latter in terms of mean-square error. We show that this performance gap is due to an incorrect estimation of the signal norm. In fact, when the SNR is large enough, the overlaps of the AMP and the Bayes estimator coincide, and they even match those of optimal estimators taking into account the structure of the noise. 
\end{abstract}


\section{Introduction}\label{sec:intro}

The estimation of a low-rank matrix from noisy data is a central problem in machine learning, and it appears, e.g., in sparse principal component analysis (PCA) \cite{johnstone2009consistency, zou2006sparse}, 
community detection  
\cite{abbe2017community, moore2017computer}, and group synchronization \cite{perry2018message}. In this paper, we consider the prototypical task of recovering a symmetric rank-$1$ matrix $\bX \bX^\intercal$ from noisy observations of the form
\begin{align}\label{eq:spikedmodel}
    \bY = \frac{\sqrt{\lambda_*}} {N} \bX\bX^\intercal + \bZ\in \mathbb R^{N\times N}.
\end{align}
Here, $\lambda_* > 0$ is the signal-to-noise ratio (SNR) which quantifies the signal strength, and $\bZ\in\R^{N\times N}$ is a random matrix that captures the noise.
A natural estimator of $\bX$ is given by the principal eigenvector of $\bY$. Its performance and, more generally, the behavior of the eigenvalues and eigenvectors of \eqref{eq:spikedmodel} has been studied in exquisite detail in statistics \cite{johnstone2001distribution, paul2007asymptotics} and random matrix theory \cite{baik2005phase, baik2006eigenvalues, benaych2011eigenvalues, benaych2012singular, capitaine2009largest, feral2007largest, knowles2013isotropic,bai2012sample}. Going beyond the spectral estimator given by the principal eigenvector, approximate message passing (AMP) constitutes a popular family of iterative algorithms. The reason for this popularity lies in two especially attractive features: \emph{(i)} AMP algorithms can be tailored to exploit knowledge on the structure of the signal; and \emph{(ii)} under suitable assumptions, their performance in the high-dimensional limit is accurately tracked by a deterministic recursion known as \emph{state evolution} \cite{bayati2011dynamics, bolthausen2014iterative, javanmard2013state}. Using the state evolution machinery, it has been showed that AMP achieves Bayes-optimal performance in some Gaussian models \cite{6875223,donoho2013information, montanari2017estimation,barbier2019optimal}, and a bold conjecture from statistical physics is that AMP is optimal in the class of polynomial-time algorithms for a large class of inference problems with random design.


A general class of noise models that has received attention in the literature is given by the family of \emph{rotationally invariant matrices}. This is much milder than requiring $\bZ$ to be Gaussian: it only imposes that the matrix of eigenvectors is uniformly random, and allows for an arbitrary spectrum. Hence, $\bZ$ can capture the complex correlation structure which often occurs in applications (e.g., recommender systems \cite{bell2007lessons} and bioinformatics \cite{price2006principal}). However, estimating noise statistics from data is costly or, for problems involving large-scale datasets (computational genomics is a paradigmatic example \cite{engelhardt2010analysis,abraham2017flashpca2}), may be impossible. Thus, one natural idea is to simply assume Gaussian statistics for $\bZ$. The case in which $\bZ$ is actually Gaussian has been thoroughly studied \cite{korada2009exact,deshpande2017asymptotic,lesieur2017constrained,2016arXiv161103888L,XXT,montanari2017estimation}. Beyond Gaussianity, a rapidly growing literature is focusing on rotationally invariant models assuming \emph{perfect knowledge of the statistics of the structured matrix appearing in the problem} (such as noise in inference, a sensing, data, or coding matrix in regression tasks, weight matrices in neural networks, or a matrix of interactions in spin glass models) \cite{parisi1995mean,bhattacharya2016high,fan2022approximate,fan2021ortho,gabrie2019entropy,opper2016theory,ma2021analysis,maillard2019hightemp,rangan2019vector,takahashi2020macroscopic,takeda2006analysis,barbier2018mutual,dudeja2020information,pandit2020inference,pourkamali2021mismatched,hou2022sparse}. However, despite this impressive progress when the noise statistics is known, low-rank estimation in a \emph{mismatched setting with partial to no knowledge of the statistics of the rotationally invariant noise matrix remains poorly understood}. The goal of this work is thus to shed light into the following fundamental question:

\begin{center}
\textit{
Suppose that the noise statistics is unknown or unreacheable, and hence Gaussian noise is naively assumed. What is the impact of this mismatch on the overall performance of inference methods?
}
\end{center}
\vspace{3pt}



\subsection{Summary of contributions}~\\ \vspace{-14pt}

In this paper, we provide rigorous performance guarantees for two inference methods which incorrectly assume Gaussian noise statistics: a Bayes estimator, and an AMP algorithm. Then, by exploiting these sharp analytical characterizations, we describe a number of surprising effects coming from numerical simulations. Our main findings are detailed below. 

\textbf{Theoretical results.} \emph{(i)} We give a closed-form expression for the mean-square error (MSE) of the Bayes estimator which samples from the \emph{mismatched} posterior (cf. Theorem \ref{thm:lim_mse}). Under a certain concentration assumption, we also present an asymptotic result on the overlap of such estimator. 
\emph{(ii)} We provide a state evolution analysis (cf. Theorem \ref{thm:SE}) for the \emph{Gaussian} AMP algorithm which is designed for Gaussian noise. 


\textbf{Numerical results.} The mismatched Bayes and AMP estimators display surprising behaviors -- already for the case of a spherical signal prior -- and the two performance metrics (MSE and overlap) exhibit a remarkably different phenomenology: \emph{(i)} As for the MSE, the Gaussian AMP is \emph{outperformed} by the Bayes estimator. 
Here, the surprise comes from the fact that AMP algorithms are, in principle, designed to sample from the posterior distribution (and this is often what happens in the matched case). \emph{(ii)} In contrast, when the SNR is large enough, the overlaps of the two mismatched estimators \emph{coincide}, and they even match the overlap of estimators which exploit the noise statistics, namely the optimal spectral method which minimizes the MSE and the correct AMP designed in \cite{fan2022approximate}. \emph{(iii)} Under certain conditions, the MSE of the mismatched Bayes estimator \emph{matches} that of a Gaussian spectral method with no information on the noise structure. \emph{(iv)} The mismatched estimators are \emph{outperformed} -- in terms of MSE -- by the optimal spectral method and the correct AMP, whose performance coincide. \emph{(v)} Finally, the MSE curves of the Bayes and Gaussian spectral methods exhibit a striking non-monotone behavior.



\subsection{Related work}~\\ \vspace{-14pt}

\textbf{The impact of mismatch.}  Given the practical relevance of understanding the effect of mismatch in statistical inference, a line of work has approached the issue from various angles. Regression being probably the most paradigmatic model of inference task, mismatch has been thoroughly studied in this context. In information theory, the trend was initiated in \cite{verdu2010mismatched,weissman2010relationship}. Studies for M-estimation and robust statistics followed based on statistical mechanics approaches \cite{advani2016statistical}, on the analysis of approximate message passing algorithms \cite{bradic2016robustness,sur2019modern,donoho2016high,gerbelot2020asymptotic}, on Gordon's convex min-max theorem \cite{pmlr-v40-Thrampoulidis15,thrampoulidis2018precise} and the leave-one-out method \cite{el2013robust,el2018impact}. Concerning the analysis of Bayesian approaches to mismatched linear regression, we refer to \cite{castillo2015bayesian,mukherjee2021variational,ray2021variational,barbier2021performance} or \cite{banerjee2021bayesian} for a review.

In contrast, the problem studied in the present paper, namely low-rank matrix inference with mismatch, has received attention only recently. It is clear that in some way, related questions were analyzed in the aforementioned random matrix theory literature, although from a rather different perspective. 
 Concerning the precise issue of the performance degradation due to mismatch in Bayesian inference, an exception (and inspiration for our work) is the recent paper \cite{pourkamali2021mismatched}. However, the authors of \cite{pourkamali2021mismatched} focus on mismatch in the signal-to-noise ratio for the Gaussian noise setting; their results can thus be recovered as a special case of the present ones. To the best of our knowledge, this is the first work that considers mismatch in the noise statistics for low-rank matrix estimation.

\textbf{Approximate message passing.}  AMP algorithms have been applied to a wide range of inference problems. Examples include estimation in linear models \cite{lassoMontanariBayati,bayati2011dynamics,Donoho10112009}, generalized linear models \cite{rangan2011generalized,barbier2019optimal, maillard2020phase,mondelli2021approximate}, and low-rank matrix recovery with Gaussian noise \cite{barbier2020all,6875223,fletcher2018iterative,kabashima2016phase, lesieur2017constrained,montanari2017estimation}, see also the survey \cite{feng2021unifying}. A general AMP iteration for rotationally invariant matrices has been recently analyzed in \cite{fan2022approximate,zhong2021approximate}, and by providing suitable instances of this abstract iteration, AMP algorithms have been developed for low-rank \cite{fan2022approximate,zhong2021approximate, mondelli2021pca} and generalized linear models \cite{venkataramanan2021estimation}. Furthermore, an AMP-based method which uses the classical idea of empirical Bayes  to reduce the high-dimensional noise in PCA is proposed in \cite{zhong2020empirical}, which also provides applications to genetics. However, the existing results cannot be applied to the mismatched setting considered in this work, since the Gaussian AMP does not contain the right Onsager corrections. In fact, the algorithm designer assumes Gaussian statistics for the noise and, therefore, constructs an AMP algorithm with the Onsager correction suitable for Gaussian noise. Finally, we remark that the performance of the Gaussian AMP is numerically compared with that of the correct AMP (exploiting the knowledge of noise statistics) in \cite{zhong2021approximate}. Our Theorem \ref{thm:SE} provides rigorous foundations for such a comparison.

\section{Setup of the problem}\label{sec:setting}
\subsection{Random matrix theory preliminaries}~\\ \vspace{-14pt}

We start with some useful notions of random matrix theory. Given a probability measure $\rho$ of compact support $K\subseteq\R$, we let 
$H_\rho : \R\backslash K \mapsto \R$ be the \emph{Hilbert transform} of $\rho$:
\begin{equation*}
    H_\rho(z) := \int_ K \frac{\rho(d\gamma)}{z-\gamma}. 
\end{equation*}
We define $\overline{\gamma}:=\max  K$, $\underline{\gamma}:=\min  K$, $\hmax := \lim_{z\downarrow\overline{\gamma}} H_\rho(z)$ and $\hmin := \lim_{z\uparrow \underline{\gamma}} H_\rho(z)$, where the limits exist due to the monotonicity of $H_\rho$ but may be infinite. As $H_\rho$ is a bijection between $\R\backslash K$ and its image $(\hmin,\hmax)\backslash\{0\}$, its inverse exists and it is denoted by $ K_\rho :(\hmin,\hmax)\backslash\{0\}\mapsto\R\backslash K$. The \emph{$R$-transform} of $\rho$ is $ R_\rho:(\hmin,\hmax)\backslash\{0\}\mapsto\R\backslash K$ given by $$R_\rho(x) := K_\rho(x) - 1/x.$$ 
The coefficients $\{\bar{\kappa}_k\}_{k\ge 1}$ of the Taylor series of $R_\rho$ are the \emph{free cumulants} associated to $\rho$, i.e., $R_\rho(x)=\sum_{k=0}^\infty \bar{\kappa}_{k+1}x^k$, and they can be computed from the moments of $\rho$, see e.g. Section 2.5 of \cite{novak2014three}. Furthermore, $R'_\rho(x)$ and $ H_\rho'(z)$ denote the derivatives of the $R$-transform and the Hilbert transform of $\rho$, respectively.

\subsection{Model of mismatched low-rank matrix estimation}~\\ \vspace{-14pt}

We consider the problem of estimating a rank-one informative \emph{spike} $\bX\bX^\intercal$ corrupted by an additive symmetric noise matrix $\bZ\in\R^{N\times N}$ from data $\bY$ generated as in \eqref{eq:spikedmodel}. The scalar $\lambda_*\in\mathbb{R}_{\ge 0}$ is the signal-to-noise ratio (SNR), and the noise $\bZ$ can be decomposed as $\bZ = \bO\bSig\bO^\intercal$, where $\bSig := \mbox{diag}(\gamma_1,\dots,\gamma_N)\in\R^{N\times N}$ is a diagonal matrix containing the eigenvalues of $\bZ$ and $\bO$ is some orthogonal matrix. We also denote by $\overline{\gamma}_N$ and $\underline{\gamma}_N$ respectively $\max\{\gamma_1,\dots,\gamma_N\}$ and $\min\{\gamma_1,\dots,\gamma_N\}$. Similarly, $\eigmax_N$ and $\underline{\nu}_N$ denote the largest and smallest eigenvalues of the data matrix $\bY$.
Throughout the paper, the main technical assumption will be the following.
\begin{assumption}\label{assump:noise_signal} 
      The signal $\bX$ has norm $\sqrt{N}$. The random noise matrix $\bZ$ is rotationally invariant and is independent of $\bX$. Moreover, the empirical measure $N^{-1}\sum_{i\leq N} \delta_{\gamma_i}$ of eigenvalues of $\bZ$ converges weakly towards a limiting measure $\rho$ of compact support $K\subseteq\R$. Finally, $\overline{\gamma}_N$ and $\underline{\gamma}_N$ converge a.s. to $\overline{\gamma}=\max  K$ and $\underline{\gamma}=\min  K$, respectively.
\end{assumption}
The noise matrix $\bZ$ being rotationally invariant means that 
$\bO$ is a random orthogonal matrix (i.e., sampled from the Haar measure) 
independent of $\bSig$. 
Under Assumption \ref{assump:noise_signal}, we have that $\eigmax_N$ and $\underline{\nu}_N$ converge a.s. to finite limits, which we denote by $\eigmax$ and $\underline{\nu}$, respectively. These have an explicit form in terms of $\rho$ and $\lambda_*$ given in \cite[Theorem 2.1]{benaych2011eigenvalues} and which we reproduce for convenience in Appendix \ref{app:PCA}. We remark that the assumption on the ground-truth signal is rather mild: $\bX$ can have any distribution over the sphere of radius $\sqrt{N}$, and it might even be deterministic.

For the analysis of the mismatched Bayes estimator, we also require a second technical assumption on the asymptotic eigenvalue distribution $\rho$. This ensures that the limit for the overlap of the spectral estimators with the signal is a continuous function of the SNR, whenever a small additive Wigner noise of variance $\veps\ge 0$ is added to $\bY$.
\begin{assumption}\label{assump:dist_decay} 
    Let $\rho_\veps$ be the spectral density of the free convolution of $\rho$ and a semicircle law of radius $2\veps \ge 0$. Let $H_{\rho_\veps}$ be the associated Hilbert transform and $\overline{\gamma}_\veps$ the rightmost point of the support of $\rho_\veps$. Then, we assume that $\lim_{z\downarrow \overline{\gamma}_\veps}H_{\rho_\veps}'(z)=-\infty$ for all $\epsilon \ge 0$.
\end{assumption} 
For the definition of the free convolution and its link to random matrix theory, we refer the reader to \cite{anderson2010introduction,bouchaudpotters}. We remark that Assumption \ref{assump:dist_decay} is satisfied by a wide class of random matrices. In particular, by combining Theorem 2.2 of \cite{bao2020support} with Proposition 2.4 of \cite{benaych2011eigenvalues}, it suffices that the support of the limiting spectral measure $\rho$ is \emph{(i)} compact, \emph{(ii)} connected, and \emph{(iii)} it has a proper decay rate at its edges (i.e., $\rho$ is of Jacobi type), see Assumption 2.1 of \cite{bao2020support}.

Note that, if $\rho$ is the semicircle law of radius $2$, then the noise matrix $\bZ=\bW$ is asymptotically equal in distribution to a standard Wigner matrix with density $\sim \exp(-\frac N4\Tr\bW^2)$. As the 
elements of this type of symmetric matrices are i.i.d. Gaussian, in this case we say that there is \emph{Gaussian noise}. For other limiting $\rho$, the elements of the matrix $\bZ\neq \bW$ remain correlated and, thus, we say that there is \emph{structured noise}. To distinguish between structured and Gaussian noise, we denote by $\bW$ a sequence of matrices asymptotically distributed as standard Wigner matrices, while we reserve $\bZ$ for a generic sequence of rotationally invariant matrices. 

\textbf{Sources of mismatch.} In this paper, we consider what happens when there is a \emph{mismatch} between 
 the true noise statistics and the assumptions on the noise statistics made in the inference algorithm. In particular, we study the case in which the noise is assumed to be Gaussian. Gaussianity is in fact the most standard assumption made when precise knowledge of the noise structure is lacking. We also consider the case in which 
 the SNR is estimated incorrectly, i.e., the statistician assumes that the data is generated according to $\bY_W = \frac{\sqrt{\lambda}}N \bX\bX^\intercal + \bW$, where $\lambda\neq \lambda_*$ and $\lambda_*$ is the correct SNR present in \eqref{eq:spikedmodel}. 
Our goal is to quantify how these 
sources of ignorance affect the algorithmic performance.

\subsection{Three classes of inference procedures for low-rank matrix estimation}\label{subsec:3clas}~\\ \vspace{-14pt}

\textbf{Mismatched Bayes estimator.} The statistician assumes Gaussian noise and that the signal $\bX$ has no specific structure, i.e., it is uniformly distributed on the $N$-dimensional sphere $\mathbb{S}^{N-1}(\sqrt{N})$. The mismatched posterior distribution reads, using the Gaussian log-likelihood $-\frac N4\Tr(\bY-\frac{\sqrt{\lambda}}N\bx\bx^\intercal )^2$,
    \begin{align}
        P_{{\rm mis}}(d\bx\mid \bY)&=\frac{1}{Z_{N}(\bY)}  \exp\Big(\frac{\sqrt\lambda}{2}\langle \bx,\bY\bx\rangle \Big)\,\mu_N(d\bx), \label{estimatorMis1}
    \end{align}    
where $Z_{N}(\bY)$ is the normalization constant and $\mu_N$ is the uniform measure over $\mathbb{S}^{N-1}(\sqrt{N})$. The Bayes estimators we analyze are obtained as the posterior means
\begin{align}
M_{\rm mis}(\bY):= \int \bx \bx^{\intercal} P_{{\rm mis}}(d\bx\mid \bY),   \label{BayesEst}
\end{align}
which may not be practical to compute. 
Notice that, if the noise is a Wigner matrix, then the posterior \eqref{estimatorMis1} is correct. This Bayes-optimal case has already been rigorously characterized (see, for example, \cite{XXT,barbier2019adaptive,2016arXiv161103888L}), and it will serve us as a base case for comparison with our mismatched setting.

The performance of a sequence of estimators $M_N(\bY)\in\R^{N\times N}$ (simply denoted by $M(\bY)$) of the spike $\bX\bX^\intercal$ is quantified via the asymptotic mean-square error (MSE):
\begin{align}\label{eq:MSEmis}
    {\rm MSE}(M):=\lim_{N\to+\infty} \frac{1}{2 N^2} \E\| \bX \bX^\intercal-M(\bY) \|_{\rm F}^2,
\end{align}
where $\|\cdot\|^2_{\rm F}$ denotes the Frobenius norm and $\mathbb{E}$ the expectation over the spike $\bX\bX^\intercal$ and the noise $\bZ$. We also consider another performance measure which is insensitive to the norm of an estimator $\hat \bx$ of $\bX$, namely, the rescaled overlap:
\begin{align}
{\rm Overlap}(\hat \bx):=\lim_{N\to+\infty}\frac{|\langle \bX, \hat{\bx}(\bY)\rangle|}{\|\hat{\bx}(\bY)\|\cdot \|\bX\|},    \label{overlapMea}
\end{align}
where $\|\bX\| =\sqrt{N}$ due to the spherical constraint. For both AMP and spectral estimators, this overlap will almost surely be deterministic. If the norm of the estimator vanishes, by convention we fix the overlap to zero. Note that the definition \eqref{overlapMea} is not meaningful for the Bayes case. In fact, if one defines $\hat \bx_{\rm mis}:= \int \bx P_{{\rm mis}}(d\bx\mid \bY)$, then this quantity is zero by sign symmetry ($\pm \bx$ have the same posterior weight). Thus, the correct definition of the overlap for the mismatched Bayes estimator is
\begin{align}
{\rm Overlap}_{\rm mis}:=\lim_{N\to+\infty}\Big(\frac1N\frac{1}{\|M_{\rm  mis}(\bY)\|_{\rm F}}\int P_{{\rm mis}}(d\bx\mid \bY)\langle \bX, \bx\rangle^2\Big)^{1/2},   \label{overlapMeaBayes}
\end{align}
while for the ``one-shot'' AMP and spectral estimators we use the former definition \eqref{overlapMea}.



\textbf{Approximate message passing.} 
We remark that, as opposed to our analysis of the Bayes estimator, the state evolution characterization of AMP (Theorem \ref{thm:SE}) does not require $\bX$ to be uniformly distributed on $\mathbb{S}^{N-1}(\sqrt{N})$. The AMP analysis can accomodate more generic (potentially mismatched) prior distributions and, in fact, AMP algorithms are well equipped to exploit structure in the signal prior. For simplicity, we assume to have access to an initialization $\hat{\bx}^1\in \mathbb R^N$, which is independent of the noise $\bZ$ and has a strictly positive correlation with $\bX$, i.e., 
\begin{equation}\label{eq:AMPinit}
(\bX, \hat{\bx}^1)\stackrel{\mathclap{W_2}}{\longrightarrow} (X, \hat{x}_1), \quad \mathbb E[X \,\hat{x}_1]:=\epsilon>0, \quad \mathbb E[\hat{x}_1^2]\le 1,
\end{equation}
where $(\bX, \hat{\bx}^1)\stackrel{\mathclap{W_2}}{\longrightarrow} (X, \hat{x}_1)$ denotes convergence of the joint empirical distribution of $(\bX, \hat{\bx}^1)$ to the random variables $(X, \hat{x}_1)$ in Wasserstein-2 ($W_2$) distance. 
We note that this initialization is impractical and one can design AMP iterations which are initialized with the eigenvector $\overline{\bv}_N$ of the data matrix $\bY$ associated to the largest eigenvalue $\overline{\nu}_N$, see \cite{montanari2017estimation,mondelli2021pca,zhong2021approximate}. 
Then, for $t\ge 1$, the AMP iteration reads
\begin{equation}\label{eq:AMP}
\bx^t = \bY\hat{\bx}^t-\beta_t\hat{\bx}^{t-1},\quad \hat{\bx}^{t+1} = h_{t+1}(\bx^t),
\end{equation} 
where we assume that $\hat{\bx}^{0}=\bzero$.
Here, the function $h_{t+1}:\mathbb R\to\mathbb R$ is applied component-wise; $\beta_1=0$ and, for $t\ge 2$, $\beta_t=\langle h_t'(\bx^{t-1}) \rangle$, where $h_t'$ denotes the first derivative of $h_t$. The AMP estimator of $\bX$ is $\hat \bx^t$, and the one of the spike $\bX\bX^\intercal$ is $M_{\rm AMP}^t=\hat \bx^t (\hat \bx^t)^\intercal$. We refer to this algorithm as \emph{Gaussian AMP}, since this is the AMP that is implemented for Gaussian noise (and known SNR).  

\textbf{Spectral estimators.} Finally, we consider spectral estimators of the form $C\, \overline{\bv}_N\overline{\bv}_N^\intercal$, where $\overline{\bv}_N$ is the unit-norm eigenvector associated to the largest eigenvalue $\overline{\nu}_N(\bY)$ and $C > 0$ is a scaling constant taking into account the spectrum of the data matrix. The asymptotic description of $\overline{\bv}_N$ and $\overline{\nu}_N$ for additive and multiplicative low-rank perturbations of rotationally invariant matrices has been obtained in \cite{benaych2011eigenvalues,benaych2012singular,bai2012sample}. In particular, by Theorem 2.2 of \cite{benaych2012singular}, the squared overlap $\langle\bX, \overline{\bv}_N\rangle^2/N$ converges a.s. to a constant $C(\rho,\lambda_*)$, whose explicit expression is recalled in Appendix \ref{app:PCA}.
We study two variants of spectral estimators: 

$\bullet$ The \emph{optimal spectral estimator (OptSpec)} is given by $M_{\rm OS} := C_{\rm OS} \overline{\bv}_N\overline{\bv}_N^\intercal$, where $C_{\rm OS}(\rho,\lambda_*) := C(\rho,\lambda_*)$ is the optimal scaling that minimizes the MSE. 

$\bullet$ The \emph{Gaussian mismatched spectral estimator (GauSpec)} is given by $M_{\rm GS} := C_{\rm GS} \overline{\bv}_N\overline{\bv}_N^\intercal$, where $C_{\rm GS}$ is the optimal scaling if the noise was Gaussian and the SNR was equal to $\lambda > 0$. By letting $\rho_{\rm SC}$ be the standard semi-circle law, we have $C_{\rm GS}(\lambda) := C(\rho_{\rm SC},\lambda)=1 - 1/\lambda$. 

For both OptSpec and GauSpec, the corresponding estimator of the signal $\bX$ is given by $\hat \bx_{{\rm GS}}=\hat \bx_{{\rm OS}}=\hat \bx_{{\rm Spec}}:=\sqrt{N}\,{ \overline{\bv}}_N$.
Notice that, while OptSpec requires knowing exactly the statistics of the noise and the SNR, GauSpec represents the spectral estimator that would be used by a statistician who assumes the noise to be Gaussian and the SNR $\lambda$ (instead of $\lambda_*$). An application of Theorem 2.2 in \cite{benaych2012singular} gives that
\begin{equation*}
    {\rm MSE}(M_{\rm OS}) = \frac{1}{2}(1 - C^2(\rho,\lambda_*)), \,\, \mbox{  and  } \,\, {\rm MSE}(M_{\rm GS}) = \frac{1}{2}(1 + (1 - 1/\lambda)^2 - 2(1 - 1/\lambda) C(\rho,\lambda_*)).
\end{equation*}
Furthermore, the respective overlaps coincide and are given by 
$\sqrt{C_{\rm OS}}$.
Note that ${\rm MSE}(M_{\rm GS}) = {\rm MSE}(M_{\rm OS}) + \frac{1}{2}( C(\rho,\lambda_*) - (1 - 1/\lambda))^2$. Thus, as expected, we have ${\rm MSE}(M_{\rm OS}) \leq {\rm MSE}(M_{\rm GS})$. We also remark that, as the noise matrix is rotationally invariant, both the MSE and the overlap of the spectral estimators do not depend on the signal prior.

\section{Main results: performance of inference algorithms}\label{sec:mainres}

\subsection{Mismatched Bayes estimator}~\\ \vspace{-14pt}

Consider the following functions of $\lambda,\lambda_* > 0$ and of the asymptotic spectral noise density $\rho$:
    \begin{equation}
    \begin{split}
        M(\lambda,\lambda_*) &:=
            \Big( 1- \frac{1}{\sqrt{\lambda\lambda_*}} \Big) \Big(1 - \frac 1{\lambda_*}R'_{\rho}\Big(\frac{1}{\sqrt{\lambda_*}}\Big)\Big) \boldsymbol{1}\big(\hmax\sqrt{\lambda_*} \geq 1 \cap \lambda \lambda_* > 1\big),\\
            Q(\lambda,\lambda_*) &:=
            \Big(1 - \frac{1}{\sqrt{\lambda\lambda_*}}\Big)^2 \boldsymbol{1}\big(\hmax\sqrt{\lambda_*} \geq 1 \cap \lambda \lambda_* > 1\big) + \Big(1 - \frac{\hmax}{\sqrt{\lambda}}\Big)^2  \boldsymbol{1}\big(\hmax \sqrt{\lambda_*} < 1 \cap \sqrt{\lambda} > \hmax\big),
            \end{split}
            \label{magOverDefs}
    \end{equation}
    where $\boldsymbol{1}(E)$ stands for the indicator function of the event $E$. Then, our main result is
    a closed-form characterization of the MSE achieved by the mismatched Bayes estimator in terms of these functions.
    \begin{thm}[Performance of mismatched Bayes estimator]\label{thm:lim_mse}
        Consider a spiked model \eqref{eq:spikedmodel}. Let Assumptions \ref{assump:noise_signal} and \ref{assump:dist_decay} hold. Then, the MSE of the mismatched Bayes estimator \eqref{BayesEst} is given by
        \begin{equation}\label{eq:MSEB}
            {\rm MSE}(M_{\rm mis})= \frac{1}{2}\big(1 + Q(\lambda,\lambda_*) - 2 M(\lambda,\lambda_*)\big).
        \end{equation}     
    \end{thm}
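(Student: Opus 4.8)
The plan is to turn the Frobenius‑norm MSE into two scalar order parameters and then read both of them off from the asymptotics of rank‑one spherical integrals, exploiting that $\bY$ is a rank‑one perturbation of the rotationally invariant noise $\bZ$. First, by the rotational invariance of $\bZ$ the joint law of everything entering \eqref{eq:MSEB} is independent of the direction of $\bX$ (conjugate $\bY$ by a random orthogonal matrix), so one may take $\bX$ deterministic and the expectation in \eqref{eq:MSEmis} over $\bZ$ only. Writing $\langle\cdot\rangle_{\rm mis}$ for the average under $P_{\rm mis}(\cdot\mid\bY)$ and expanding $\|\bX\bX^\intercal-M_{\rm mis}\|_{\rm F}^2$,
\begin{equation*}
{\rm MSE}(M_{\rm mis})=\tfrac12\Big(1-2\lim_N\E\, m_N+\lim_N\E\, q_N\Big),\qquad
m_N:=\tfrac1{N^2}\big\langle(\langle\bX,\bx\rangle)^2\big\rangle_{\rm mis},\qquad
q_N:=\tfrac1{N^2}\|M_{\rm mis}\|_{\rm F}^2 .
\end{equation*}
Thus everything reduces to identifying $\lim\E\, m_N$ (the posterior mean squared overlap with the truth) and $\lim\E\, q_N$ (which also equals the two‑replica self‑overlap $\tfrac1{N^2}\langle\langle(\langle\bx,\bx'\rangle)^2\rangle\rangle_{\rm mis}$), which \eqref{magOverDefs} pre‑packages as $M(\lambda,\lambda_*)$ and $Q(\lambda,\lambda_*)$.

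The key structural observation is that $M_{\rm mis}=\langle\bx\bx^\intercal\rangle_{\rm mis}$ commutes with every orthogonal matrix commuting with $\bY$ (the Hamiltonian $\tfrac{\sqrt\lambda}2\langle\bx,\bY\bx\rangle$ and the sphere measure $\mu_N$ being invariant), hence is diagonal in the eigenbasis of $\bY$: if $\bP_1=\overline{\bv}_N\overline{\bv}_N^\intercal,\dots,\bP_N$ are its rank‑one eigenprojectors then $\tfrac1N M_{\rm mis}=\sum_i w_i\bP_i$ with $w_i=\tfrac1N\langle(\langle\bx,\bP_i\bx\rangle)\rangle_{\rm mis}\ge 0$ and $\sum_i w_i=1$. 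Consequently $q_N=\sum_i w_i^2$ and $m_N=\sum_i w_i r_i$ with $r_i:=\tfrac1N\langle\bX,\bP_i\bX\rangle$, $\sum_i r_i=1$. I would then establish three facts. (i) From the low‑rank perturbation results of \cite{benaych2011eigenvalues,benaych2012singular} (reproduced in Appendix \ref{app:PCA}): $\overline{\nu}_N\to\overline{\nu}$, equal to $K_\rho(1/\sqrt{\lambda_*})$ if $\hmax\sqrt{\lambda_*}\ge 1$ and to $\overline{\gamma}$ otherwise, and $r_1=\tfrac1N\langle\bX,\overline{\bv}_N\rangle^2\to C(\rho,\lambda_*)$, which equals $1-\tfrac1{\lambda_*}R'_\rho(1/\sqrt{\lambda_*})$ in the first (outlier) regime and $0$ in the second. (ii) The weight on the top eigenvector concentrates, $w_1\to c^*:=\big(1-\tfrac1{\sqrt\lambda}H_\rho(\overline{\nu})\big)_+$. (iii) $\max_{i\ge 2}w_i\to 0$.

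For (ii) I would view $P_{\rm mis}$ as the spherical Gibbs measure on $\mathbb S^{N-1}(\sqrt N)$ with coupling matrix $\sqrt\lambda\,\bY$, a rank‑one perturbation of $\sqrt\lambda\,\bZ$ whose limiting spectrum is the dilation $\rho^{\sqrt\lambda}$ of $\rho$ under $\gamma\mapsto\sqrt\lambda\gamma$. Adding the (data‑dependent) rank‑one field $g\,\overline{\bv}_N\overline{\bv}_N^\intercal$ only shifts the top eigenvalue of $\sqrt\lambda\,\bY$ by $g$, so the asymptotics of rank‑one spherical integrals (Guionnet--Ma\"{\i}da) give $\tfrac1N\log Z_N(g)\to\mathcal G(g)$ with $\mathcal G$ explicit: in the ``condensed'' phase $\mathcal G(g)=\tfrac12(\sqrt\lambda\overline{\nu}+g)-\tfrac12\int\log(\sqrt\lambda\overline{\nu}+g-\gamma)\,\rho^{\sqrt\lambda}(d\gamma)$, while in the delocalized phase $\mathcal G'$ does not depend on the top eigenvalue. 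Since $g\mapsto\tfrac1N\log Z_N(g)$ is convex, $w_1=2\,\partial_g[\tfrac1N\log Z_N(g)]|_{g=0}\to 2\mathcal G'(0)=1-H_{\rho^{\sqrt\lambda}}(\sqrt\lambda\overline{\nu})=1-\tfrac1{\sqrt\lambda}H_\rho(\overline{\nu})$ where nonnegative and $0$ otherwise; the condensation threshold then reads $\lambda\lambda_*>1$ in the outlier regime (there $H_\rho(\overline{\nu})=1/\sqrt{\lambda_*}$) and $\sqrt\lambda>\hmax$ in the non‑outlier regime (there $H_\rho(\overline{\nu})=\hmax$). Fact (iii) comes from the same spherical‑integral analysis restricted to the sphere orthogonal to $\overline{\bv}_N$. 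Given (i)--(iii), $q_N=w_1^2+\sum_{i\ge2}w_i^2\to (c^*)^2$ and $m_N=w_1 r_1+\sum_{i\ge2}w_i r_i\to C(\rho,\lambda_*)\,c^*$ (both remainders are $\le\max_{i\ge2}w_i\to0$), and bounded convergence passes this to expectations. Splitting into the two regimes and using $H_\rho(K_\rho(1/\sqrt{\lambda_*}))=1/\sqrt{\lambda_*}$ resp.\ $H_\rho(\overline{\gamma})=\hmax$, one checks term by term that $\lim\E\, m_N=C(\rho,\lambda_*)\,c^*=M(\lambda,\lambda_*)$ and $\lim\E\, q_N=(c^*)^2=Q(\lambda,\lambda_*)$, which yields \eqref{eq:MSEB}.

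The hard part is facts (ii)--(iii) in the non‑outlier regime $\hmax\sqrt{\lambda_*}<1$: there $\overline{\bv}_N$ sits at the spectral edge of $\rho^{\sqrt\lambda}$ rather than being an isolated outlier, and the spherical‑integral saddle analysis near the edge — balancing the $O(N^{-1}\log N)$ entropic correction against the $O(N^{-2/3})$ edge fluctuations, and still having to argue that the condensate concentrates on the \emph{single} vector $\overline{\bv}_N$ so that $\sum_{i\ge2}w_i^2\to0$ — is delicate. This is precisely the role of Assumption \ref{assump:dist_decay}: it forces a square‑root type edge, so that, running the argument on the regularized data $\bY+\sqrt\veps\,\bW$ (whose noise has a genuine square‑root edge by free convolution with a semicircle) and then letting $\veps\downarrow0$, all the relevant spectral functionals ($\overline{\gamma}_\veps$, $H_{\rho_\veps}$, $R'_{\rho_\veps}$ at the pertinent arguments) converge to their $\veps=0$ values and the edge analysis closes. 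The outlier regime $\hmax\sqrt{\lambda_*}\ge1$ is by contrast the easy case, since there the top eigenvalue is separated from the bulk and the spherical‑integral asymptotics apply directly.
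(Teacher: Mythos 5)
Your reduction of the MSE to the two order parameters $m_N$ and $q_N$, the observation that $M_{\rm mis}$ is diagonal in the eigenbasis of $\bY$ with weights $w_i$, and the bookkeeping identifying $C(\rho,\lambda_*)\,c^*$ with $M(\lambda,\lambda_*)$ and $(c^*)^2$ with $Q(\lambda,\lambda_*)$ are all fine, and your route (proving condensation of the posterior weight on $\overline{\bv}_N$) is genuinely different from the paper's. But the core of your argument, facts (ii)--(iii), has a real gap exactly where you flag it, and the fix you propose does not close it. In the non-outlier condensed regime ($\hmax\sqrt{\lambda_*}<1$, $\sqrt{\lambda}>\hmax$) the limiting tilted free energy $\mathcal G(g)$ is \emph{not differentiable at $g=0$}: for $g\le 0$ the top eigenvalue of $\sqrt{\lambda}\,\bY+g\,\overline{\bv}_N\overline{\bv}_N^\intercal$ and the limiting spectral density are unchanged, so $\mathcal G(g)=\mathcal G(0)$ and the left derivative is $0$, while the right derivative is $\tfrac12(1-\hmax/\sqrt{\lambda})>0$. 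Since your tilt direction is the top eigenvector itself, the phase boundary in $g$ sits exactly at $g=0$, and the Griffiths/convexity argument only yields $0\le\liminf w_1\le\limsup w_1\le 1-\hmax/\sqrt{\lambda}$; it does not identify the limit. Regularizing the noise by $\sqrt{\veps}\,\bW$ does not remove this kink (it is present for every $\veps$), so invoking Assumption \ref{assump:dist_decay} and a limit $\veps\downarrow 0$ as you describe does not repair step (ii). Fact (iii) ($\max_{i\ge2}w_i\to 0$, needed for both $q_N$ and $m_N$, and also in the delocalized phase where you need $\sum_i w_i^2\to 0$ uniformly over all $N$ directions) is asserted via ``the same analysis restricted to the orthogonal sphere'' but suffers from the same non-differentiability problem and is not proved; controlling how the condensate distributes over the $O(N^{-2/3})$-spaced edge eigenvalues is precisely the delicate point.

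For contrast, the paper never resolves the spectral decomposition of $M_{\rm mis}$ at all. It computes the log-spherical integral of the Wigner-regularized model $\bY_\veps=\bY+\sqrt{\veps}\,\bW$ (Proposition \ref{prop:fe_spherical}, via Guionnet--Ma\"ida plus BBP), and then obtains $M(\lambda,\lambda_*)$ by differentiating in $\sqrt{\lambda_*}$ and $Q(\lambda,\lambda_*)$ by differentiating in $\veps$ at $0$, where the $\veps$-derivative is converted into $\E\langle Q_N\rangle$ by Gaussian integration by parts on the Wigner part; the derivative of the limit is evaluated using Burgers' equation for $H_{\rho_\veps}$ and, in the non-outlier regime, Assumption \ref{assump:dist_decay} enters only to show $\partial_\veps\overline{\gamma}_\veps|_{\veps=0}=\hmax$. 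Crucially, these differentiation parameters sit in the interior of smooth phase regions (away from measure-zero boundaries in $(\lambda,\lambda_*)$), so the convexity argument does identify the limits -- unlike your $g$-tilt, which is pinned at a kink. To make your route work you would need a genuinely finer argument for edge condensation on a single eigenvector (e.g.\ the exact contour-integral representation of the spherical model, or an interpolation as in \cite{BarbierMacris2019}), which is additional machinery your proposal does not supply.
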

    As a consequence of the auxiliary lemmas used in the proof of Theorem \ref{thm:lim_mse} and presented in Appendix \ref{app:proofBayes}, we have that $N^{-2}\int P_{{\rm mis}}(d\bx\mid \bY)\langle \bX, \bx\rangle^2$ converges almost surely to $M(\lambda,\lambda_*)$. Furthermore, for noise matrices regularized by arbitrarily small Wigner matrices, by using similar techniques as in \cite{BarbierMacris2019}, we can prove that $\|M_{\rm mis}\|_{\rm F}/N$ converges almost surely to $Q^{1/2}(\lambda,\lambda_*)$. We conjecture that this convergence holds even without the Wigner regularization. By assuming this conjecture, we have the following almost sure convergence result for the overlap:
    \begin{equation}\label{eq:ob}
 {\rm Overlap}_{\rm mis} = \frac{M^{1/2}(\lambda,\lambda_*)}{Q^{1/4}(\lambda,\lambda_*)},   
    \end{equation}
    which holds whenever the denominator is non-zero.
    When comparing the overlaps of different methods in Section \ref{sec:applic}, we will use \eqref{eq:ob} for the mismatched Bayes estimator. 
    
    Let us highlight the following property of the MSE \eqref{eq:MSEB} when there is no mismatch in the SNR:
    \begin{align}
    \mbox{If} \ \lambda = \lambda_* \ \mbox{and} \ \hmax \geq 1, \ \mbox{then} \ {\rm MSE}(M_{\rm mis}) = {\rm MSE}(M_{\rm GS}).  \label{conditionmatching}
    \end{align}
    However, in general ${\rm MSE}(M_{\rm mis})$ and ${\rm MSE}(M_{\rm GS})$ differ. E.g., when the assumed SNR $\lambda$ is different from the real one $\lambda_*$, the norm of $M_{\rm mis}$ incorporates some information about $\lambda_*$ while the one for $M_{\rm GS}$ only depends on $\lambda$. Finally, it is worth noting that, from a statistical view-point, Assumption~\ref{assump:noise_signal} can be replaced by the following: $\bX$ is uniformly distributed on $\mathbb{S}^{N-1}(\sqrt{N})$ and is independent of $\bZ$, which may be a generic symmetric matrix with converging empirical spectral density. This includes most symmetric random matrix distributions. 
    
    \textbf{Sketch of the proof of Theorem \ref{thm:lim_mse}.} The goal is to evaluate the asymptotic values of $\EE\|M_{\rm mis}\|_{\rm F}^2$ and ${\rm tr}\EE[M_{\rm mis} \bX \bX^\intercal]$, from which both the MSE and the mean overlap of the Bayes estimator can be obtained. Our strategy is to first compute the log-moment generating function $f_{\rm mis}$ of the mismatched Bayes model. Then, the above quantities of interest can be accessed by taking derivatives of $f_{\rm mis}$ with respect to appropriate parameters. However, as the noise in the inference model is not Gaussian, $\EE\|M_{\rm mis}\|_{\rm F}^2$ cannot be computed in the standard way using an ``I-MMSE'' type of formula \cite{guo2005mutual} (like it is done, e.g., in \cite{pourkamali2021mismatched}). To address this issue, we introduce a generalized model for which the noise matrix is given by the original one plus an independent Wigner matrix multiplied by a small parameter $\epsilon$. We then compute the log-moment generating function of this generalized model using the theory of low-rank perturbations of rotationally invariant random matrices \cite{benaych2011eigenvalues} and of the spherical integral \cite{guionnet2005fourier}. Finally, the desired results are obtained by a limiting argument in $\epsilon\to 0$.


\subsection{Approximate message passing}~\\ \vspace{-14pt}

Our main result is a sharp characterization of the iterates of the AMP algorithm \eqref{eq:AMP} in the high-dimensional limit. In particular, we show that the joint empirical distribution of $(\bx^1, \ldots, \bx^t, \hat{\bx}^1, \ldots,\hat{\bx}^{t+1})$ converges (in $W_2$ distance) to the random vector $(x_1, \ldots, x_t, \hat{x}_1, \ldots, \hat{x}_{t+1})$. The law of this random vector is captured by a deterministic \emph{state evolution (SE)} recursion, which can be expressed via a sequence of vectors $\bmu_t=(\mu_1, 
\ldots, \mu_t)$ and matrices $\bSigma_t, \bDelta_t, \bB_t\in \mathbb R^{t\times t}$ defined recursively as follows. We start with the initialization 
\begin{equation}\label{eq:SEinit}
\mu_1 = \sqrt{\lambda_*}\epsilon,\quad \bSigma_1 =\bar{\kappa}_2 \mathbb E[\hat{x}_1^2], \quad \bDelta_1 =\mathbb E[\hat{x}_1^2],\quad \bB_1 = \bar{\kappa}_1,
\end{equation}
where $\lambda_*$ is the true SNR (see \eqref{eq:spikedmodel}), and $\{\bar{\kappa}_k\}_{k\ge1}$ are the free cumulants associated to the asymptotic spectral measure of the noise $\rho$. For $t\ge 1$, given $\bmu_t, \bSigma_t, \bDelta_t, \bB_t$, let 
\begin{equation}\label{eq:SElaws}
\begin{split}
(z_1&, \ldots, z_t)\sim\mathcal N(\bzero, \bSigma_t) \mbox{ and independent of }(X, \hat{x}_1),\\
 x_t &= z_t+\mu_t X- \bar{\beta}_t\hat{x}_{t-1}+\sum_{i=1}^t (\bB_t)_{t, i}\hat{x}_i,\quad \hat{x}_{t+1}=h_{t+1}(x_t),
\end{split}
\end{equation}
where we have defined $\hat{x}_0=0$, $\bar{\beta}_1= 0$, and for $t\ge 2$, $\bar{\beta}_t=\mathbb E[h_t'(x_{t-1})]$. Then, the parameter $\mu_{t+1}$ is 
\begin{equation}\label{eq:SEmu}
\mu_{t+1} = \sqrt{\lambda_*}\mathbb E[X\,\hat{x}_{t+1}].
\end{equation}
Next, we compute the entries of the matrix $\bDelta_{t+1}$ as
\begin{equation}
(\bDelta_{t+1})_{i, j} = \mathbb E[\hat{x}_{i}\,\hat{x}_{j}], \quad 1\le i, j\le t+1.
\end{equation}
Furthermore, the matrix $\bB_{t+1}$ is given by
\begin{equation}\label{eq:defB}
\bB_{t+1} = \sum_{j=0}^t \bar{\kappa}_{j+1}\bar{\bPhi}_{t+1}^j, \ \  \mbox{where}\ \  (\bar{\bPhi}_{t+1})_{i, j}=0 \ \  \mbox{if}\ \ i\le j, \ \ \mbox{and} \ \ (\bar{\bPhi}_{t+1})_{i, j}=\mathbb E[\partial_j \hat{x}_i] \ \  \mbox{if}\ \  i> j,
\end{equation}
and $\partial_j \hat{x}_i$ denotes the partial derivative $\partial_{z_j} h_i(z_{i-1}+\mu_{i-1} X- \bar{\beta}_{i-1}\hat{x}_{i-2}+\sum_{k=1}^{i-1} (\bB_t)_{i-1, k}\hat{x}_k))$. Finally, we define the covariance matrix $\bSigma_{t+1}$ as
\begin{equation}\label{eq:defSigma}
\bSigma_{t+1} = \sum_{j=0}^{2t}\bar{\kappa}_{j+2}\bTheta_{t+1}^{(j)}, \quad \mbox{with}\quad\bTheta_{t+1}^{(j)} = \sum_{i=0}^j (\bar{\bPhi}_{t+1})^i\bDelta_{t+1}(\bar{\bPhi}_{t+1}^\intercal)^{j-i}.
\end{equation}
Note that the $t\times t$ top left sub-matrices of $\bSigma_{t+1}, \bDelta_{t+1}$ and $\bB_{t+1}$ are given by $\bSigma_t, \bDelta_t$ and  $\bB_t$. 

At this point, we are ready to state our state evolution characterization of the AMP algorithm \eqref{eq:AMP}. The result is presented in terms of \emph{pseudo-Lipschitz} test functions. A function $\psi: \mathbb R^m \to \mathbb R$ is pseudo-Lipschitz of order $2$, i.e., $\psi \in {\rm PL}(2)$, if there is a constant $C > 0$  such that $\|\psi(\bx)-\psi(\by)\| \le C (1 + \| \bx \| + \| \by \| )\|\bx - \by\|$. We also make the following assumption on the functions $\{h_{t+1}\}_{t\ge 1}$.
\begin{assumption}\label{ass:AMP}
The function $h_{t+1}:\mathbb R\to\mathbb R$ is Lipschitz, and the partial derivatives $$\partial_{z_k} h_{t+1}\Big(z_{t}+\mu_{t} X- \bar{\beta}_t\hat{x}_{t-1}+\sum_{k=1}^{t} (\bB_t)_{t, k}\hat{x}_k\Big)$$ are continuous on a set of probability $1$, under the laws of $(z_1 \ldots, z_t)$ and $(\hat{x}_1,\ldots, \hat{x}_{t})$ given in  \eqref{eq:SElaws}.
\end{assumption}
We note that this is milder than requiring $h_{t+1}$ to be Lipschitz and continuously differentiable.

\begin{thm}[State evolution of Gaussian AMP]\label{thm:SE}
    Consider a spiked model \eqref{eq:spikedmodel} and the AMP algorithm \eqref{eq:AMP} initialized as in \eqref{eq:AMPinit}. Let Assumptions \ref{assump:noise_signal} and \ref{ass:AMP} hold, and  let $\psi:\mathbb R^{2t+2} \to \mathbb R$ be any pseudo-Lipschitz functions of order $2$. Then, for each $t \ge 1$, we almost surely have that, as $N\to +\infty$,
    \begin{align}
        &  \frac{1}{N} \sum_{i=1}^{N} \psi((\bx^1)_i, \ldots, (\bx^t)_i, \,  (\hat{\bx}^1)_i, \ldots, (\hat{\bx}^{t+1})_i, \, (\bX)_i) \to \E\psi(x_1, \ldots, x_t, \hat{x}_1, \ldots, \hat{x}_{t+1}, \, X), \label{eq:PL2_main_resultx} 
    \end{align}
    where the random variables on the right are defined in \eqref{eq:SElaws}.
\end{thm}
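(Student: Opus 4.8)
\textbf{Proof proposal for Theorem \ref{thm:SE}.}
The plan is to realize the Gaussian AMP \eqref{eq:AMP} as a particular instance of the abstract AMP recursion for rotationally invariant matrices whose state evolution was established in \cite{fan2022approximate,zhong2021approximate}, and then to transport the conclusion through a Lipschitz change of variables. The obstruction to a naive reduction is that \eqref{eq:AMP} uses the scalar Onsager term $\beta_t\hat{\bx}^{t-1}$ appropriate only for Wigner noise, whereas an abstract AMP driven by $\bZ$ must carry the full ``multi‑memory'' correction $\sum_{i\le t}(\bB_t)_{t,i}\hat{\bx}^i$ assembled from the free cumulants $\{\bar\kappa_k\}$ of $\rho$. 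The key idea is to run the \emph{correct} abstract AMP for $\bZ$ but with denoisers engineered so that, step by step, its iterates reconstruct exactly the iterates of the incorrect Gaussian AMP; the freedom in choosing the denoisers is what absorbs the discrepancy between the two Onsager corrections, while the rank‑one spike is handled by the spiked extension of the abstract state evolution, in which the overlap $N^{-1}\langle\bX,\hat{\bx}^t\rangle$ is tracked separately and shown to concentrate, as in \cite{montanari2017estimation,mondelli2021pca}.

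First I would fix the recursion \eqref{eq:SEinit}--\eqref{eq:defSigma}: although it looks self‑referential, it determines, in order, the deterministic objects $\mu_t,\bar\beta_t,\bDelta_t,\bar\bPhi_t,\bB_t,\bSigma_t$ and the limiting random variables $(x_i,\hat x_i)$ without circularity, and Assumption \ref{ass:AMP} guarantees that the responses $\E[\partial_j\hat x_i]$ are well defined. Writing $\bY\hat{\bx}^t=\bZ\hat{\bx}^t+\sqrt{\lambda_*}\,N^{-1}\langle\bX,\hat{\bx}^t\rangle\,\bX$, I would introduce an auxiliary AMP run on the rotationally invariant matrix $\bZ$ (with $\bX$ supplied as side information). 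Its estimates are \emph{forced} to coincide, at every finite $N$, with the Gaussian‑AMP estimates by \emph{defining} the denoiser $g_{t+1}$ to be $h_{t+1}$ applied to the reconstruction
\begin{equation*}
\bx^t \;=\; \bm{u}^t \;+\; \sum_{i\le t}(\bB_t)_{t,i}\hat{\bx}^i \;-\;\beta_t\hat{\bx}^{t-1}\;+\;\sqrt{\lambda_*}\,\frac{\langle\bX,\hat{\bx}^t\rangle}{N}\,\bX ,
\end{equation*}
where $\bm{u}^t$ is the auxiliary iterate carrying the free‑cumulant memory term. Unfolding this over $s\le t$ expresses $\bx^t$, hence $g_{t+1}$, as a fixed function of $(\bm{u}^1,\dots,\bm{u}^t,\bX)$ alone; since the $h_j$ are Lipschitz and, by Assumption \ref{ass:AMP}, have almost‑everywhere‑continuous derivatives under the state‑evolution laws, the composite maps $g_j$ inherit these regularity properties, so the hypotheses of the abstract state‑evolution theorem are met.

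I would then invoke that theorem for the auxiliary AMP: almost surely the joint empirical law of $(\bm{u}^1,\dots,\bm{u}^t,\bX)$ converges in $W_2$ to a Gaussian‑plus‑signal vector whose parameters are given by the free‑cumulant formulas applied to the response and correlation matrices of $g_1,\dots,g_t$. The crux — and the step I expect to be the main obstacle — is a careful bookkeeping verification that, after substituting the affine reconstruction above, this abstract recursion collapses onto the recursion of Theorem \ref{thm:SE}. Concretely, one must show by induction on $t$ that: (i) the response matrices of the reconstructed $g_j$ produced by the chain rule agree with $\bar\bPhi_{t+1}$ built from $\E[\partial_j\hat x_i]$; (ii) the free‑cumulant formula for the memory matrix of the auxiliary AMP returns precisely $\bB_{t+1}$, and the variance formula returns $\bSigma_{t+1}$; and (iii) — most delicately — the spurious Onsager term $-\beta_t\hat{\bx}^{t-1}$ inherited from \eqref{eq:AMP}, being a deterministic affine combination of already‑tracked quantities, only shifts the conditional mean of the limiting $x_t$ in \eqref{eq:SElaws} and does not contaminate its fluctuation covariance $\bSigma_t$, while the spike contributes exactly the term $\mu_t X$ with $\mu_t=\sqrt{\lambda_*}\,\E[X\hat x_t]$.

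Granting this verification, each Gaussian‑AMP iterate $(\bx^1,\dots,\bx^t,\hat{\bx}^1,\dots,\hat{\bx}^{t+1})$ is a Lipschitz function of $(\bm{u}^1,\dots,\bm{u}^t,\bX)$, so composing any ${\rm PL}(2)$ test function $\psi$ with these maps and invoking the $W_2$ convergence of the auxiliary iterates (together with the standard fact that $W_2$ convergence passes to ${\rm PL}(2)$ averages) yields \eqref{eq:PL2_main_resultx}. I would expect the bulk of the work, beyond this verification, to be purely mechanical: checking that the finite‑$N$ reconstruction identity is exact (so no error propagates through the iteration), and that the empirical Onsager/response coefficients used to define the $g_j$ converge to their state‑evolution limits, which is again a consequence of applying the abstract theorem at step $t$ before defining the denoiser at step $t+1$.
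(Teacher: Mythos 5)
Your high-level strategy is the same as the paper's: run an auxiliary AMP driven by $\bZ$ with $\bX$ (and $\hat{\bx}^1$) as side information, engineer its denoisers so that they compensate the wrong Onsager term of \eqref{eq:AMP}, invoke the abstract state evolution of \cite{fan2022approximate,zhong2021approximate}, and transfer the conclusion to ${\rm PL}(2)$ test functions. However, there is a genuine gap at the central step, namely your claim that the auxiliary iterates can be \emph{forced to coincide exactly, at every finite $N$}, with the Gaussian-AMP iterates so that ``no error propagates through the iteration''. Exactness would force the composite denoisers $g_{t+1}$ to contain the empirical Onsager coefficient $\beta_t=\langle h_t'(\bx^{t-1})\rangle$, the empirical spike overlap $\sqrt{\lambda_*}\langle\bX,\hat{\bx}^t\rangle/N$, and the empirical memory coefficients ${\sf b}_{t,i}$ built from the empirical free cumulants of $\bZ$ and empirical response averages. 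These are random, $N$-dependent, \emph{global} functionals of the iterates, so the resulting $g_{t+1}$ are neither deterministic nor separable, and the abstract state evolution theorem (e.g., Theorem 2.3 of \cite{zhong2021approximate}) does not apply to such denoisers. Your suggested fix --- ``applying the abstract theorem at step $t$ before defining the denoiser at step $t+1$'' --- is circular, since the theorem you invoke at step $t$ already presupposes admissible (deterministic, componentwise) denoisers at every step up to $t$.

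The way out, which is what the paper actually does, is to define the auxiliary denoisers with the \emph{deterministic} state-evolution quantities $(\mu_t,\bar\beta_t,\bB_t)$ as in \eqref{eq:deftildeh}; then the abstract theorem applies, but the finite-$N$ coincidence you assume is false, and an error term genuinely does propagate. Controlling it is the heart of the proof: an induction showing $\|\bx^t-\bu^t\|^2/N\to0$ and $\|\hat{\bx}^{t+1}-\tilde{\bx}^{t+1}\|^2/N\to0$, which requires proving along the way that $\sqrt{\lambda_*}\langle\bX,\hat{\bx}^{t+1}\rangle/N\to\mu_{t+1}$, that $\beta_{t+1}\to\bar\beta_{t+1}$ (via a lemma on convergence of expectations of a.e.-continuous derivatives under $W_2$ convergence), that the empirical cumulants and hence ${\sf b}_{t+1,i}\to(\bB_{t+1})_{t+1,i}$, together with a uniform bound on $\|\bZ\|_{\rm op}$. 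You dismiss this as ``purely mechanical'' bookkeeping, but it is exactly the delicate part that your exactness claim was meant to bypass; without it (or without a new SE theorem valid for random, non-separable denoisers), the argument as proposed does not go through. The bookkeeping items (i)--(iii) you list are correct and do match the paper's identification $\tilde{\bSigma}_t=\bSigma_t$, $\tilde{\bDelta}_t=\bDelta_t$, $\tilde{\bPhi}_t=\bar{\bPhi}_t$, so the missing ingredient is specifically the asymptotic closeness induction, not the SE bookkeeping.
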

By using Definition 6.7 and Theorem 6.8
of \cite{villani2008optimal}, one readily obtains that \eqref{eq:PL2_main_resultx} is equivalent to the convergence of $(\bx^1, \ldots, \bx^t, \hat{\bx}^1, \ldots, \hat{\bx}^{t+1}, \bX)$ to $(x_1, \ldots, x_t, \hat{x}_1, \ldots, \hat{x}_{t+1}, X)$ in $W_2$ distance. 
We also remark that \eqref{eq:PL2_main_resultx} allows for a sharp high-dimensional characterization of the AMP performance: by taking the pseudo-Lipschitz functions $\psi(\hat{x}_{t+1}, X)=(\hat{x}_{t+1}- X)^2$, $\psi(\hat{x}_{t+1}, X)=\hat{x}_{t+1}\cdot X$ and $\psi(\hat{x}_{t+1}, X)=(\hat{x}_{t+1})^2$, we obtain ${\rm MSE}(M_{\rm AMP}^{t+1})$ and ${\rm Overlap}(\hat{\bx}^{t+1})$. 

\textbf{Sketch of the proof of Theorem \ref{thm:SE}.} 
The key insight is to exploit the flexibility given by the denoisers of the abstract AMP iteration developed in \cite{zhong2021approximate} so that the mismatched can be ``corrected''. More formally, by choosing such denoisers carefully, we are able to design an \emph{auxiliary AMP algorithm} such that \emph{(i)} it is of a form which admits a state evolution characterization, and \emph{(ii)} its iterates are close to those of the AMP in \eqref{eq:AMP}. This requires a delicate induction argument whose details are deferred to Appendix \ref{app:proofAMP}. In particular, in Appendix \ref{subsec:auxAMP}, we define the auxiliary AMP; in Appendix \ref{subsec:auxAMPSE}, we prove a state evolution result for the auxiliary AMP; in Appendix \ref{subsec:proof}, we show by induction that \emph{(i)} the state evolution parameters of the auxiliary AMP coincide with those defined in this section, and \emph{(ii)} the iterates of the auxiliary AMP are close (in $\ell_2$ norm) to the iterates of the AMP \eqref{eq:AMP}, thus concluding the proof of Theorem \ref{thm:SE}.

\section{Numerical results and discussion}\label{sec:applic}

In all experiments, the density $\rho$ of the eigenvalues of the noise has unit variance and the signal prior is uniform on the sphere of radius $\sqrt{N}$. For the sake of simplicity, we decouple the effect of the mismatch in \emph{(i)} the noise statistics and \emph{(ii)} the SNR. 
More specifically, in the first two examples, we set $\lambda=\lambda_*$, so that only mismatch in the noise statistics is present; and in the last example, the noise is Gaussian, so that there is only SNR mismatch. Additional results when $\bZ$ is the free convolution of Rademacher and semicircle spectra are reported in Appendix \ref{app:addnum}.

    \textbf{Models of mismatch.}  
\underline{\emph{\smash{(a) Marcenko-Pastur spectrum.}}}  An example of non-symmetric density is the Marcenko-Pastur law $\rho(x) = \sqrt{x(4 - x)}/(2\pi x) \,\mathbf{1}({x\in[0, 4]})$. The $R$-transform is $R_{\rho}(x) =1/({1-x})$, 
and the results are displayed in Fig.~\ref{fig:MP} (the same formulas apply if the law is centered).

\underline{\emph{\smash{(b) Uniform spectrum.}}}  We let $\rho$ be the uniform distribution $\mathcal{U}[-\sqrt{3},\sqrt 3]$. In this case, we have $R_\rho(x) =\sqrt{3}/\tanh(\sqrt{3} x)-1/x$, and the results are in Fig.~\ref{fig:unif}. 

\underline{\emph{\smash{(c) Wigner matrix/semicircle spectrum with mismatched SNR.}}} We consider a Gaussian noise matrix $\bW$ (as assumed by the statistician), but with mismatched SNR by setting $\lambda=4\lambda_*$, see Fig.~\ref{fig:SNR}.

We remark that Assumption \ref{assump:noise_signal} is verified for all models. Furthermore, in all the three cases, the support of $\rho$ is a single non-empty interval. Hence, Theorem 2.2 of \cite{bao2020support} and Proposition 2.4 of \cite{benaych2011eigenvalues} can be used to show that Assumption \ref{assump:dist_decay} holds. 

        \begin{figure}[t!]
            \centering{               
                
                    \subfloat[Marcenko-Pastur noise.\label{fig:MP}]{
                \includegraphics[width=0.49\linewidth]{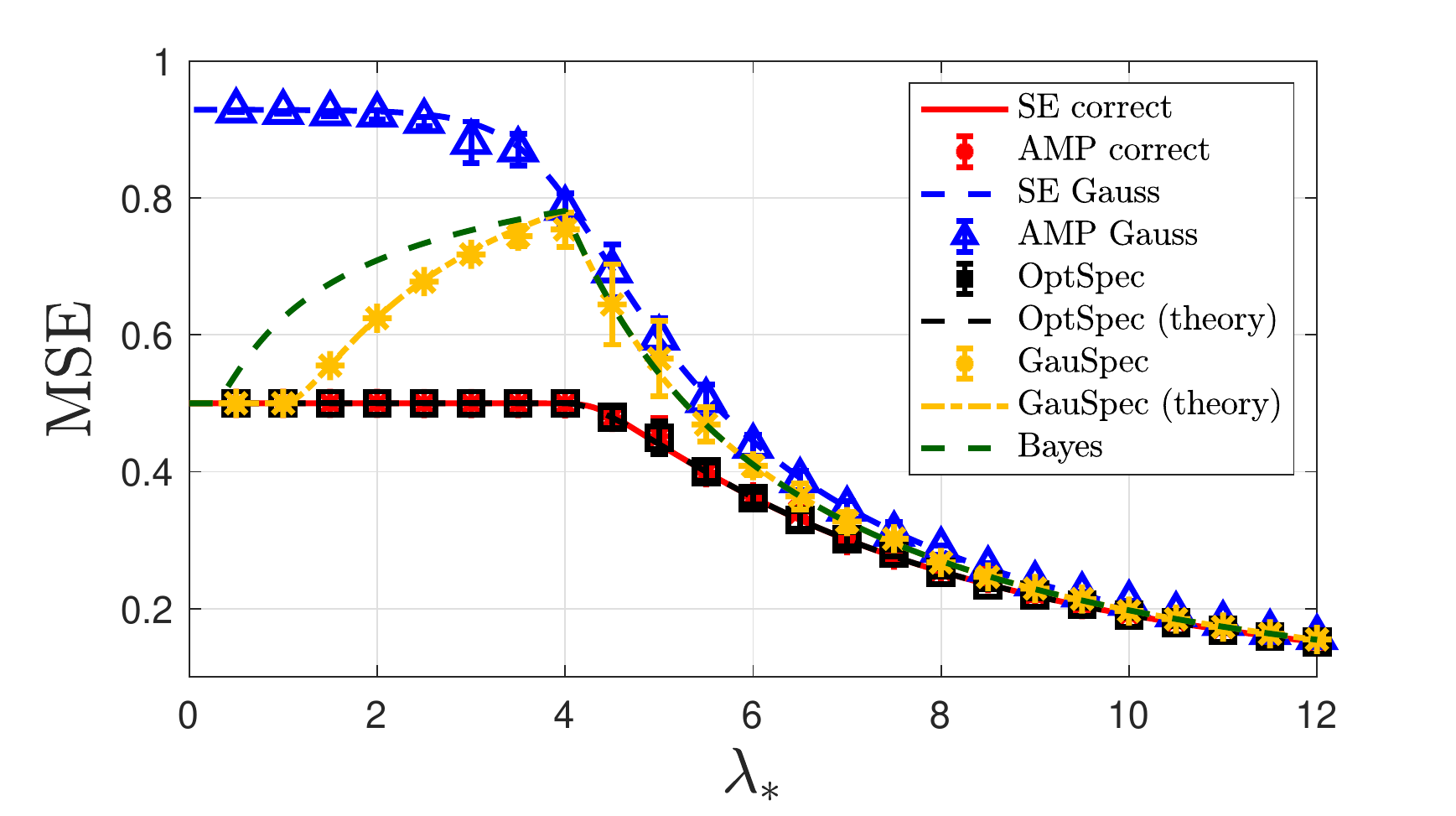}
                \includegraphics[width=0.49\linewidth]{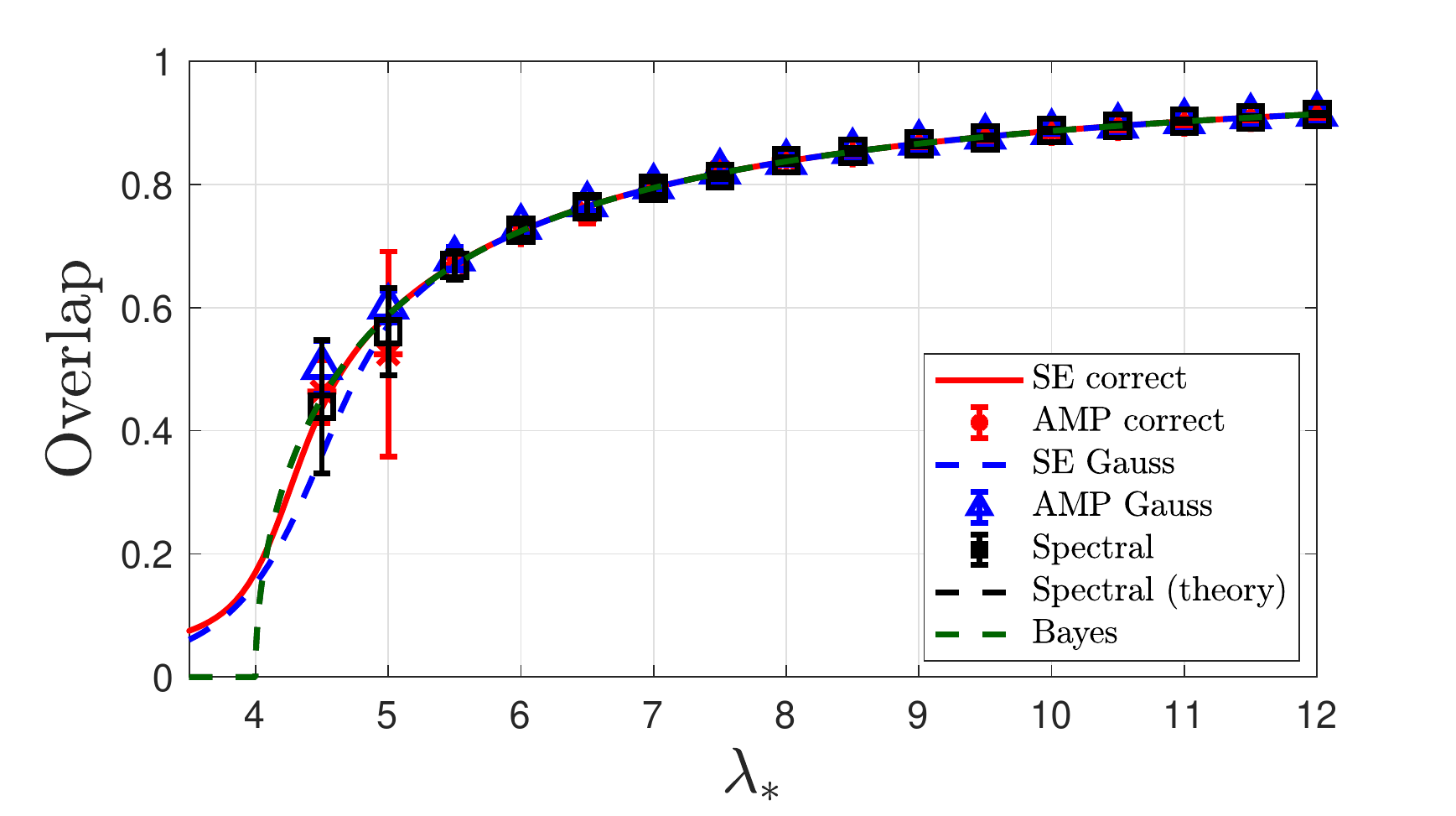}}
                
                                    \subfloat[{Uniform eigenvalues distribution in $[-\sqrt{3},\sqrt 3]$}.\label{fig:unif}]{
                   \includegraphics[width=0.49\linewidth]{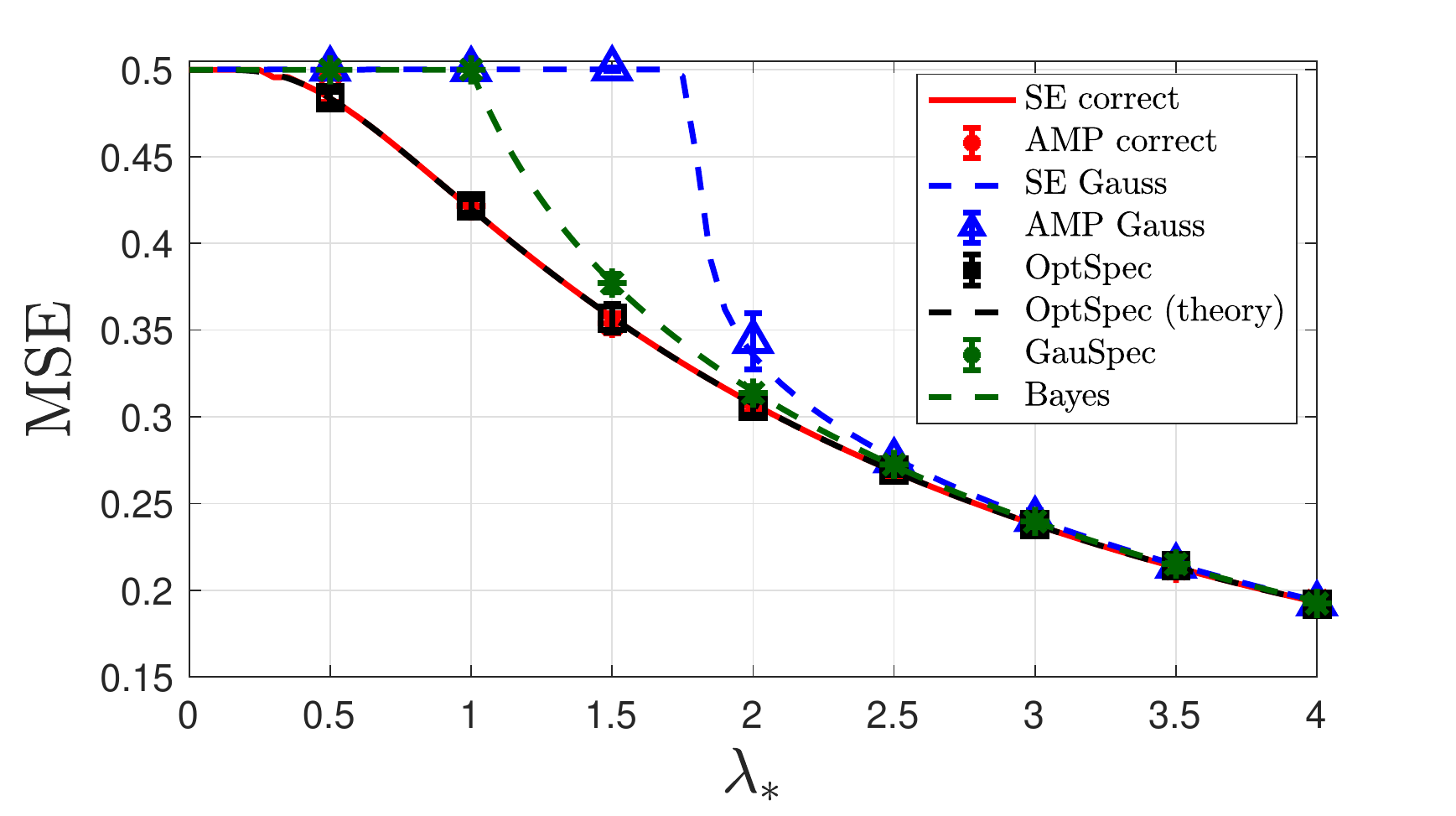}
                \includegraphics[width=0.49\linewidth]{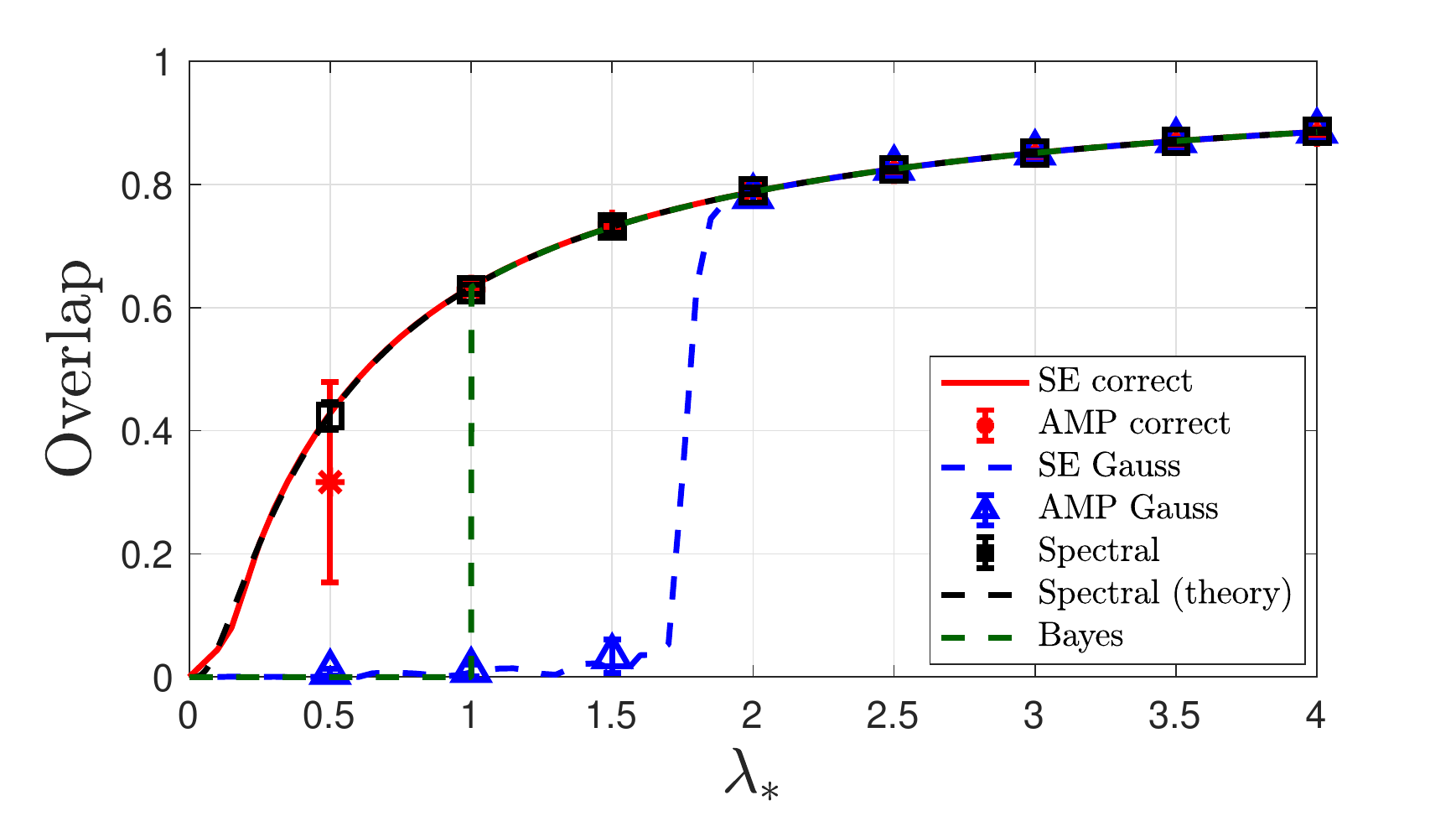}}
            }
                                    \subfloat[Wigner matrix/semicircle spectrum with mismatched SNR $\lambda=4\lambda_*$.\label{fig:SNR}]{
                \includegraphics[width=0.49\linewidth]{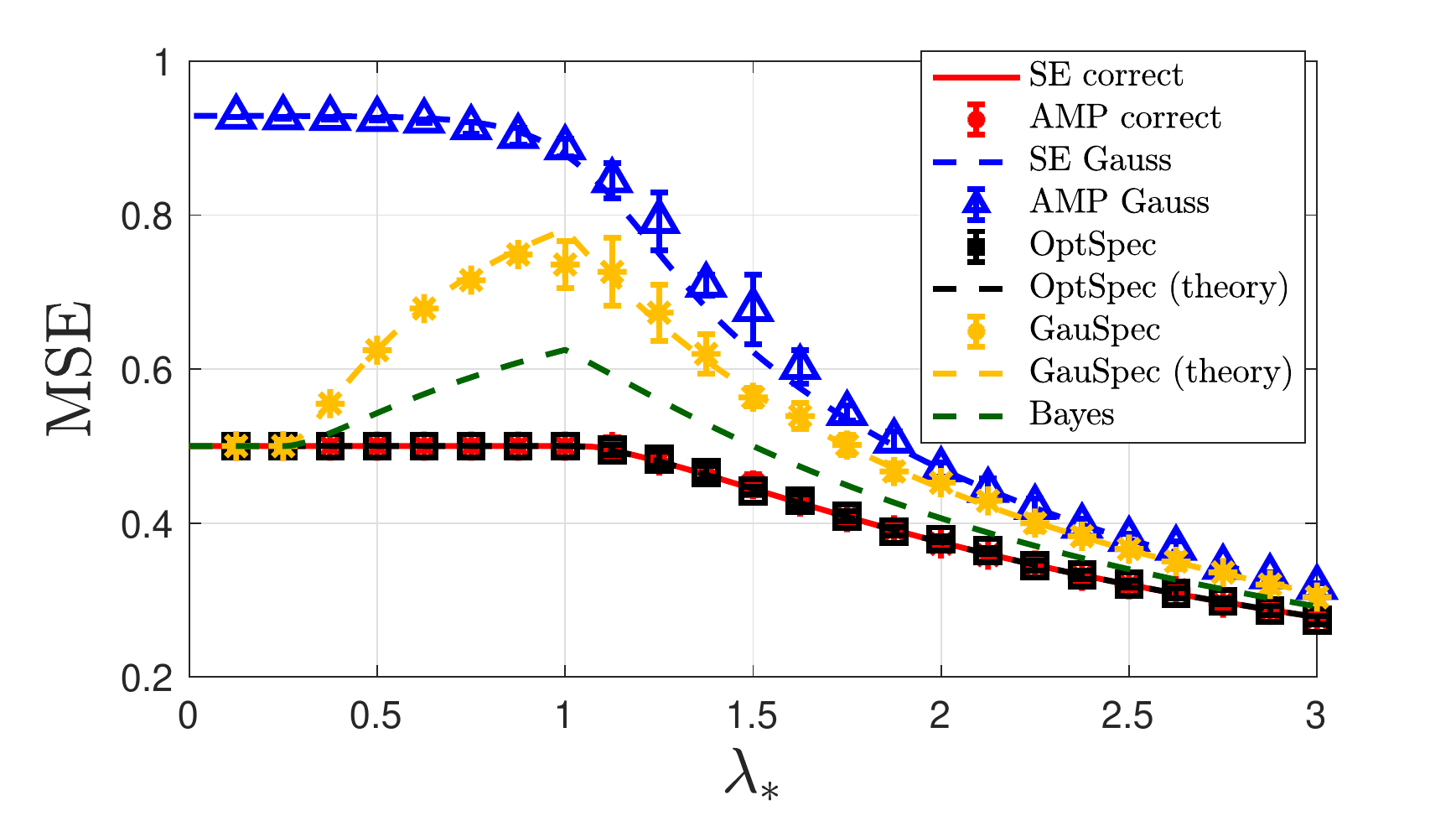}
                \includegraphics[width=0.49\linewidth]{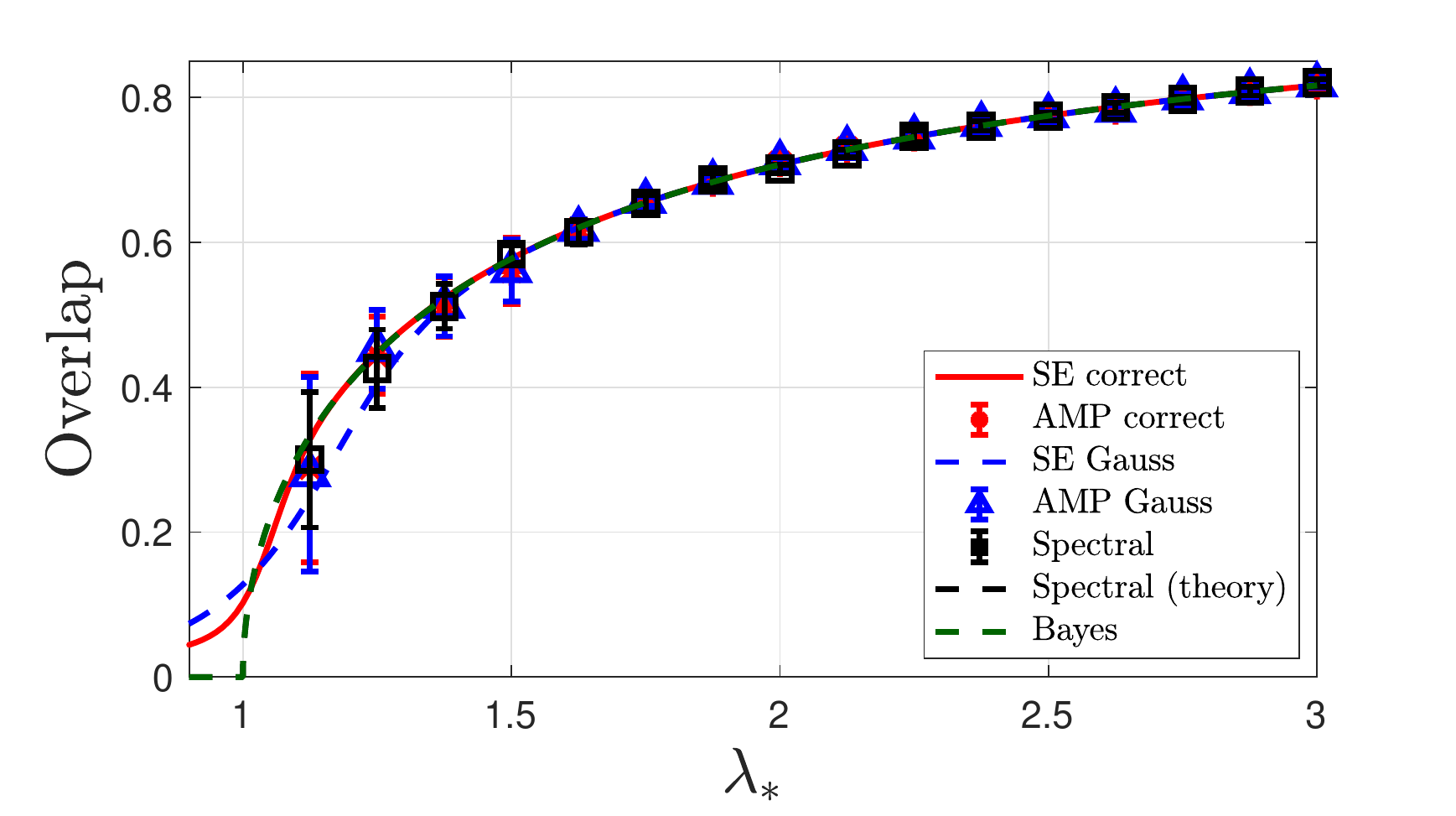}
            }
            \caption{MSE (on the left) and overlap (on the right), as a function of the true SNR $\lambda_*$ in three mismatched settings.}
            \label{fig:MSE_interpspec}
        \end{figure}

\textbf{Inference algorithms and set-up.} 
The estimators of the spike $\bX\bX^\intercal$ are compared in terms of the MSE \eqref{eq:MSEmis} (plots on the left). The AMP and spectral estimators of $\bX$ are compared in terms of the overlap \eqref{overlapMea}, and the Bayes estimator of $\bX$ in terms of the overlap \eqref{overlapMeaBayes} (plots on the right). We evaluate those formulas at $N=8000$ for the various algorithms, and in the $N\to+\infty$ limit for the theoretical predictions. The \emph{correct AMP} (together with its own SE) is in red. This AMP is \emph{correct} in the sense that the statistician is aware of the noise statistics and, thus, incorporates the right Onsager corrections. The form of the correct AMP and the corresponding state evolution are readily obtained from the results in \cite{fan2022approximate,zhong2021approximate} and, for the reader's convenience, they are recalled in Appendix \ref{app:cAMP}. As non-linearities, we use the posterior-mean denoising functions of Section 5.1 in \cite{zhong2021approximate}, and we estimate the SE parameters consistently from data (see Appendix \ref{app:den}). The \emph{Gaussian AMP} (together with its own SE) is in blue. This is the AMP algorithm \eqref{eq:AMP}, which is chosen when the noise statistics or the SNR are unknown (as it would be optimal for Gaussian noise and known SNR). Its state evolution is given by Theorem \ref{thm:SE}. As non-linearities $\{h_{t+1}\}_{t\ge 1}$, we use again the posterior-mean denoising functions, and estimate consistently the state evolution parameters from data (see Appendix \ref{app:den}). We note that the denoisers of the Gaussian AMP depend only on a single iterate, while the denoisers of the correct AMP incorporate all the past iterates. 
In contrast, the state evolution parameters of the Gaussian AMP at time $t$ depend on \emph{all} the past, in order to correct for the mismatch. The \emph{mismatched Bayes estimator} is plotted in green. Its MSE and overlap are given by \eqref{eq:MSEB} and \eqref{eq:ob}, respectively. The \emph{optimal spectral estimator (OptSpec)} is plotted in black, and the \emph{Gaussian mismatched spectral estimator (GauSpec)} is plotted in yellow in Fig. \ref{fig:MP} and \ref{fig:SNR} (where it differs from Bayes) or in green in Fig. \ref{fig:unif} (where it coincides with Bayes). The performance measures of both variants of spectral estimators are given at the end of Section \ref{subsec:3clas}. Each experiment is repeated for $10$ independent runs. We report the average and error bars at $1$ standard deviation.

    \textbf{A rich and surprising phenomenology.} \emph{(i)} An intriguing phenomenon is that, in terms of MSE, \emph{the Gaussian AMP does not perform as well as the mismatched Bayes estimator}. This effect occurs in all our settings, and it is surprising as the Gaussian AMP \eqref{eq:AMP} is precisely designed to efficiently compute the Bayes estimator \eqref{BayesEst}. In the Bayes-optimal case, it does so (outside of its computational gap) \cite{XXT}. However, if there is a mismatch,  
    \emph{AMP does not compute the (mismatched) posterior mean}. 
    This finding 
    adds to the already existing evidence \cite{PhysRevX.9.011020} that AMP is poorly understood, and studying the fundamental reasons behind this behavior is an exciting avenue for future research. 

     \emph{(ii)} By comparing MSE\footnote{The MSE is constructed from the norm $\EE\|M(\bY)\|_{\rm F}^2$ and the inner product ${\rm tr}\EE[M(\bY) \bX \bX^\intercal]$.} and overlap curves, we understand that the discrepancy between Gaussian AMP and Bayes comes from an incorrect estimation of the signal norm. In fact, \emph{at large enough SNR, 
 the overlaps of Bayes and Gaussian AMP match, and they even coincide with the overlaps of optimal algorithms} (correct AMP and OptSpec), which exploit knowledge about the noise structure. 
Understanding the origin of the MSE discrepancy (namely, a wrong estimate of the signal length) can potentially lead 
to procedures which correct for this effect and, hence, reduce the MSE.

    \emph{(iii)} When there is no SNR mismatch and  $\bar{h}\ge 1$ (cf. \eqref{conditionmatching}), \emph{the MSEs of the Bayes and Gaussian spectral estimators match}. Both have no information about the noise structure and about the signal prior. Yet, this equality is remarkable given that the spectral estimator is the solution of an optimization problem (it is a ``zero temperature'' estimator in physics parlance), while the Bayes estimator aims at computing the mean of a certain posterior distribution (it is a ``finite temperature'' estimator).

    \emph{(iv)} \emph{All mismatched estimators are outperformed by the optimal spectral method and the correct AMP}. Both these algorithms take advantage of the noise statistics and achieve the same MSE. This \emph{suggests that the two estimators are Bayes-optimal}. We remark that solving this conjecture would require understanding the information-theoretic limits of low-rank estimation with structured noise.  

    \emph{(v)} Finally, a striking phenomenon 
    -- first observed in \cite{pourkamali2021mismatched} -- is that \emph{the Bayes and Gaussian spectral MSE curves may be non-decreasing with the true SNR $\lambda_*$}, see Figs. \ref{fig:MP} and \ref{fig:SNR}. 
    Going beyond the analysis in \cite{pourkamali2021mismatched}, we remark that, for large enough SNR, all estimators yield the same overlap and therefore all ``point in the correct direction'' given by the leading eigenvector, see also part \emph{(ii)} of this discussion. This links the 
    initial MSE increase 
    to a wrong estimation of the signal's norm, due to an over-confidence in the data quality (recall that the assumed SNR $\lambda$ is equal to $4\lambda_*$ in Fig. \ref{fig:SNR}).


\section*{Acknowledgements}

M. Mondelli was partially supported by the 2019 Lopez-Loreta Prize and acknowledges discussions with M. Robinson and A. Depope. The authors acknowledge discussions with A. Krajenbrink. J. Barbier and M. S\'aenz acknowledge discussions N. Macris and F. Pourkamali.  

{\small 
\bibliographystyle{plain}
\bibliography{refs}
}


\newpage

\appendix

\section{Low-rank perturbations of rotationally invariant matrices}\label{app:PCA}
In this appendix, we recall some known results concerning 
low-rank perturbations of rotationally invariant matrices \cite{benaych2011eigenvalues,benaych2012singular}. We present them in a 
form which is more convenient for our discussion, and we specialize them for rank-1 perturbations. The notations are the same as in Section~\ref{sec:mainres}.

The first result characterizes the asymptotic value of the largest eigenvalue of the perturbed symmetric matrix $\bY\in\mathbb{R}^{N\times N}$.
\begin{thm}[Theorem 2.1 of \cite{benaych2011eigenvalues}]\label{thm:larg_eigen}
    Under Assumption~\ref{assump:noise_signal}, as $N\to+\infty$, the largest eigenvalue $\eigmax_N$ of $\bY$ converges almost surely to
    \begin{equation*}
        \eigmax :=
            K_\rho\Big(\frac1{\sqrt{\lambda_*}}\Big) \boldsymbol{1}\big( \hmax\sqrt{\lambda_*} \geq 1\big)+
            \overline{\gamma} \,\boldsymbol{1}\big( \hmax\sqrt{\lambda_*} < 1\big).
    \end{equation*}
\end{thm}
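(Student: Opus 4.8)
\textbf{Proof plan (following \cite{benaych2011eigenvalues}).} The plan is to recall the Benaych-Georges--Nadakuditi argument, specialized to an additive rank-one perturbation. Write $\theta:=\sqrt{\lambda_*}\ge 0$, so that \eqref{eq:spikedmodel} reads $\bY=\bZ+\tfrac{\theta}{N}\bX\bX^\intercal$, a positive semidefinite rank-one perturbation of $\bZ$. First I would observe that, since $\tfrac{\theta}{N}\bX\bX^\intercal\succeq 0$, we have $\bY\succeq\bZ$, hence $\eigmax_N\ge\overline{\gamma}_N$, while Weyl's inequalities confine the second-largest---and therefore every lower---eigenvalue of $\bY$ to be at most $\overline{\gamma}_N$, and $\eigmax_N\le\overline{\gamma}_N+\theta$ (because $\|\bX\|^2=N$). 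Since $\overline{\gamma}_N\to\overline{\gamma}$ a.s.\ by Assumption~\ref{assump:noise_signal}, the whole problem reduces to locating the single potential outlier $\eigmax_N\in[\overline{\gamma}_N,\overline{\gamma}_N+\theta]$.

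Next I would write the secular equation: for $\nu>\overline{\gamma}_N$ the matrix $\bZ-\nu I$ is invertible, and the matrix determinant lemma shows that $\nu$ is an eigenvalue of $\bY$ if and only if $\theta\,\psi_N(\nu)=1$, where $\psi_N(\nu):=\tfrac1N\bX^\intercal(\nu I-\bZ)^{-1}\bX$. On $(\overline{\gamma}_N,\infty)$ the function $\psi_N$ is continuous, strictly decreasing, blows up to $+\infty$ as $\nu\downarrow\overline{\gamma}_N$ (with residue the a.s.\ positive squared overlap of $\bX/\sqrt N$ with the top eigenspace of $\bZ$), and tends to $0$ from above as $\nu\to\infty$; hence the secular equation has exactly one root there, which is precisely $\eigmax_N$.

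The analytic heart is the limit of $\psi_N$. Writing $\bZ=\bO\bSig\bO^\intercal$ with $\bO$ Haar-distributed and independent of $\bX$, one has $\psi_N(\nu)=\tfrac1N(\bO^\intercal\bX)^\intercal(\nu I-\bSig)^{-1}(\bO^\intercal\bX)$, where $\bO^\intercal\bX/\sqrt N$ is uniform on the unit sphere and independent of $\bSig$. Conditionally on $\bSig$, concentration of measure on the sphere makes this quadratic form concentrate, with Gaussian tails, around $\tfrac1N\mathrm{tr}(\nu I-\bSig)^{-1}=\tfrac1N\sum_{i\le N}(\nu-\gamma_i)^{-1}$, which converges to $\int\rho(d\gamma)/(\nu-\gamma)=H_\rho(\nu)$ by the weak convergence of the empirical spectral measure (the weights $(\nu-\gamma_i)^{-1}$ being bounded once $\overline{\gamma}_N\to\overline{\gamma}<\nu$). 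By Borel--Cantelli, $\psi_N(\nu)\to H_\rho(\nu)$ a.s.\ for every fixed $\nu>\overline{\gamma}$; combined with the monotonicity of $\psi_N$ and continuity of $H_\rho$, this becomes a.s.\ convergence uniform on compact subsets of $(\overline{\gamma},\infty)$.

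Finally I would pass to the limit, using that $H_\rho$ is a strictly decreasing bijection from $(\overline{\gamma},\infty)$ onto $(0,\hmax)$ with inverse $K_\rho$. If $\hmax\sqrt{\lambda_*}>1$, then $1/\theta\in(0,\hmax)$ and $\theta H_\rho(\nu)=1$ has the unique root $\nu^{\star}=K_\rho(1/\sqrt{\lambda_*})>\overline{\gamma}$; for small $\epsilon>0$, uniform convergence gives $\psi_N(\nu^{\star}-\epsilon)>1/\theta>\psi_N(\nu^{\star}+\epsilon)$ eventually, and monotonicity places the unique root of $\theta\psi_N=1$ in $(\nu^{\star}-\epsilon,\nu^{\star}+\epsilon)$, so $\eigmax_N\to K_\rho(1/\sqrt{\lambda_*})$ a.s. If $\hmax\sqrt{\lambda_*}<1$---and likewise at the boundary $\hmax\sqrt{\lambda_*}=1$, where $\lim_{x\uparrow\hmax}K_\rho(x)=\overline{\gamma}$ keeps the stated formula consistent---then $H_\rho(\overline{\gamma}+\epsilon)<\hmax\le 1/\theta$ for every $\epsilon>0$, so $\theta\psi_N(\nu)\le\theta\psi_N(\overline{\gamma}+\epsilon)<1$ for all $\nu\ge\overline{\gamma}+\epsilon$ eventually, forcing $\eigmax_N<\overline{\gamma}+\epsilon$; together with $\eigmax_N\ge\overline{\gamma}_N\to\overline{\gamma}$ and $\epsilon\downarrow 0$, this yields $\eigmax_N\to\overline{\gamma}$. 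The step I expect to be the main obstacle is this concentration-and-limit argument for $\psi_N$: establishing an almost-sure (not merely in-probability) limit, and handling the eigenvalues $\gamma_i$ accumulating near $\overline{\gamma}$, which is what makes the resolvent weights delicate---controlled by keeping $\nu$ a fixed distance from the support and invoking $\overline{\gamma}_N\to\overline{\gamma}$.
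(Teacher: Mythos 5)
This statement is quoted verbatim from Theorem 2.1 of \cite{benaych2011eigenvalues}, and the paper itself gives no proof of it beyond that citation. Your sketch correctly reproduces the standard Benaych-Georges--Nadakuditi argument (rank-one secular equation, concentration of the resolvent quadratic form to $H_\rho$, and inversion via $K_\rho$, with the boundary case $\hmax\sqrt{\lambda_*}=1$ handled consistently), so it is essentially the same approach as the cited source.
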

This theorem thus implies that, below the phase transition, the presence of the perturbation does not modify the largest eigenvalue of the noise matrix. For the inner product between $\bX$ and the eigenvector of $\bY$ corresponding to the largest eigenvalue, the following result holds. 
\begin{thm}[Theorem 2.2 of \cite{benaych2011eigenvalues}]\label{thm:inner_eigen}
   Under Assumption~\ref{assump:noise_signal}, as $N\to+\infty$, the eigenvector $\overline{\bv}_N$ corresponding to the largest eigenvalue $\eigmax_N$ of $\bY$ is such that, almost surely,
    \begin{equation*}
        C(\rho,\lambda_*) := \lim_{N\to+\infty} \frac{\langle\bX,\overline{\bv}_N\rangle^2}{N} = 
            \Big(1 - \frac{1}{\lambda_*}R'_\rho\Big(\frac1{\sqrt{\lambda_*}}\Big)\Big) \boldsymbol{1}\big(\hmax\sqrt{\lambda_*} \geq 1\big) .           
    \end{equation*}
\end{thm}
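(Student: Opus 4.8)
\textbf{Proof proposal for Theorem \ref{thm:inner_eigen}.} The plan is to run the classical resolvent/eigenprojection argument for rank-one perturbations. Note that \eqref{eq:spikedmodel} is a rank-one perturbation $\bY=\bZ+\frac{\sqrt{\lambda_*}}{N}\bX\bX^\intercal$ with $\|\bX\|^2=N$, and introduce the scalar resolvent
\begin{equation*}
g_N(\nu):=\frac1N\big\langle\bX,(\nu I-\bZ)^{-1}\bX\big\rangle .
\end{equation*}
Writing $\bZ=\bO\bSig\bO^\intercal$ with $\bO$ Haar and independent of $\bX$, the vector $\bO^\intercal\bX$ is uniform on the sphere of radius $\sqrt N$, so $g_N(\nu)=\frac1N\sum_i(\bO^\intercal\bX)_i^2/(\nu-\gamma_i)$; concentration of measure on the sphere together with Assumption \ref{assump:noise_signal} (weak convergence of the empirical spectrum of $\bZ$ to $\rho$, and $\overline{\gamma}_N\to\overline{\gamma}$, $\underline{\gamma}_N\to\underline{\gamma}$, which confines $\mathrm{spec}(\bZ)$ to a set shrinking to $[\underline{\gamma},\overline{\gamma}]$) then yields, almost surely, $g_N\to H_\rho$ uniformly on compact subsets of $\mathbb C\setminus[\underline{\gamma},\overline{\gamma}]$, and hence, by holomorphy, $g_N'\to H_\rho'$ there as well.

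The next step is to locate the outlier. The determinant identity $\det(\nu I-\bY)=\det(\nu I-\bZ)\big(1-\sqrt{\lambda_*}\,g_N(\nu)\big)$ shows that the eigenvalues of $\bY$ outside $\mathrm{spec}(\bZ)$ are exactly the zeros of $\nu\mapsto 1-\sqrt{\lambda_*}\,g_N(\nu)$. Since $H_\rho$ decreases from $\hmax$ to $0$ on $(\overline{\gamma},\infty)$, the equation $H_\rho(\nu)=1/\sqrt{\lambda_*}$ has a unique solution $\eigmax=K_\rho(1/\sqrt{\lambda_*})>\overline{\gamma}$ precisely when $\hmax\sqrt{\lambda_*}>1$; combining the uniform convergence of $g_N$ with eigenvalue interlacing for rank-one perturbations (this is the content of Theorem \ref{thm:larg_eigen}) gives that $\eigmax_N\to\eigmax$ and that $\eigmax_N$ is, for $N$ large, a simple eigenvalue well separated from the rest of $\mathrm{spec}(\bY)$. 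When $\hmax\sqrt{\lambda_*}<1$ there is no such zero and $\eigmax_N\to\overline{\gamma}$.

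The overlap is then read off from the spectral projection $\overline{\bv}_N\overline{\bv}_N^\intercal=\frac1{2\pi i}\oint_\Gamma(\nu I-\bY)^{-1}\,d\nu$, with $\Gamma$ a small positively oriented circle enclosing $\eigmax_N$ and nothing else of $\mathrm{spec}(\bY)\cup\mathrm{spec}(\bZ)$. By the Sherman--Morrison identity,
\begin{equation*}
\frac1N\big\langle\bX,(\nu I-\bY)^{-1}\bX\big\rangle=g_N(\nu)+\frac{\sqrt{\lambda_*}\,g_N(\nu)^2}{1-\sqrt{\lambda_*}\,g_N(\nu)} .
\end{equation*}
The first term is holomorphic inside $\Gamma$; the second has a simple pole at $\eigmax_N$ with residue $-1/\big(\lambda_*\,g_N'(\eigmax_N)\big)$, using $g_N(\eigmax_N)=1/\sqrt{\lambda_*}$. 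Hence, when $\hmax\sqrt{\lambda_*}>1$, $\langle\bX,\overline{\bv}_N\rangle^2/N=-1/\big(\lambda_*\,g_N'(\eigmax_N)\big)\to-1/\big(\lambda_*\,H_\rho'(\eigmax)\big)$. To reach the stated form, differentiate $H_\rho(K_\rho(z))=z$ at $z=1/\sqrt{\lambda_*}$ to get $H_\rho'(\eigmax)=1/K_\rho'(1/\sqrt{\lambda_*})$, and use $K_\rho=R_\rho+1/(\cdot)$, i.e.\ $K_\rho'(1/\sqrt{\lambda_*})=R'_\rho(1/\sqrt{\lambda_*})-\lambda_*$; this gives $-1/\big(\lambda_*\,H_\rho'(\eigmax)\big)=1-\frac1{\lambda_*}R'_\rho(1/\sqrt{\lambda_*})$. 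In the subcritical regime $\hmax\sqrt{\lambda_*}<1$ one argues directly from the resolvent: for any fixed $\nu>\overline{\gamma}$ one has, eventually in $N$, $\langle\bX,\overline{\bv}_N\rangle^2/N\le(\nu-\eigmax_N)\cdot\frac1N\langle\bX,(\nu I-\bY)^{-1}\bX\rangle$, whose limit $(\nu-\overline{\gamma})\big(H_\rho(\nu)+\sqrt{\lambda_*}H_\rho(\nu)^2/(1-\sqrt{\lambda_*}H_\rho(\nu))\big)$ tends to $0$ as $\nu\downarrow\overline{\gamma}$, since $H_\rho(\nu)\to\hmax<\infty$ while $1-\sqrt{\lambda_*}\hmax>0$; the boundary case $\hmax\sqrt{\lambda_*}=1$ follows by monotonicity in $\lambda_*$ together with the behaviour of $H_\rho'$ at $\overline{\gamma}$.

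The main obstacle is the \emph{locally uniform} almost-sure control of $g_N$ (and of $g_N'$) all the way down to the edge $\overline{\gamma}$: one must both rule out eigenvalues of $\bZ$ escaping past $\overline{\gamma}$ (this is the role of $\overline{\gamma}_N\to\overline{\gamma}$ in Assumption \ref{assump:noise_signal}) and obtain the spherical concentration with enough uniformity that the contour $\Gamma$ may be pushed toward $\overline{\gamma}$ when $\lambda_*$ is close to the transition. Everything else is bookkeeping with the Stieltjes and $R$-transform identities. We stress that this statement is Theorem 2.2 of \cite{benaych2011eigenvalues} and is not re-proved here; the above merely outlines the route we would follow.
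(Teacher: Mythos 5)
The paper does not prove this statement at all: it is imported verbatim as Theorem 2.2 of \cite{benaych2011eigenvalues}, and Appendix \ref{app:PCA} only restates it in rank-one form. Your sketch is precisely the standard argument behind that cited result (determinant identity locating the outlier, Sherman--Morrison plus a residue computation at the simple outlier eigenvalue, and the $H_\rho$/$K_\rho$/$R_\rho$ bookkeeping, which you carry out correctly), so it matches the route of the source the paper relies on; the only parts you leave to the reference -- the locally uniform almost-sure control of $g_N$, $g_N'$ near the edge and the boundary case $\hmax\sqrt{\lambda_*}=1$ -- are exactly the technical core of \cite{benaych2011eigenvalues}, and you flag them explicitly rather than claim them.
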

Below the phase transition the spectral estimators are thus asymptotically uncorrelated with the signal while above the transition the leading eigenvector starts to align in its direction.



 \section{Proofs for the Bayes estimator}\label{app:proofBayes}
It is useful to define a slight generalization of the model \eqref{eq:spikedmodel} that includes a ``Wigner regularization'' of the rotationally-invariant noise. Namely, in this section, we consider the following generalized model for the data:
\begin{align}\label{eq:spikedmodel_wigReg}
\bY_\veps={\frac{\sqrt{\lambda_*}}{N}}  \bX\bX^\intercal +\bZ+\sqrt{\veps }\,\bW,
\end{align}
where, as before, $\bW$ is a standard Wigner matrix and $\veps>0$ is some constant. We denote by $\thermal{\,\cdot\,}_\veps$ the mean w.r.t. the associated mismatched posterior obtained by replacing $\bY$ with $\bY_\veps$ in equation \eqref{estimatorMis1}:
\begin{align}
\thermal{g(\bx)}_\veps:=\frac{1}{Z_{N}(\bY_\veps)}  \int  g(\bx) \exp\Big({\frac{\sqrt{\lambda}}{2}}\langle \bx,\bY_{\veps}\bx\rangle \Big)\,\mu_N(d\bx)   .\label{meanGene}
\end{align}
At $\epsilon=0$, the operator $\thermal{\,\cdot\,}_0$ corresponds to the expectation w.r.t. to the original mismatched measure \eqref{estimatorMis1}. The other notations are analogous to those for the non-regularized model, though we will add a subscript $\veps$ when it is useful to remark the dependence of a given quantity over $\veps$. For example, we denote $\rho_\veps$ the weak limit of the empirical spectral density of $\bY_\veps$, $R_\veps$ its associated $R$-transform, $\eigmax_\veps$ the limit of the maximum eigenvalue of $\bY_\veps$, $\overline{\gamma}_\veps$ the limit of the maximum eigenvalue of the noise matrix $\bZ+\sqrt{\veps }\,\bW$ and $\hmax_\veps := \lim_{z\downarrow\overline{\gamma}_\veps} H_{\rho_\veps}(z)$.

Our first intermediate result, Proposition~\ref{prop:fe_spherical}, characterizes the log-spherical integral (which plays the role of log-moment generating function) for the regularized data matrix:
\begin{equation}
    I_N\Big(\frac{\sqrt{\lambda}}2, {\bY_\veps}\Big) := \int  \exp\Big(\frac{\sqrt{\lambda}}{2} \langle \bx,\bY_\veps \bx\rangle\Big)\,\mu_N(d\bx).\label{logPshIntG}
\end{equation}
Although this proposition may be interesting in itself, its main purpose is to allow us to derive the MSE of the mismatched Bayes estimator in terms of its derivatives w.r.t. ${\veps}$ and ${\lambda_*}$. This is in fact the reason for introducing the regularization in the first place.

 Like in the analysis of mismatched inference with Gaussian noise of \cite{pourkamali2021mismatched}, we will make use of the following result for the asymptotic value of the rank-1 spherical integral.
\begin{thm}[Theorem 6 of \cite{guionnet2005fourier}]\label{thm:spher_int} Define $\hmax_\veps:=\lim_{z\downarrow\eigmax_\veps} H_{\rho_\veps}(z)$. Assume that Assumption \ref{assump:noise_signal} holds. Recall that $\eigmax_\veps$ is the limit of the maximum eigenvalue of $\bY_\veps$ and must be finite. Then,
    \begin{equation*}
        \lim_{N\to+\infty}\frac{1}{N} \ln I_N\Big(\frac{\sqrt{\lambda}}2, {\bY_\veps}\Big)=        
            \frac{1}{2}\int_0^{\sqrt{\lambda}} R_{\rho_\veps}(t) dt\,\boldsymbol{1}\big(\sqrt{\lambda} \leq \hmax_\veps\big) +
            g_{\lambda,\veps}(\eigmax_\veps)\,\boldsymbol{1}\big(\sqrt{\lambda}> \hmax_\veps\big).        
    \end{equation*}
\end{thm}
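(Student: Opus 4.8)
Since this is the rank-one spherical-integral asymptotics established in \cite{guionnet2005fourier}, in the paper one simply cites it; were one to reprove it, the cleanest route is a one-dimensional saddle point, which I now outline. The plan is to (i) turn the integral into a one-parameter contour integral via a Gaussian representation of the sphere, (ii) pass to a deterministic limiting exponent, and (iii) solve the resulting scalar variational problem, with the phase transition falling out of the behavior of the Hilbert transform. For step (i): write the uniform law on $\mathbb{S}^{N-1}(\sqrt N)$ as that of $\sqrt N\,g/\|g\|$ with $g\sim\mathcal N(\bzero,I_N)$, and linearize the spherical constraint by $\delta(\|y\|^2-N)=\frac{1}{2\pi i}\int_{z_0-i\infty}^{z_0+i\infty}e^{z(\|y\|^2-N)/2}\,dz$, with $z_0$ real and $z_0<-\sqrt\lambda\,\eigmax_N(\bY_\veps)$ so that $\sqrt\lambda\,\bY_\veps+z I_N\prec0$ along the contour. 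The $y$-integral is then a convergent Gaussian, and one obtains
\[
\frac1N\ln I_N\Big(\tfrac{\sqrt\lambda}{2},\bY_\veps\Big)=C_N+\frac1N\ln\int_{z_0-i\infty}^{z_0+i\infty}\exp\!\Big(N\Big[-\tfrac z2-\tfrac1{2N}\sum_{i=1}^N\ln\big(-(\sqrt\lambda\,\gamma_i(\bY_\veps)+z)\big)\Big]\Big)dz,
\]
with $C_N$ a $\lambda$-independent constant and $\gamma_i(\bY_\veps)$ the eigenvalues of $\bY_\veps$.

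For step (ii): substituting $z=-\sqrt\lambda\,w$, Assumption~\ref{assump:noise_signal} and Theorem~\ref{thm:larg_eigen} give that the empirical spectral law of $\bY_\veps$ converges weakly to $\rho_\veps$ and $\eigmax_N(\bY_\veps)\to\eigmax_\veps$ almost surely, so the single outlier of $\bY_\veps$ contributes only $O(1/N)$ to the log-determinant sum while the bulk sum converges to $\int\ln\big(\sqrt\lambda(w-\gamma)\big)\rho_\veps(d\gamma)$, uniformly on compacts of $\{w>\eigmax_\veps\}$. A Laplace/Varadhan estimate of the contour integral through the critical point of the limiting exponent then yields $\lim_N\frac1N\ln I_N(\tfrac{\sqrt\lambda}{2},\bY_\veps)=\sup_{w\ge\eigmax_\veps}\Phi_\veps(w)-\tfrac12$, where
\[
\Phi_\veps(w):=\tfrac{\sqrt\lambda}{2}\,w-\tfrac12\ln\sqrt\lambda-\tfrac12\int\ln(w-\gamma)\,\rho_\veps(d\gamma),
\]
the additive $-\tfrac12$ being pinned by $I_N(0,\cdot)\equiv1$ (the right-hand side must vanish as $\lambda\downarrow0$). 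For step (iii): $\Phi_\veps'(w)=\tfrac12(\sqrt\lambda-H_{\rho_\veps}(w))$, and $H_{\rho_\veps}$ decreases from $\hmax_\veps$ to $0$ on $(\eigmax_\veps,\infty)$. If $\sqrt\lambda\le\hmax_\veps$, the stationary point $w^\star=K_{\rho_\veps}(\sqrt\lambda)$ lies in $[\eigmax_\veps,\infty)$; differentiating $\sqrt\lambda\mapsto\sup_w\Phi_\veps(w)-\tfrac12$ by the envelope theorem and using $R_{\rho_\veps}(x)=K_{\rho_\veps}(x)-1/x$ gives derivative $\tfrac12R_{\rho_\veps}(\sqrt\lambda)$, hence value $\tfrac12\int_0^{\sqrt\lambda}R_{\rho_\veps}(t)\,dt$. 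If $\sqrt\lambda>\hmax_\veps$, then $\Phi_\veps'>0$ throughout $(\eigmax_\veps,\infty)$, the supremum sits at $w=\eigmax_\veps$ — equivalently the integral concentrates on the leading eigenvector of $\bY_\veps$ — and the value is $\Phi_\veps(\eigmax_\veps)-\tfrac12=:g_{\lambda,\veps}(\eigmax_\veps)$. This is exactly the stated dichotomy.

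The genuinely delicate point — and the reason the full argument in \cite{guionnet2005fourier} is not a one-line Laplace estimate — is the behavior at and just past the threshold $\sqrt\lambda=\hmax_\veps$, where the interior critical point merges with the endpoint $\eigmax_\veps$, $\Phi_\veps$ degenerates there, and controlling the vertical-contour integral requires a bona fide large-deviation analysis of the joint empirical measure of $(g_i,\gamma_i(\bY_\veps))$ rather than the naive saddle estimate; one must also verify that the $O(1/N)$ contribution of the (possibly isolated) top eigenvalue of $\bY_\veps$ does not leak into the exponential rate, which is where the fine edge assumptions on $\rho_\veps$ enter.
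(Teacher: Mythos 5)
You are right that the paper does not prove this statement at all: it is imported verbatim as Theorem 6 of Guionnet--Ma\"ida, so the only thing to assess is your sketch, and that sketch contains a genuine sign error in its central step. The Gaussian representation of the sphere, the contour linearization of the constraint, and the limiting exponent $\Phi_\veps(w)=\tfrac{\sqrt\lambda}{2}w-\tfrac12\ln\sqrt\lambda-\tfrac12\int\ln(w-\gamma)\,\rho_\veps(d\gamma)$ are all fine, but the variational formula must be an \emph{infimum} over the contour anchor, not a supremum. Along the vertical line $w_0+is$ the real part of the exponent is maximal at $s=0$ (the log-determinant term only grows in $|s|$ while the linear term has constant real part), so the Laplace estimate gives $\tfrac1N\ln I_N\lesssim\Phi_\veps(w_0)-\tfrac12$ for every admissible $w_0>\eigmax_N$, and one then optimizes over the anchor: the limit is $\inf_{w>\eigmax_\veps}\Phi_\veps(w)-\tfrac12$. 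As written, your formula $\sup_{w\ge\eigmax_\veps}\Phi_\veps(w)-\tfrac12$ is actually $+\infty$, since $\Phi_\veps'(w)=\tfrac12(\sqrt\lambda-H_{\rho_\veps}(w))\to\tfrac{\sqrt\lambda}{2}>0$ as $w\to\infty$, so $\Phi_\veps$ diverges linearly; and your own treatment of the supercritical branch is internally inconsistent, because a function with $\Phi_\veps'>0$ throughout $(\eigmax_\veps,\infty)$ attains its supremum at $+\infty$, not at the left endpoint $\eigmax_\veps$.

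With ``inf'' in place of ``sup'' the rest of your argument does go through and recovers the stated dichotomy: $\Phi_\veps''=-\tfrac12H_{\rho_\veps}'>0$, so when $\sqrt\lambda\le\hmax_\veps$ the interior stationary point $w^\star=K_{\rho_\veps}(\sqrt\lambda)$ is the \emph{minimizer}, and your envelope computation (derivative $\tfrac12R_{\rho_\veps}(\sqrt\lambda)$, boundary value fixed by $I_N(0,\cdot)\equiv1$) gives $\tfrac12\int_0^{\sqrt\lambda}R_{\rho_\veps}(t)\,dt$; when $\sqrt\lambda>\hmax_\veps$ the minimum sits at the endpoint and equals $\Phi_\veps(\eigmax_\veps)-\tfrac12=g_{\lambda,\veps}(\eigmax_\veps)$. (A quick sanity check: for semicircle noise and $\lambda\le1$ the inf-formula gives $\lambda/4$, the known answer, whereas the sup is infinite.) Your closing caveats are well placed and in fact more important than you suggest: in the supercritical branch the optimizing $w$ sits exactly at $\eigmax_\veps$, i.e.\ at the location of the (possibly isolated) top eigenvalue, so the ``outlier contributes only $O(1/N)$'' dismissal requires the uniform control you allude to, and the matching lower bound near the degenerate threshold $\sqrt\lambda=\hmax_\veps$ is precisely why the published proof is a bona fide large-deviations argument rather than a one-line steepest-descent estimate.
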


We can now state our first intermediate proposition.

    \begin{proposition}[Spherical integral]\label{prop:fe_spherical}
    For every $\lambda,\veps>0$, define the function $g_{\lambda,\veps}:(\overline{\gamma}_\veps,+\infty)\mapsto\R$ as
    \begin{equation*}
        g_{\lambda,\veps}(x):=\frac{\sqrt\lambda}2x- \frac{1}{2}\int d\rho_{\veps}(\gamma)\ln( x-\gamma)-\frac12-\frac14\ln\lambda.
    \end{equation*}
        Then, under Assumption \ref{assump:noise_signal}, for every $\lambda,\lambda_* > 0$ we almost surely have that 
        \begin{align*}
        \lim_{N\to+\infty} \frac{1}{N}\ln I_N\Big(\frac{\sqrt{\lambda}}2, {\bY_\veps}\Big)= f_\veps(\lambda,\lambda_*) :=
             \begin{cases}        
                g_{\lambda,\veps}(K_{\rho_\veps}(\frac1{\sqrt{\lambda_*}}))  &\mbox{if }  \hmax_\veps\sqrt{\lambda_*} \ge 1 \ \mbox{and}\ \lambda\lambda_*> 1,\\
                g_{\lambda,\veps}(\overline{\gamma}_\veps) &\mbox{if }  \hmax_\veps\sqrt{\lambda_*} < 1 \ \mbox{and}\ \sqrt{\lambda}>  \hmax_\veps,\\
                \frac12 \int_0^{\sqrt \lambda} R_{\rho_\veps}(t)dt   &\mbox{otherwise}.
            \end{cases}   
        \end{align*}
        Moreover we also have that
        \begin{align*}
        \lim_{N\to+\infty} \frac{1}{N}\mathbb{E}\ln I_N\Big(\frac{\sqrt{\lambda}}2,{\bY_\veps}\Big)= f_\veps(\lambda,\lambda_*).
        \end{align*}
    \end{proposition}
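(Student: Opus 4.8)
Since $\mu_N$ is invariant under rotations, the spherical integral $I_N(\tfrac{\sqrt\lambda}{2},\bY_\veps)$ depends on $\bY_\veps$ only through its spectrum, so the first task is to identify the limiting eigenvalue configuration of $\bY_\veps$. Set $\bZ_\veps:=\bZ+\sqrt{\veps}\,\bW$. Being the sum of two independent rotationally invariant matrices, $\bZ_\veps$ is rotationally invariant and independent of $\bX$; its empirical spectral distribution converges weakly to $\rho_\veps$ (the free convolution of $\rho$ with the relevant semicircle law, as in the preamble of this section), and its extreme eigenvalues converge a.s.\ to the edges of $\rho_\veps$ by standard edge estimates for sums of a matrix with converging spectrum and an independent Wigner matrix. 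Thus the pair $(\bX,\bZ_\veps)$ satisfies Assumption~\ref{assump:noise_signal} with limiting noise density $\rho_\veps$, and Theorem~\ref{thm:larg_eigen} applied to $\bY_\veps$ gives that its largest eigenvalue converges a.s.\ to
\[
\eigmax_\veps=K_{\rho_\veps}\big(1/\sqrt{\lambda_*}\big)\,\boldsymbol{1}\big(\hmax_\veps\sqrt{\lambda_*}\ge 1\big)+\overline{\gamma}_\veps\,\boldsymbol{1}\big(\hmax_\veps\sqrt{\lambda_*}<1\big);
\]
moreover, being a rank-one perturbation of $\bZ_\veps$, the matrix $\bY_\veps$ has no other outlier and the bulk of its spectrum still converges to $\rho_\veps$, so its eigenvalue configuration is exactly of the type covered by the Guionnet--Maida formula.

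Next I would apply Theorem~\ref{thm:spher_int} to this configuration, obtaining the a.s.\ limit $\tfrac12\int_0^{\sqrt\lambda}R_{\rho_\veps}(t)\,dt$ when $\sqrt\lambda\le\widetilde h_\veps$ and $g_{\lambda,\veps}(\eigmax_\veps)$ when $\sqrt\lambda>\widetilde h_\veps$, where $\widetilde h_\veps:=\lim_{z\downarrow\eigmax_\veps}H_{\rho_\veps}(z)$. It then remains to simplify the indicator events. When $\hmax_\veps\sqrt{\lambda_*}\ge 1$ one has $\eigmax_\veps=K_{\rho_\veps}(1/\sqrt{\lambda_*})$ and, using $H_{\rho_\veps}\circ K_{\rho_\veps}=\mathrm{id}$ together with the monotonicity and edge-continuity of $H_{\rho_\veps}$ (the boundary case $\hmax_\veps\sqrt{\lambda_*}=1$ giving $\eigmax_\veps=\overline{\gamma}_\veps$), one gets $\widetilde h_\veps=1/\sqrt{\lambda_*}$; hence the threshold $\sqrt\lambda>\widetilde h_\veps$ is exactly $\lambda\lambda_*>1$, yielding $g_{\lambda,\veps}(K_{\rho_\veps}(1/\sqrt{\lambda_*}))$ in the first regime and $\tfrac12\int_0^{\sqrt\lambda}R_{\rho_\veps}$ otherwise. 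When $\hmax_\veps\sqrt{\lambda_*}<1$ one has $\eigmax_\veps=\overline{\gamma}_\veps$ and $\widetilde h_\veps=\hmax_\veps$, so $\sqrt\lambda>\widetilde h_\veps$ reads $\sqrt\lambda>\hmax_\veps$, yielding $g_{\lambda,\veps}(\overline{\gamma}_\veps)$ in the second regime and $\tfrac12\int_0^{\sqrt\lambda}R_{\rho_\veps}$ otherwise. Collecting the four sub-cases reproduces the three-line formula for $f_\veps(\lambda,\lambda_*)$.

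For the convergence in expectation I would use the deterministic sandwich $\tfrac{\sqrt\lambda}{2}\,\underline{\nu}_N(\bY_\veps)\le\tfrac1N\ln I_N(\tfrac{\sqrt\lambda}{2},\bY_\veps)\le\tfrac{\sqrt\lambda}{2}\,\eigmax_N(\bY_\veps)$, which follows from $\langle\bx,\bY_\veps\bx\rangle\in[\,N\underline{\nu}_N(\bY_\veps),\,N\eigmax_N(\bY_\veps)\,]$ on the sphere of radius $\sqrt N$. Since $\|\bY_\veps\|_{\rm op}\le\sqrt{\lambda_*}+\|\bZ\|_{\rm op}+\sqrt{\veps}\,\|\bW\|_{\rm op}$, and the operator norm of a Wigner matrix has uniformly bounded moments while the spectrum of $\bZ$ stays a.s.\ bounded (in all models of interest it lies in a fixed compact set), the family $\{\tfrac1N\ln I_N(\tfrac{\sqrt\lambda}{2},\bY_\veps)\}_N$ is uniformly integrable; together with the a.s.\ limit, this upgrades the convergence to $L^1$, hence to convergence of the expectation, to the same $f_\veps(\lambda,\lambda_*)$.

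The argument has no deep analytic hurdle once Theorems~\ref{thm:larg_eigen} and~\ref{thm:spher_int} are used as black boxes; the two points requiring care are \emph{(i)} checking that the Wigner regularization preserves Assumption~\ref{assump:noise_signal}, in particular edge convergence of $\bZ_\veps$ and the identification of $\rho_\veps$, $R_{\rho_\veps}$ and $\hmax_\veps$ for the free convolution; and \emph{(ii)} matching the phase boundary in the Guionnet--Maida formula, which is phrased through $H_{\rho_\veps}$ evaluated at the \emph{perturbed} top eigenvalue $\eigmax_\veps$, with the BGN phase boundary, which is phrased through $H_{\rho_\veps}$ at the \emph{bulk edge} $\overline{\gamma}_\veps$ --- a reconciliation that is purely bookkeeping with the monotone bijection $H_{\rho_\veps}$ but must be done attentively so that every indicator comes out exactly as in the statement.
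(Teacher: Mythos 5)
Your proposal is correct and follows essentially the same route as the paper's proof: apply Theorem \ref{thm:larg_eigen} to the regularized model so that the hypotheses of the Guionnet--Maida formula (Theorem \ref{thm:spher_int}) hold almost surely, combine the two to get the three-regime expression, and upgrade to convergence in expectation via an almost-sure bound on the exponent (your sandwich-plus-uniform-integrability step plays the same role as the paper's dominated-convergence argument). The only difference is one of detail: you spell out the phase-boundary bookkeeping (matching $H_{\rho_\veps}$ evaluated at $\eigmax_\veps$ versus at $\overline{\gamma}_\veps$) and the verification that $\bZ+\sqrt{\veps}\,\bW$ satisfies Assumption \ref{assump:noise_signal}, which the paper leaves implicit.
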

    \begin{proof}[Proof of Proposition \ref{prop:fe_spherical}]    
    By Theorem \ref{thm:larg_eigen} applied to the generalized model \eqref{eq:spikedmodel_wigReg}, we are almost surely under the hypothesis of Theorem \ref{thm:spher_int}. Combining these two results proves the first claim. 

    To get the result in expectation, it suffices to notice that by Assumption \ref{assump:noise_signal} the sequence of values of the largest eigenvalue of $\bZ$ is a.s. convergent. The same is a.s. true for the Wigner matrix $\bW$. There thus almost surely exists some bounded $C > 0$ such that 
    \begin{equation*}
        \frac{\sqrt\lambda}{2} \langle \bx,\bY_\veps \bx\rangle \leq \frac{\sqrt{\lambda}}{2}(C+\sqrt{\lambda_*})N.
    \end{equation*}
    Then, the last result follows by dominated convergence.
\end{proof}

\subsection{Proof of Theorem \ref{thm:lim_mse}}~\\ \vspace{-14pt}

To establish this result, define the (matrix) \emph{magnetization} and the (matrix) \emph{overlap} according to $$M_N :=  \Big(\frac{\langle \bX,\bx\rangle}N\Big)^2, \qquad Q_N :=  \Big(\frac{\langle \bx^{(1)},\bx^{(2)} \rangle} {N}\Big)^2.$$ respectively. Here, the supra-indices indicate two conditionally (on $\bY_\veps$) independent samples from the mismatched posterior $P_{\rm mis}(\cdot\mid \bY_\veps)$ of the generalized model \eqref{eq:spikedmodel_wigReg} (with mean $\eqref{meanGene}$). 
Theorem \ref{thm:lim_mse} is the consequence of the asymptotic formulas for the mean magnetization and overlap given in the lemmas below.

\begin{lemma}\label{lem:mag_limit}
    In the setting considered, for all $\lambda,\lambda_* > 0$ fixed and letting $M(\cdot, \cdot)$ be given by \eqref{magOverDefs}, we almost surely have $$\lim_{N\to+\infty} \thermal{M_N}_0 = M(\lambda,\lambda_*).$$
\end{lemma}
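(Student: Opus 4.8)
The strategy is to extract $\thermal{M_N}_0$ from the spherical integral $I_N$ of Proposition~\ref{prop:fe_spherical} by differentiating with respect to the true SNR parameter. Concretely, working in the regularized model \eqref{eq:spikedmodel_wigReg} with $\veps>0$, write the inner product appearing in the exponent of \eqref{meanGene} as $\langle \bx, \bY_\veps\bx\rangle = \frac{\sqrt{\lambda_*}}{N}\langle \bX,\bx\rangle^2 + \langle \bx,(\bZ+\sqrt\veps\,\bW)\bx\rangle$. Hence $\partial_{\sqrt{\lambda_*}}$ of the log-partition function $\frac1N\ln I_N(\frac{\sqrt\lambda}2,\bY_\veps)$ brings down exactly $\frac{\sqrt\lambda}{2}\thermal{M_N}_\veps$ (up to the $\frac1N$ normalization that is already built in). Therefore, provided we can interchange the $N\to\infty$ limit with the derivative in $\sqrt{\lambda_*}$, we get
\begin{equation*}
\lim_{N\to\infty}\thermal{M_N}_\veps = \frac{2}{\sqrt\lambda}\,\partial_{\sqrt{\lambda_*}} f_\veps(\lambda,\lambda_*).
\end{equation*}
Then one sends $\veps\downarrow 0$, using Assumption~\ref{assump:dist_decay} to guarantee continuity of the relevant quantities (in particular of $\eigmax_\veps$, $\hmax_\veps$ and the overlap constant) at $\veps=0$, and checks that the resulting expression matches $M(\lambda,\lambda_*)$ as defined in \eqref{magOverDefs}.

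\textbf{Key steps, in order.} First I would justify the differentiation-under-the-limit step: $\frac1N\ln I_N(\frac{\sqrt\lambda}2,\bY_\veps)$ is convex in $\sqrt{\lambda_*}$ (it is a log-partition function, hence convex in any linear parameter of the Hamiltonian), and by Proposition~\ref{prop:fe_spherical} it converges pointwise to the limit $f_\veps(\lambda,\lambda_*)$, which is itself differentiable in $\sqrt{\lambda_*}$ on the relevant ranges (this needs to be verified case-by-case from the explicit formula). Convexity plus pointwise convergence gives convergence of the derivatives at every point of differentiability of the limit, and moreover a sandwiching argument (monotonicity of difference quotients) upgrades this to almost sure convergence of $\thermal{M_N}_\veps$ itself, not merely its expectation. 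Second, I would carry out the explicit computation of $\partial_{\sqrt{\lambda_*}} f_\veps$ in each of the three regimes of Proposition~\ref{prop:fe_spherical}. In the subcritical/weak regime ($\hmax_\veps\sqrt{\lambda_*}<1$ or $\lambda\lambda_*\le1$), $f_\veps$ either does not depend on $\lambda_*$ at all or depends on it only through $\eigmax_\veps=\overline\gamma_\veps$, which is independent of $\lambda_*$; hence the derivative vanishes and the limiting magnetization is $0$, consistent with the indicator in \eqref{magOverDefs}. In the supercritical regime one uses $f_\veps = g_{\lambda,\veps}(K_{\rho_\veps}(1/\sqrt{\lambda_*}))$; here a crucial simplification is that the derivative of $g_{\lambda,\veps}(x)$ with respect to its argument vanishes at $x=K_{\rho_\veps}(1/\sqrt{\lambda_*})$ (this is precisely the stationarity/saddle-point condition characterizing the location of the outlier eigenvalue in Theorem~\ref{thm:larg_eigen}: $g'_{\lambda,\veps}(x)=\frac{\sqrt\lambda}2-\frac12 H_{\rho_\veps}(x)$, which equals $0$ exactly when $H_{\rho_\veps}(x)=\sqrt\lambda$, i.e.\ $x=K_{\rho_\veps}(\sqrt\lambda)$ — wait, one must be careful here: the stationarity is in a different variable). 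Let me restate: one differentiates $g_{\lambda,\veps}(K_{\rho_\veps}(1/\sqrt{\lambda_*}))$ using the chain rule, and the term $\partial_x g_{\lambda,\veps}\cdot \partial_{\sqrt{\lambda_*}} K_{\rho_\veps}(1/\sqrt{\lambda_*})$ is handled by noting $\partial_x g_{\lambda,\veps}(x)=\frac{\sqrt\lambda}{2}-\frac12 H_{\rho_\veps}(x)$ and $H_{\rho_\veps}(K_{\rho_\veps}(\cdot))=\cdot$; after the algebra and the identity $K_{\rho_\veps}(x)=R_{\rho_\veps}(x)+1/x$ one recovers $(1-1/\sqrt{\lambda\lambda_*})(1-\frac1{\lambda_*}R'_{\rho_\veps}(1/\sqrt{\lambda_*}))$. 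Third, I would take $\veps\downarrow0$: $\rho_\veps\to\rho$ weakly with $R_{\rho_\veps}\to R_\rho$ and $R'_{\rho_\veps}\to R'_\rho$ on compact subsets of the domain (standard free-probability continuity), $\hmax_\veps\to\hmax$ and the phase-transition thresholds converge, which by Assumption~\ref{assump:dist_decay} is enough to pass the indicators to the limit. Finally one invokes $\thermal{M_N}_0 = \lim_{\veps\to0}\thermal{M_N}_\veps$ — or rather interchanges the $\veps\to0$ and $N\to\infty$ limits, again justified by a monotonicity/dominated-convergence argument since the magnetization is bounded by $1$ and $\veps\mapsto\thermal{M_N}_\veps$ is controlled.

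\textbf{Main obstacle.} The delicate point is the exchange of limits: establishing that $\lim_{N\to\infty}\partial_{\sqrt{\lambda_*}}(\frac1N\ln I_N)=\partial_{\sqrt{\lambda_*}} f_\veps$ requires knowing $f_\veps$ is differentiable at the point in question and that we are not at a boundary of a regime (phase transition point), where one-sided derivatives may differ; the set of such exceptional $(\lambda,\lambda_*)$ has measure zero, and one argues by continuity of $\thermal{M_N}_\veps$ in $\lambda_*$ to extend the formula there. Equally delicate is justifying the final interchange of the $\veps\to0$ and $N\to\infty$ limits, for which one needs uniform (in $\veps$, $N$) control — this is where the techniques analogous to those of \cite{BarbierMacris2019} enter, exploiting concentration of $M_N$ around its mean and Lipschitz-continuity of $\veps\mapsto\thermal{M_N}_\veps$ via a Gaussian-interpolation estimate on the Wigner perturbation. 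The algebraic verification that the derivative of $f_\veps$ reduces to the clean expression \eqref{magOverDefs} is routine but requires the saddle-point identity for $g_{\lambda,\veps}$ noted above, which is what makes the formula collapse to something so simple.
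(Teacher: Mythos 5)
Your core mechanism is exactly the paper's: differentiate the log-spherical integral with respect to $\sqrt{\lambda_*}$ to bring down $\tfrac{\sqrt\lambda}{2}\thermal{M_N}$, use convexity of the log-partition function in $\sqrt{\lambda_*}$ together with Proposition~\ref{prop:fe_spherical} to pass the derivative through the $N\to\infty$ limit almost surely, and then compute $\partial_{\sqrt{\lambda_*}}f$ by the chain rule using $H_{\rho}(K_{\rho}(\cdot))=\cdot$ and $K_\rho(x)=R_\rho(x)+1/x$; your algebra (after your self-correction about the spurious stationarity claim) does reproduce $M(\lambda,\lambda_*)$, and your treatment of the subcritical regime is also the paper's.

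The one substantive deviation is that you run the whole argument at $\veps>0$ and only remove the Wigner regularization at the end. This is an unnecessary detour for \emph{this} lemma: the paper's proof sets $\veps=0$ from the start, since the quantity extracted by $\partial_{\sqrt{\lambda_*}}$ is the magnetization and the spherical-integral asymptotics already apply to $\bY_0=\bY$ (the regularization is needed only in Lemma~\ref{lem:ovlap_limit}, where differentiating in $\veps$ and Gaussian integration by parts on the Wigner part is what produces the overlap $Q_N$; likewise Assumption~\ref{assump:dist_decay} plays no role in the magnetization lemma). The consequence is that the step you yourself flag as the main obstacle --- interchanging $\veps\to0$ with $N\to\infty$, which you justify only by an appeal to concentration and a Lipschitz-in-$\veps$ estimate in the spirit of \cite{BarbierMacris2019} --- is not actually carried out in your sketch, and it is genuinely delicate (it is precisely the kind of control the paper could not establish unconditionally for the norm $\|M_{\rm mis}\|_{\rm F}/N$, which is left as a conjecture). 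Fortunately the gap is self-inflicted rather than structural: dropping the regularization and working directly at $\veps=0$, as the paper does, turns your argument into the paper's proof with no interchange of limits needed. Your caveat about phase-boundary points where the limit $f_0$ may fail to be differentiable is a reasonable extra precaution that the paper does not dwell on.
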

\begin{proof}
    For this proof we can set $\epsilon=0$ all along. Note that, in the exponent of the spherical integral \eqref{logPshIntG}, the only term in which $\lambda_*$ appears (when writing $\bY_0$ explicitly in terms of $\bX \bX^\intercal$) is
    \begin{equation*}
        \frac{\sqrt{\lambda \lambda_*}}{2 N} \langle \bX, \bx\rangle ^2 = N\frac{\sqrt{\lambda \lambda_*}}{2} M_N.
    \end{equation*}
    Then, we have that for all $N\geq1$
    \begin{equation*}
        \frac{d}{d \sqrt{\lambda_*}}\frac1N\ln I_N\Big(\frac{\sqrt{\lambda}}2, {\bY_0}\Big) = \frac{\sqrt{\lambda}}{2}\thermal{M_N}_0.
    \end{equation*}
    We also have 
    \begin{equation*}
        \frac{d^2}{(d \sqrt{\lambda_*})^2}\frac1N\ln I_N\Big(\frac{\sqrt{\lambda}}2, {\bY_0}\Big) = \frac{\lambda N}{4}\big(\thermal{M_N^2}_0-\thermal{M_N}_0^2 \big)\ge 0.
    \end{equation*}
    Thus, by the convexity of $\ln I_N(\frac{\sqrt{\lambda}}2, {\bY_0})$ w.r.t. $\sqrt{\lambda_*}$ and Proposition \ref{prop:fe_spherical}, we get that almost surely
    \begin{equation*}
        \lim_{N\to+\infty} \thermal{M_N}_0  = \frac{2}{\sqrt{\lambda}} \frac{d f_0(\lambda,\lambda_*)}{d \sqrt{\lambda_*}} = 4\sqrt{\frac{\lambda_*}{\lambda}} \frac{d f_0(\lambda,\lambda_*)}{d \lambda_*}.
    \end{equation*}
    Recall that $K_\rho(x)=R_\rho(x)+1/x$. Then, the derivative of $f_0$ with respect to $\lambda_*$ is given by 
    \begin{align*}
        \frac{df_0}{d\lambda_*}(\lambda,\lambda_*)=                
            \frac{1}{4} \Big(\frac{1}{\lambda_*^2}-\Big(\frac{\lambda}{\lambda_*^3}\Big)^{1/2}\Big) \Big(R'_\rho\Big(\frac1{\sqrt{\lambda_*}}\Big)-\lambda_*\Big)  \boldsymbol{1}\big(\hmax\sqrt{\lambda_*} \ge 1 \cap \lambda\lambda_*> 1\big) 
    \end{align*}
    with $\bar h = \bar h_0$. From these two last equations, the result of the lemma follows.
\end{proof}

\begin{lemma}\label{lem:ovlap_limit}
    In the setting considered, for all $\lambda,\lambda_* > 0$ fixed and letting $Q(\cdot, \cdot)$ be given by \eqref{magOverDefs}, we have $$\lim_{N\to+\infty} \E\thermal{Q_N}_0 = Q(\lambda,\lambda_*).$$
\end{lemma}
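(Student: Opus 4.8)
The plan is to mimic the structure of the proof of Lemma \ref{lem:mag_limit}, but now extracting the overlap $Q_N$ rather than the magnetization $M_N$ from derivatives of a suitable log-spherical-integral. The key observation is that the overlap between two conditionally independent replicas naturally appears when one differentiates the log-partition function with respect to the variance parameter $\veps$ of the Wigner regularization. Indeed, expanding $\langle \bx, \bW \bx\rangle$ and using that the off-diagonal entries of $\bW$ are i.i.d.\ Gaussian, a Gaussian integration-by-parts (Stein's lemma) applied to $\frac{d}{d\veps}\frac1N\E\ln I_N(\frac{\sqrt\lambda}{2},\bY_\veps)$ produces, up to lower-order and self-overlap terms, a contribution proportional to $\E\thermal{Q_N}_\veps$ (together with a term involving $\E\thermal{\frac1{N^2}\langle\bx,\bx\rangle^2}_\veps$, which is deterministic and equal to $1$ by the spherical constraint $\|\bx\|^2=N$). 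This is precisely why the regularization was introduced: it converts the inaccessible quantity $\E\|M_{\rm mis}\|_{\rm F}^2$ into a derivative of a computable free energy.

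Concretely, I would first establish the identity
\begin{equation*}
\frac{d}{d\veps}\,\frac1N\,\E\ln I_N\Big(\frac{\sqrt\lambda}{2},\bY_\veps\Big)
= \frac{\lambda}{8}\,\E\thermal{Q_N}_\veps + \frac{\lambda}{8N}\,\E\thermal{\tfrac1N\langle\bx,\bx\rangle^2 - \cdots}_\veps,
\end{equation*}
where the omitted terms are $O(1/N)$ thanks to $\|\bx\|^2 = N$ being deterministic on the sphere; the precise constant should be checked by carefully tracking the factor $\sqrt\veps$ and the $\frac1N$ normalization of the Wigner density. Next, I would argue that the map $\veps \mapsto \frac1N\ln I_N(\frac{\sqrt\lambda}{2},\bY_\veps)$ is convex (its second $\veps$-derivative is, up to positive constants, a replica-variance of $Q_N$, hence nonnegative), so that by Proposition \ref{prop:fe_spherical} the derivatives converge and we may exchange limit and derivative: almost surely and in expectation,
\begin{equation*}
\lim_{N\to\infty}\E\thermal{Q_N}_\veps = \frac{8}{\lambda}\,\frac{d}{d\veps} f_\veps(\lambda,\lambda_*).
\end{equation*}
Then I would compute $\frac{d}{d\veps}f_\veps(\lambda,\lambda_*)$ explicitly in each of the three regimes of Proposition \ref{prop:fe_spherical}, using the fact that $\rho_\veps$ is the free convolution of $\rho$ with a semicircle of radius $2\veps$ (so $R_{\rho_\veps}(t) = R_\rho(t) + \veps t$), together with the chain rule applied to $K_{\rho_\veps}$ and $\overline\gamma_\veps$. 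Finally, I would take $\veps \downarrow 0$: here Assumption \ref{assump:dist_decay} guarantees continuity of the relevant spectral quantities (in particular $\hmax_\veps \to \hmax$ and the phase-transition thresholds converge), so that $\lim_{\veps\to 0}\lim_{N\to\infty}\E\thermal{Q_N}_\veps = Q(\lambda,\lambda_*)$, and a separate continuity/monotonicity argument in $\veps$ is needed to swap the order of the two limits and conclude $\lim_{N\to\infty}\E\thermal{Q_N}_0 = Q(\lambda,\lambda_*)$.

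The main obstacle I anticipate is the final interchange of the limits $N\to\infty$ and $\veps\to 0$: unlike the magnetization case (Lemma \ref{lem:mag_limit}), where one works directly at $\veps=0$, here the Gaussian integration-by-parts genuinely requires $\veps>0$, so one must control $\E\thermal{Q_N}_\veps$ uniformly in $N$ as $\veps\to 0$. I expect this to hinge on a monotonicity or Lipschitz estimate for $\veps \mapsto \thermal{Q_N}_\veps$ (or for the free energy, via convexity) together with the continuity of $\veps \mapsto Q(\lambda,\lambda_*)$ implied by Assumption \ref{assump:dist_decay} — this is exactly the point where the phrase ``by using similar techniques as in \cite{BarbierMacris2019}'' in the main text is doing work, and where the bulk of the technical care will be needed. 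A secondary, more bookkeeping-type difficulty is verifying that the replica-diagonal terms produced by the integration by parts are genuinely negligible and correctly normalized, which requires the spherical constraint to be used at the right places.
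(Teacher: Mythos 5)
Your high-level strategy (differentiate the $\veps$-regularized free energy, Gaussian integration by parts, convexity plus Proposition \ref{prop:fe_spherical}, then evaluate $\partial_\veps f_\veps$ in the three regimes using the free convolution with the semicircle and Assumption \ref{assump:dist_decay}) is indeed the paper's route, but your central identity is wrong as written, and the error is structural rather than a constant to be ``tracked''. The integration by parts gives $\frac{d}{d\sqrt\veps}\frac1N\E\ln I_N(\frac{\sqrt\lambda}{2},\bY_\veps)=\frac{\lambda\sqrt\veps}{2}\big(1-\E\thermal{Q_N}_\veps\big)$: the two-replica overlap enters with a \emph{minus} sign, and the ``$1$'' (coming from the self terms $\thermal{x_i^2x_j^2}$, which sum to $\|\bx\|^4/N^2=1$ exactly by the spherical constraint) is a leading-order contribution, not an $O(1/N)$ remainder. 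With your signs you would get, e.g., $\lim_N\E\thermal{Q_N}_0=\frac{8}{\lambda}\cdot\frac{\lambda}{4}=2$ in the regime where $f_\veps=\frac12\int_0^{\sqrt\lambda}R_{\rho_\veps}$, which is impossible; the correct identity gives $0$ there, matching $Q(\lambda,\lambda_*)$. Moreover, this identity dissolves the interchange-of-limits issue you flag as the main obstacle: since the exponent depends on $\veps$ only through $\sqrt{\lambda\veps}\,\bx^\intercal\bW\bx/2$, the factor $\sqrt\veps$ produced by the integration by parts cancels against the chain rule $d/d\veps=(2\sqrt\veps)^{-1}\,d/d\sqrt\veps$, so that $\frac{d}{d\veps}\frac1N\E\ln I_N\big|_{\veps=0}=\frac{\lambda}{4}\big(1-\E\thermal{Q_N}_0\big)$ involves the \emph{original} measure. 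Convexity then yields $\lim_N\E\thermal{Q_N}_0=1-\frac4\lambda\,\partial_\veps f_\veps|_{\veps=0}$ directly, with no uniform-in-$N$ control of $\E\thermal{Q_N}_\veps$ as $\veps\to0$ and no swap of the two limits; the invocation of \cite{BarbierMacris2019} in the main text concerns the a.s. convergence of $\|M_{\rm mis}\|_{\rm F}/N$, not this lemma.

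The second genuine gap is the evaluation of $\partial_\veps f_\veps|_{\veps=0}$ in the supercritical regime. The additivity $R_{\rho_\veps}(t)=R_\rho(t)+\veps t$ handles $\partial_\veps K_{\rho_\veps}=x$ and the subcritical regime, but $g_{\lambda,\veps}$ also contains the log-potential $-\frac12\int\rho_\veps(d\gamma)\ln(x-\gamma)$, whose $\veps$-derivative is not given by the chain rule you invoke. The paper obtains it by integrating Burgers' equation $\partial_\veps H_{\rho_\veps}=-\frac12\partial_x H_{\rho_\veps}^2$ (Dyson Brownian motion), which determines it only up to an unknown function $c(\veps)$; pinning down $c(0)=0$ requires a separate argument exploiting the a priori bound $0\le Q_N\le1$ at large $\lambda$ with $\lambda_*$ fixed. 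Likewise, in the regime $\hmax\sqrt{\lambda_*}<1$, $\sqrt\lambda>\hmax$ one needs $\partial_\veps\overline{\gamma}_\veps|_{\veps=0}=\hmax$, and this is where Assumption \ref{assump:dist_decay} actually does its work (via $\lim_{x\downarrow\hmax_\veps}K'_{\rho_\veps}(x)=0$), not merely as a continuity statement about thresholds. Without these ingredients your plan does not produce the explicit expression $Q(\lambda,\lambda_*)$.
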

\begin{proof}
    Note that in the exponent of the spherical integral \eqref{logPshIntG} the only term in which $\veps$ appears (when writing $\bY_\veps$ explicitly in terms of $\bW$) is $\sqrt{\lambda \veps}\, \bx^\intercal \bW \bx/2$. Then, we have that for every $N\geq1$
    \begin{equation}
        \frac{d}{d \sqrt{\veps}} \frac1N\mathbb{E}\ln I_N\Big(\frac{\sqrt{\lambda}}2, {\bY_\epsilon}\Big)\Big|_{\epsilon=0} = \frac{\sqrt{\lambda}}{2N}\,\E\thermal{\bx^\intercal \bW \bx}_0.\label{touseGPP}
    \end{equation}
Furthermore,
    \begin{align}
     \frac{d^2}{(d \sqrt{\veps})^2} \mathbb{E}\ln I_N\Big(\frac{\sqrt{\lambda}}2, {\bY_\epsilon}\Big) = \frac{\lambda}{4}\E\big(\thermal{(\bx^\intercal \bW \bx)^2}_\epsilon-\thermal{\bx^\intercal \bW \bx}_\epsilon^2\big)\ge 0.   
    \end{align}
    Since $\bW$ is Wigner, its elements are distributed as independent Gaussian random variables (up to symmetry) with variance $1/N$ for the off-diagonal elements and $2/N$ on the diagonal. Using Gaussian integration by parts\footnote{The used formula here is $\EE Zf(Z)=\sigma^2\EE f'(Z)$ for $Z\sim \mathcal{N}(0,\sigma^2)$.} in the r.h.s. of \eqref{touseGPP}, we get that
    \begin{align}
     \frac{d}{d \sqrt{\veps}} \frac1N\mathbb{E}\ln I_N\Big(\frac{\sqrt{\lambda}}2, {\bY_\epsilon}\Big) = \frac{\lambda\sqrt{ \veps}}{2}(1 - \E\thermal{Q_N}_\veps)   ,
    \end{align} 
    and thus 
    \begin{align}
     \frac{d}{d {\veps}} \frac1N\mathbb{E}\ln I_N\Big(\frac{\sqrt{\lambda}}2, {\bY_\epsilon}\Big)\Big|_{\veps=0} = \frac{\lambda}{4}(1 - \E\thermal{Q_N}_0)   .
    \end{align}    
    Hence, by the convexity of $\mathbb{E}\ln I_N(\frac{\sqrt{\lambda}}2, {\bY_\epsilon})$ w.r.t. $\veps$ and Proposition \ref{prop:fe_spherical}, we obtain that
    \begin{equation}\label{eq:ovlp_lim}
        \lim_{N\to+\infty} \E\thermal{Q_N}_0 = 1 - \frac{4}{\lambda} \frac{d f_\veps(\lambda,\lambda_*)}{d \veps}\Big|_{\veps=0}.
    \end{equation}
    To find the asymptotic value of $\E\thermal{Q_N}_0$, we are thus left to compute $d_\veps f_\veps(\lambda,\lambda_*)$. We will then need to consider the three regimes of Proposition \ref{prop:fe_spherical}. 
    
    Firstly, if $\hmax\sqrt{\lambda_*}  \geq 1$ and $\lambda \lambda_* > 1$, then
    \begin{equation*}
        \begin{split}
            \frac{d f_\veps(\lambda,\lambda_*)}{d \veps} \Big|_{\veps=0} & = \frac{d }{d\veps}g_{\lambda,\veps}(K_{\rho_\veps}(\lambda_*^{-1/2})) \Big|_{\veps=0} \\
            & = \frac{\partial g_{\lambda,\veps}(x)}{\partial\veps}\Big|_{\veps=0,x=K_{\rho_0}(\lambda_*^{-1/2})} + \frac{d g_{\lambda,\veps}(x)}{d x}\Big|_{\veps=0,x=K_{\rho_0}(\lambda_*^{-1/2})} \frac{d K_{\rho_\veps}(\lambda_*^{-1/2})}{d\veps}\Big|_{\veps=0}.
        \end{split}
    \end{equation*}
    For computing $\partial_\veps g_{\lambda,\veps}(x)$, note that the only term of $g_{\lambda,\veps}$ depending on $\veps$ is $$\ell(x,\veps) := - \frac{1}{2} \int d\rho_\veps(\gamma) \ln(x-\gamma),$$ for which we have $d_x \ell(x,\veps) = - \frac12 H_{\rho_\veps}(x)$. Thus, for any appropriate $x_0$, $$\ell(x,\veps)=- \frac12 \int^x_{x_0} H_{\rho_\veps}(y) dy +C(x_0,\epsilon),$$ where $C(x_0, \epsilon)$ is a constant depending only on $x_0$ and $\epsilon$. Furthermore, by the Dyson Brownian motion characterization of the eigenvalues of the matrix $\bZ + \sqrt{\veps}\, \bW$, we have that the limiting Stieltjes transform $H_{\rho_\veps}(x)$ is a solution of Burger's equation (see for example \cite[Proposition 4.3.10]{anderson2010introduction}); that is,
    \begin{equation}\label{eq:burguers}
        \frac{d H_{\rho_\veps}(x)}{d \veps} = - H_{\rho_\veps}(x) \frac{d H_{\rho_\veps}(x)}{d x} = - \frac{1}{2} \frac{d H^2_{\rho_\veps}(x)}{d x}.
    \end{equation}
    We then have
    \begin{equation}\label{eq:deriv_eps}
        \frac{d \ell(x,\veps)}{d\veps} = - \frac12 \int^x_{x_0} \frac{d H_{\rho_\veps}(y)}{d\veps} dy +\frac{d C(x_0,\epsilon)}{d\veps}= \frac{1}{4} H_{\rho_\veps}^2(x) - \frac{1}{4} H_{\rho_\veps}^2(x_0)+\frac{d C(x_0,\epsilon)}{d\veps}.
    \end{equation}
    As the left hand-side is independent of $x_0$, $- \frac{1}{4} H_{\rho_\veps}^2(x_0)+d_\veps C(x_0,\epsilon)$ is some function $c(\veps)$ of $\veps$ alone. This gives that
    \begin{equation}\label{eq:partial1}
        \frac{\partial g_{\lambda,\veps}(x)}{\partial \veps}\Big|_{\veps=0,x=K_{\rho_0}(\lambda_*^{-1/2})} = \frac{1}{4} H_{\rho_0}^2(K_{\rho_0}(\lambda_*^{-1/2})) + c(0) = \frac{1}{4\lambda_*} + c(0).
    \end{equation}
    Furthermore, 
    \begin{equation}\label{eq:deriv_x}
        \frac{d g_{\lambda,\veps}(x)}{d x} = \frac{\sqrt{\lambda}}{2} - \frac{1}{2} H_{\rho_\veps}(x),
    \end{equation}
    and $d_\veps K_{\rho_\veps}(x) = x$ (which follows from the additivity of the $R$-transform and the fact that the $R$-transform associated with the asymptotic spectral density of the Wigner part $\sqrt{\epsilon}\, \bW$ is $\veps x$). Then,
    \begin{equation}\label{eq:partial2}
        \frac{d g_{\lambda,\veps}(x)}{d x}\Big|_{\veps=0,x=K_{\rho_0}(\lambda_*^{-1/2})} \frac{d K_{\rho_\veps}(\lambda_*^{-1/2})}{d\veps}\Big|_{\veps=0} =\frac{1}{2}\Big(\sqrt{\frac{\lambda}{\lambda_*}} - \frac{1}{\lambda_*}\Big).
    \end{equation}
    Combining equations \eqref{eq:ovlp_lim}, \eqref{eq:partial1}, and \eqref{eq:partial2} we get that, if $\hmax \sqrt{\lambda_*}   \geq 1$ and $\lambda\lambda_* > 1$, then
    \begin{equation*}
         \lim_{N\to+\infty}\E\thermal{Q_N}_0 = 1-\frac{2}{\sqrt{\lambda\lambda_*} } + \frac 1{\lambda \lambda_*} + c(0)= \Big(1-\frac{1}{\sqrt{\lambda\lambda_*} }\Big)^2 + c(0).
    \end{equation*}
    We now show that $c(0)= 0$. First note that $c(0)$ does not depend on $\lambda$ nor $\lambda_*$. Assume that $c(0)$ is positive and fix $\lambda_*$ s.t. $\hmax \sqrt{\lambda_*} > 1$. Choose $\lambda$ large enough so that we simultaneously have $\lambda \lambda_* > 1$ and $(1-1/\sqrt{\lambda\lambda_*})^2 + c(0) > 1$. This can be done because $c(0) > 0$. For these values of $\lambda$ and $\lambda_*$, we thus have that $\lim_{N\to+\infty}\E\thermal{Q_N}_0 > 1$. However, this is a contradiction, since for every $\lambda,\lambda_*>0$ and $N\geq1$ a.s. $0\leq Q_N \leq 1$. We therefore have that $c(0)$ is non-positive. In a similar way, we can prove that $c(0)$ is non-negative. Then, we conclude that, if $\hmax \sqrt{\lambda_*}   \geq 1$ and $\lambda\lambda_* > 1$,
        \begin{equation*}
         \lim_{N\to+\infty}\E\thermal{Q_N}_0 = \Big(1-\frac{1}{\sqrt{\lambda\lambda_*} }\Big)^2.
    \end{equation*}

    Secondly, we will now establish the limit of the overlap when $\hmax\sqrt{\lambda_*} < 1$ and $\sqrt{\lambda} > \hmax$. In this region,
    \begin{equation}\label{eq:region_2}
        \frac{d f_\veps(\lambda,\lambda_*)}{d\veps} \Big|_{\veps=0}= \frac{d }{d\veps}g_{\lambda,\veps}(\overline{\gamma}_\veps)\Big|_{\veps=0} = \frac{\partial g_{\lambda,\veps}(x)}{\partial\veps}\Big|_{\veps=0,x=\overline{\gamma}_0} + \frac{\partial g_{\lambda,\veps}(x)}{\partial x}\Big|_{\veps=0,x=\overline{\gamma}_0} \frac{\partial \overline{\gamma}_\veps}{\partial\veps}\Big|_{\veps=0}.
    \end{equation}
    The partial derivatives $\partial_\veps g_{\lambda,\veps}(x)$ and $\partial_x g_{\lambda,\veps}(x)$ are obtained as before. Therefore, we only need to compute $\partial_\veps \overline{\gamma}|_{\veps=0}$. We will prove that as $\veps\to0$, $\overline{\gamma}_\veps = \overline{\gamma}_0 + \hmax_0 \veps + o(\veps)$. As a consequence of the large deviation principle of \cite[Theorem 2.5]{guionnet2021large}, $\overline{\gamma}_\veps$ is equal to the supremum of the compact support $K_\veps$ of $\rho_\veps$. By Assumption \ref{assump:dist_decay}, for all $\veps >0$ we have $\lim_{z\downarrow\overline{\gamma}_\veps} H'_{\rho_\veps}(z) = -\infty$. This implies that $\lim_{x\downarrow \hmax_\veps}K'_{\rho_\veps}(x) = 0$. Thus, $\hmax_\veps$ is a solution to the equation $K'_{\rho_0}(x) + \veps = 0$ (again, this can be seen by additivity of the $R$-transform and its link to $K_{\rho_\veps}$). As $K'_{\rho_0}$ is analytic and monotonously increasing, the solution is unique and depends smoothly on $\veps$. Thus, $\hmax_\veps = \hmax_0 + \mathcal{O}(\veps)$, which means that
    \begin{equation*}
        \overline{\gamma}_\veps =\lim_{x\downarrow \hmax_\veps}  K_{\rho_\veps}(x)= \lim_{x\downarrow \hmax_0} K_{\rho_0}(x) + \veps\hmax_0 + o(\veps) = \overline{\gamma}_0 + \veps\hmax_0 + o(\veps),
    \end{equation*}
    where we used that $\lim_{x\downarrow \hmax_0}K'_{\rho_0}(x) = 0$. This proves that $$\partial_\veps \overline{\gamma}_\veps|_{\veps=0} = \hmax_0.$$ Combining this with equations \eqref{eq:ovlp_lim}, \eqref{eq:deriv_eps}, \eqref{eq:deriv_x}, and \eqref{eq:region_2} gives the result in this region. 
    
    Finally, by Proposition \ref{prop:fe_spherical} and \eqref{eq:ovlp_lim}, if $(\lambda,\lambda_*)$ does not verify either of the conditions considered above, then by dominated convergence
    \begin{equation*}
        \lim_{N\to+\infty} \E\thermal{Q_N}_0 = 1- \frac{2}{\lambda}\int_0^{\sqrt{\lambda}}\partial_\veps R_{\rho_\veps}(x) dx = 1- \frac{2}{\lambda}\int_0^{\sqrt{\lambda}} x \ dx = 0.
    \end{equation*}
    This finishes the proof that $\lim_{N\to+\infty} \E\thermal{Q_N}_0 = Q(\lambda,\lambda_*)$.
\end{proof}

\begin{proof}[Proof of Theorem \ref{thm:lim_mse}]
    In order to establish the formula for the MSE, it suffices to use Lemmas \ref{lem:mag_limit} and \ref{lem:ovlap_limit} combined with the observation that
    \begin{equation*}
        \frac{1}{2N^2}\E \|\langle\bx \bx^\intercal\rangle_0 - \bX \bX^\intercal\|_{\rm F}^2 = \frac{1}{2}\big(1 + \E\thermal{Q_N}_0 - 2 \E\thermal{M_N}_0\big).
    \end{equation*}
\end{proof}

\section{Proofs for Approximate Message Passing}\label{app:proofAMP}
\subsection{Auxiliary AMP}\label{subsec:auxAMP}~\\ \vspace{-14pt}

The iterates of the auxiliary AMP are denoted by $\tilde{\bz}^t, \tilde{\bx}^t\in\mathbb R^N$, and they are computed as follows, for $t\ge 1$:
\begin{equation}\label{eq:auxAMP}
    \tilde{\bz}^t = \bZ \tilde{\bx}^t - \sum_{i=1}^t {\sf b}_{t, i}\tilde{\bx}^i, \quad \tilde{\bx}^{t+1} = \tilde{h}_{t+1}(\tilde{\bz}^1, \ldots, \tilde{\bz}^t, \hat{\bx}^1, \bX).
\end{equation}
The iteration \eqref{eq:auxAMP} is initialized with $\tilde{\bx}^1=\hat{\bx}^1$, where $\hat{\bx}^1$ is also the initialization of the true AMP (see \eqref{eq:AMPinit}). For $t\ge 1$, the functions $\tilde{h}_{t+1}:\mathbb R^{t+2}\to\mathbb R$ are applied component-wise, and they are recursively defined as
\begin{equation}\label{eq:deftildeh}
    \begin{split}
        \tilde{h}_{t+1}(z_1, \ldots, z_t, \hat{x}_1, x) = h_{t+1}\bigg(z_t+(\bB_t)_{t, 1} \hat{x}_1 &+ \sum_{i=2}^t (\bB_t)_{t, i}\tilde{h}_{i}\left(z_1, \ldots, z_{i-1}, \hat{x}_1, x\right)+\mu_t x\\
        &-\bar{\beta}_t \tilde{h}_{t-1}\left(z_1, \ldots, z_{t-2}, \hat{x}_1, x\right)\bigg).
    \end{split}
\end{equation}
The idea is that the choice \eqref{eq:deftildeh} for the denoisers $\{\tilde{h}_{t+1}\}_{t\ge 1}$ allows to ``correct'' for the mismatch and compensate for the wrong Onsager corrections in \eqref{eq:AMP}. Here, $h_{t+1}$ is the denoiser of the true AMP (see \eqref{eq:AMP}), and $(\mu_t, \bar{\beta}_t, \bB_t)$ come from the state evolution recursion of the true AMP: $\mu_1$ is given by \eqref{eq:SEinit}, and, for $t\ge 2$, $\mu_t$ is given by \eqref{eq:SEmu}; $\bar{\beta}_1=0$ and, for $t\ge 2$, $\bar{\beta}_t=\mathbb E[h_t'(x_{t-1})]$, where the law of $x_{t-1}$ is given by \eqref{eq:SElaws}; and $\bB_t$ is defined via \eqref{eq:defB}. We now discuss how to obtain the coefficients $\{{\sf b}_{t, i}\}_{i=1}^t$. Let us define the matrix $\bPhi_{t}\in\mathbb R^{t\times t}$ as
\begin{equation}\label{eq:defbPhi}
        (\bPhi_{t})_{i, j}=0, \quad \mbox{ for }i\le j, \qquad (\bPhi_{t})_{i, j}=\langle\partial_j \tilde{\bx}^i\rangle, \quad \mbox{ for }i> j,
\end{equation}
where, for $j<i$, the vector $\langle\partial_j \tilde{\bx}^i\rangle\in \mathbb R^N$ denotes the partial derivative of $\tilde{h}_i:\mathbb R^{i+1}\to \mathbb R$ with respect to the $j$-th input (applied component-wise). Then, the vector $({\sf b}_{t, 1}, \ldots, {\sf b}_{t, t})$ is given by the last row of the matrix $\tilde{\bB}_{t}\in\mathbb R^{t\times t}$ defined as
\begin{equation}\label{eq:deftildeB}
    \tilde{\bB}_{t} = \sum_{j=0}^{t-1} \kappa_{j+1}\bPhi_{t}^j,
\end{equation}
where $\{\kappa_k\}_{k\ge 1}$ denotes the sequence of free cumulants associated to the matrix $\bZ$.

\subsection{State evolution of auxiliary AMP}\label{subsec:auxAMPSE}~\\ \vspace{-14pt}

Using Theorem 2.3 in \cite{zhong2021approximate}, we provide a state evolution result for the auxiliary AMP \eqref{eq:auxAMP}. In particular, we show in Proposition \ref{prop:auxSE} that the joint empirical distribution of $(\tilde{\bz}^1, \ldots, \tilde{\bz}^t)$ converges to a $t$-dimensional Gaussian $\mathcal N(\bzero, \tilde{\bSigma}_t)$. 

The covariance matrices $\{\tilde{\bSigma}_t\}_{t\ge 1}$ are defined recursively, starting with $\tilde{\bSigma}_1=\bar{\kappa}_2\mathbb E[\hat{x}_1^2]$, where $\hat{x}_1$ is defined in \eqref{eq:AMPinit}. Given $\tilde{\bSigma}_t$, let 
\begin{equation}\label{eq:SElawsaux}
    \begin{split}
        (\tilde{z}_1&, \ldots, \tilde{z}_t)\sim\mathcal N(\bzero, \tilde{\bSigma}_t) \mbox{ and independent of }(X, \hat{x}_1),\\
         \tilde{x}_{s} &= \tilde{h}_{s}\left(\tilde{z}_1, \ldots, \tilde{z}_{s-1}, \hat{x}_{1}, X\right), \quad \mbox{ for } s\in \{2, \ldots, t+1\},
    \end{split}
\end{equation}
where $\tilde{h}_s$ is defined via \eqref{eq:deftildeh} and we set $\tilde{x}_{1}=\hat{x}_1$. Let $\tilde{\bPhi}_{t+1}, \tilde{\bDelta}_{t+1}\in\mathbb R^{(t+1)\times (t+1)}$ be matrices with entries given by  
\begin{equation}
    \begin{split}
        (\tilde{\bPhi}_{t+1})_{i, j}&=0, \quad \mbox{ for }i\le j,\qquad\qquad (\tilde{\bPhi}_{t+1})_{i, j}=\mathbb E[\partial_j \tilde{x}_i], \quad \mbox{ for }i> j,\\
        (\tilde{\bDelta}_{t+1})_{i, j} &= \mathbb E[\tilde{x}_{i}\,\tilde{x}_{j}], \quad 1\le i, j\le t+1,
    \end{split}
\end{equation}
where $\partial_j \tilde{x}_i$ denotes the partial derivative $\partial_{z_j} \tilde{h}_i(\tilde{z}_1, \ldots, \tilde{z}_{i-1}, \hat{x}_1, X)$. Then, we compute the covariance matrix $\tilde{\bSigma}_{t+1}$ as
\begin{equation}\label{eq:defSigmat}
    \tilde{\bSigma}_{t+1} = \sum_{j=0}^{2t}\bar{\kappa}_{j+2}\tilde{\bTheta}_{t+1}^{(j)}, \qquad \mbox{with}\quad\tilde{\bTheta}_{t+1}^{(j)} = \sum_{i=0}^j (\tilde{\bPhi}_{t+1})^i\tilde{\bDelta}_{t+1}(\tilde{\bPhi}_{t+1}^\intercal)^{j-i}.
\end{equation}
It can be verified that the $t\times t$ top left sub-matrix of $\tilde{\bSigma}_{t+1}$ is given by $\tilde{\bSigma}_t$.

\begin{proposition}[State evolution for auxiliary AMP]
    Consider the auxiliary AMP in \eqref{eq:auxAMP} and the state evolution random variables defined in \eqref{eq:SElawsaux}. Let $\tilde{\psi}:\mathbb R^{2t+2} \to \mathbb R$ be any pseudo-Lipschitz functions of order 2. Then for each $t\ge 1$, we almost surely have
    \begin{align}
    & \lim_{N \to \infty}  \frac{1}{N} \sum_{i=1}^N  \tilde{\psi}\left((\tilde{\bz}^1)_i, \ldots, (\tilde{\bz}^t)_i, (\tilde{\bx}^1)_i, \ldots, (\tilde{\bx}^{t+1})_i, (\bX)_i\right) 
    = \E[ \tilde{\psi}(\tilde{z}_1, \ldots \tilde{z}_t, \tilde{x}_1, \ldots, \tilde{x}_{t+1}, X) ]. \label{eq:PL2aux}
    \end{align}
    Equivalently, as $N\to\infty$, almost surely:
    \begin{align*}
        & (\tilde{\bz}^1, \ldots, \tilde{\bz}^{t}, \,  \tilde{\bx}^1, \ldots, \tilde{\bx}^{t+1}, \, \bX) \, \stackrel{\mathclap{W_2}}{\longrightarrow}  \,  (\tilde{z}_1, \ldots, \tilde{z}_t, \, \tilde{x}_1, \ldots, \tilde{x}_{t+1}, \, X).
    \end{align*}
    \label{prop:auxSE}
\end{proposition}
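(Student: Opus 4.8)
\textbf{Proof strategy for Proposition \ref{prop:auxSE}.}
The plan is to recognize the auxiliary AMP \eqref{eq:auxAMP} as a particular instance of the abstract AMP iteration for rotationally invariant matrices of Theorem 2.3 in \cite{zhong2021approximate}, and then to check that the state evolution produced by that theorem reduces to the explicit recursion \eqref{eq:SElawsaux}--\eqref{eq:defSigmat}. In the abstract framework one iterates $\bz^t=\bZ\bx^t-\sum_{i\le t}{\sf b}_{t,i}\bx^i$ and $\bx^{t+1}=g_{t+1}(\bz^1,\ldots,\bz^t;\,\text{side info})$, where the denoisers $g_{t+1}$ are Lipschitz and may depend on a finite tuple of side-information vectors whose joint empirical law converges in $W_2$, and the Onsager coefficients $\{{\sf b}_{t,i}\}$ are built from the free cumulants of $\bZ$ and the empirical derivative matrix exactly as in \eqref{eq:defbPhi}--\eqref{eq:deftildeB}. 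I would apply this with $\bx^t=\tilde{\bx}^t$, $g_{t+1}=\tilde h_{t+1}$, and side information $(\hat{\bx}^1,\bX)$; note that $\bZ$ is independent of both $\hat{\bx}^1$ (by \eqref{eq:AMPinit}) and $\bX$ (by Assumption \ref{assump:noise_signal}), and that \eqref{eq:AMPinit} provides $(\bX,\hat{\bx}^1)\stackrel{W_2}{\longrightarrow}(X,\hat x_1)$.

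Next I would verify the hypotheses of Theorem 2.3 in \cite{zhong2021approximate}. First, by Assumption \ref{assump:noise_signal} the matrix $\bZ$ is rotationally invariant, its empirical spectral measure converges weakly to the compactly supported $\rho$, and $\overline{\gamma}_N,\underline{\gamma}_N$ converge a.s.\ to the edges of $K$, so that $\|\bZ\|_{\rm op}$ is a.s.\ bounded and the finite-$N$ free cumulants $\kappa_k$ converge to the free cumulants $\bar{\kappa}_k$ of $\rho$. Second, the side information converges in $W_2$, as noted above. Third, the denoisers $\tilde h_{t+1}$ are Lipschitz: this follows by induction on $t$ from \eqref{eq:deftildeh}, since $\tilde h_{t+1}$ is the composition of the Lipschitz function $h_{t+1}$ (Assumption \ref{ass:AMP}) with an affine combination of a coordinate, the side-information inputs, and the lower-order denoisers $\tilde h_2,\ldots,\tilde h_t$, which are Lipschitz by the inductive hypothesis. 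Fourth, the a.e.\ continuity of the partial derivatives of the $\tilde h_{t+1}$ needed for the empirical Onsager coefficients and the matrix $\bPhi_t$ in \eqref{eq:defbPhi} to converge to their deterministic counterparts follows from Assumption \ref{ass:AMP} via the chain rule applied to \eqref{eq:deftildeh}.

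With the hypotheses in hand, Theorem 2.3 in \cite{zhong2021approximate} yields that, almost surely, the joint empirical law of $(\tilde{\bz}^1,\ldots,\tilde{\bz}^t,\tilde{\bx}^1,\ldots,\tilde{\bx}^{t+1},\bX)$ converges in the ${\rm PL}(2)$ sense to $(\hat z_1,\ldots,\hat z_t,\tilde x_1,\ldots,\tilde x_{t+1},X)$ with $(\hat z_1,\ldots,\hat z_t)\sim\mathcal N(\bzero,\hat{\bSigma}_t)$ independent of $(X,\hat x_1)$, where $\hat{\bSigma}_t$ and the accompanying derivative and covariance matrices obey the abstract state evolution recursion. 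It then remains to verify, by induction on $t$, that $\hat{\bSigma}_t=\tilde{\bSigma}_t$ and that the abstract derivative and covariance matrices coincide with $\tilde{\bPhi}_t$ and $\tilde{\bDelta}_t$. The base case $\hat{\bSigma}_1=\bar{\kappa}_2\,\mathbb E[\hat x_1^2]=\tilde{\bSigma}_1$ is immediate; in the inductive step, substituting the definition \eqref{eq:deftildeh} and using $\kappa_k\to\bar{\kappa}_k$, the abstract update for $\hat{\bSigma}_{t+1}$ becomes exactly \eqref{eq:defSigmat}, the consistency of the recursion being guaranteed by the fact (noted after \eqref{eq:defSigmat}) that the top-left $t\times t$ block of $\tilde{\bSigma}_{t+1}$ is $\tilde{\bSigma}_t$. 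This establishes \eqref{eq:PL2aux}, and the equivalent $W_2$ statement follows from Definition 6.7 and Theorem 6.8 of \cite{villani2008optimal}, as in the remark after Theorem \ref{thm:SE}.

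The main obstacle is not the invocation of \cite{zhong2021approximate} but rather handling the \emph{history-dependent, recursively defined} denoisers $\tilde h_{t+1}$: one must check that the Lipschitz property and the a.e.\ differentiability propagate correctly through the nested composition in \eqref{eq:deftildeh}, and carry out the somewhat involved bookkeeping — keeping careful track of the side-information coordinates $\hat x_1$ and $X$ — needed to match the abstract state evolution output with the explicit recursion \eqref{eq:SElawsaux}--\eqref{eq:defSigmat}.
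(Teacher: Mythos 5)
Your proposal is correct and follows essentially the same route as the paper: the auxiliary AMP \eqref{eq:auxAMP} is recognized as an instance of the abstract AMP of Theorem 2.3 in \cite{zhong2021approximate}, whose assumptions are verified via Assumption \ref{assump:noise_signal}, the $W_2$ convergence of the side information $(\hat{\bx}^1,\bX)$ from \eqref{eq:AMPinit}, and the Lipschitz/a.e.-differentiability properties of the recursively defined $\tilde h_{t+1}$ inherited from Assumption \ref{ass:AMP}. The only difference is that you spell out the propagation of Lipschitzness through \eqref{eq:deftildeh} and the identification of the abstract state evolution with \eqref{eq:SElawsaux}--\eqref{eq:defSigmat}, which the paper treats as immediate since that recursion is precisely the abstract SE specialized to these denoisers.
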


\begin{proof}
The result follows from Theorem 2.3 in \cite{zhong2021approximate}. In fact, Assumption 2.1 of \cite{zhong2021approximate} holds because of the model assumptions on $\bZ$, Assumption 2.2(a) holds because $(\bX, \hat{\bx}^1)\stackrel{\mathclap{W_2}}{\longrightarrow} (X, \hat{x}_1)$ from \eqref{eq:AMPinit}, and Assumption 2.2(b) follows from the definition of $\tilde{h}_{t+1}$ in \eqref{eq:deftildeh} and our Assumption \ref{ass:AMP}. As the auxiliary AMP in \eqref{eq:auxAMP} is of the standard form for which the state evolution result of Theorem 2.3 in \cite{zhong2021approximate} holds, the proof is complete.
\end{proof}

\subsection{Proof of Theorem \ref{thm:SE}}\label{subsec:proof}~\\ \vspace{-14pt}

We start by presenting a useful technical lemma. 

\begin{lemma}
    \label{lem:lipderiv}
    Let $F : \mathbb R^t \to\mathbb R$ be a Lipschitz function, and let $\partial_k F$ denote its derivative with respect to the $k$-th argument, for $1 \le k \le t$. Assume that $\partial_k F$ is continuous almost everywhere in the $k$-th argument,  for each $k$. Let $(V_1^{(m)}, \ldots, V_t^{(m)})$  be a sequence of random vectors in $\mathbb R^t$ converging in distribution to the random vector $(V_1, \ldots,V_t)$ as $m \to \infty$. Furthermore, assume that the distribution of $(V_1, \ldots, V_t)$ is absolutely continuous with respect to the Lebesgue measure. Then, 
    \begin{equation}
         \lim_{m \to \infty}  \mathbb E[ \partial_k F(V_1^{(m)}, \ldots, V_t^{(m)})] = \mathbb E[ \partial_k F(V_1, \ldots, V_t) ], \qquad 1 \le k \le t.
    \end{equation}
\end{lemma}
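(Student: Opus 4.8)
The plan is to view the claim as an instance of the continuous mapping theorem for weakly convergent sequences, applied to the bounded function $\partial_k F$ whose set of discontinuities is negligible for the limit law, together with the standard upgrade from convergence in distribution of uniformly bounded random variables to convergence of their means.

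Concretely, I would proceed as follows. Since $F$ is Lipschitz with some constant $L$, its partial derivative $\partial_k F$ exists Lebesgue-almost everywhere and satisfies $|\partial_k F|\le L$ wherever defined, because $|F(v+he_k)-F(v)|\le L|h|$; so $\partial_k F$ is a bounded measurable function, which (using the a.e.-continuity hypothesis and, if needed, its Lebesgue-point representative) we may regard as a Borel function on $\mathbb R^t$ whose set of discontinuity points is contained in a Lebesgue-null set. Because $\mathrm{Law}(V_1,\dots,V_t)$ is absolutely continuous with respect to Lebesgue measure, this discontinuity set has $\mathrm{Law}(V_1,\dots,V_t)$-measure zero, i.e.\ $\partial_k F$ is continuous at $(V_1,\dots,V_t)$ almost surely. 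Now the mapping theorem, applied with the weakly convergent sequence $(V_1^{(m)},\dots,V_t^{(m)})$, gives that $\partial_k F(V_1^{(m)},\dots,V_t^{(m)})$ converges in distribution to $\partial_k F(V_1,\dots,V_t)$; and since these random variables are all bounded by $L$, their expectations converge as well (test against the bounded continuous map $x\mapsto\max(-L,\min(L,x))$, or invoke Skorokhod's representation and the bounded convergence theorem). This is exactly the claimed limit, the quantities $\mathbb E[\partial_k F(V_1^{(m)},\dots,V_t^{(m)})]$ being meaningful precisely under the implicit understanding that $\partial_k F$ is defined at $(V_1^{(m)},\dots,V_t^{(m)})$ with probability one, so that passing to an a.e.-equal Borel representative changes nothing.

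The main obstacle is measure-theoretic bookkeeping rather than any analytic difficulty. Two features force the argument to be routed entirely through the limiting law: $\partial_k F$ is only defined Lebesgue-a.e.\ (so one must fix a representative and know its discontinuities are negligible), and the approximating laws $\mathrm{Law}(V_1^{(m)},\dots,V_t^{(m)})$ are allowed to be singular (so one cannot, for instance, run the naive difference-quotient argument $D_h(v):=h^{-1}(F(v+he_k)-F(v))\to\partial_k F(v)$ under the $V^{(m)}$'s and then interchange the $h\to0$ and $m\to\infty$ limits). Absolute continuity of $\mathrm{Law}(V_1,\dots,V_t)$ is the only regularity available, and it is exactly what transfers the Lebesgue-a.e.\ continuity of $\partial_k F$ into $\mathrm{Law}(V_1,\dots,V_t)$-a.e.\ continuity, rendering the continuous mapping theorem applicable.
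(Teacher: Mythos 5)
Your plan hinges on applying the continuous mapping theorem directly to $\partial_k F$, and this is where the argument breaks with the stated hypotheses. The lemma only assumes that $\partial_k F$ is continuous almost everywhere \emph{in the $k$-th argument}, i.e., continuity of the one-variable map $s\mapsto \partial_k F(v_1,\ldots,s,\ldots,v_t)$ at almost every point; the continuous mapping theorem instead requires the set of discontinuity points of $\partial_k F$ \emph{as a function on $\mathbb R^t$} to be null under the law of $(V_1,\ldots,V_t)$. Separate continuity in one coordinate does not imply joint continuity outside a Lebesgue-null set, and your parenthetical fix --- passing to a ``Lebesgue-point representative'' --- does not manufacture such a function: all a.e.-equal versions of a bounded measurable function have essentially the same oscillation, so in general no representative is continuous off a null set. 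Hence the key reduction ``Borel function whose set of discontinuity points is contained in a Lebesgue-null set'' is not available from the hypotheses. A second, related flaw: even granting such a representative $\tilde g$ with $\tilde g=\partial_k F$ Lebesgue-a.e., the quantities to be controlled are $\mathbb E[\partial_k F(V_1^{(m)},\ldots,V_t^{(m)})]$, and the prelimit laws are allowed to be singular and to charge exactly the null set where $\tilde g$ differs from $\partial_k F$ (at points where the derivative is perfectly well defined). So ``passing to an a.e.-equal Borel representative changes nothing'' is false at the prelimit level --- this is the very phenomenon you correctly invoke to explain why the naive difference-quotient interchange fails, which makes the write-up internally inconsistent on this point.

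By contrast, the proof the paper relies on (Lemma 6 of Bayati--Montanari for $t=2$, extended verbatim to $t>2$; see also Lemma 7.14 of the AMP tutorial of Feng et al.) never tests weak convergence against the discontinuous function $\partial_k F$ itself: it works with the bounded \emph{continuous} difference quotients $\delta^{-1}(F(v+\delta e_k)-F(v))=\delta^{-1}\int_0^\delta \partial_k F(v+s e_k)\,ds$, which are legitimate test functions for convergence in distribution, and then removes the regularization $\delta\to 0$ using precisely the a.e. continuity in the $k$-th argument together with the boundedness of $\partial_k F$ and the absolute continuity of the limit law. Your route would become correct only under the strictly stronger assumption that $\partial_k F$ (extended measurably where undefined) is jointly continuous outside a Lebesgue-null set, applied to the true derivative rather than a modified representative (e.g., via Skorokhod representation and bounded convergence); as the lemma is stated, the argument has a genuine gap.
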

The result was proved for $t=2$ in \cite[Lemma 6]{bayati2011dynamics}. The proof for $t > 2$ is essentially the same; see also \cite[Lemma 7.14]{feng2021unifying}.

At this point, we are ready to give the proof of Theorem \ref{thm:SE}.

\begin{proof}[Proof of Theorem \ref{thm:SE}]
The first step is to show the equivalence between the state evolution for the true AMP and the corresponding one for the auxiliary AMP. In particular, we prove that, for $t\ge 1$,
\begin{equation}\label{eq:equivSE}
    (\tilde{z}_1, \ldots, \tilde{z}_{t}) \stackrel{\rm d}{=} (z_1, \ldots, z_t) ,
\end{equation}
where the random variables on the left are defined in \eqref{eq:SElawsaux}, and the random variables on the right are defined in \eqref{eq:SElaws}. As $(\tilde{z}_1, \ldots, \tilde{z}_t)\sim\mathcal N(\bzero, \tilde{\bSigma}_t)$ and $(z_1, \ldots, z_t)\sim\mathcal N(\bzero, \bSigma_t)$, \eqref{eq:equivSE} is readily implied by 
\begin{equation}\label{eq:eqS}
    \tilde{\bSigma}_t=\bSigma_t, \qquad \mbox{ for all }t\ge 1.
\end{equation}
We now show that \eqref{eq:eqS} holds by induction. The base case ($t=1$) follows from the definitions of $\tilde{\bSigma}_1, \bSigma_1$. Towards induction, assume that $\tilde{\bSigma}_t=\bSigma_t$ for some $t\ge 1$. By comparing the definition of $\hat{x}_s$ in \eqref{eq:SElaws} with the definition of $\tilde{x}_s$ in \eqref{eq:SElawsaux} and by using the choice of $\tilde{h}_s$ in \eqref{eq:deftildeh} (for $s\in\{2, \ldots, t+1\}$), we readily obtain that $\Delta_{t+1}=\tilde{\Delta}_{t+1}$ and $\bar{\bPhi}_{t+1}=\tilde{\bPhi}_{t+1}$. Hence, by using \eqref{eq:defSigma} and \eqref{eq:defSigmat}, we have that $\tilde{\bSigma}_{t+1}=\bSigma_{t+1}$ and the proof of \eqref{eq:eqS} is complete. 

The second step is to show that, for any pseudo-Lipschitz function  $\psi: \mathbb R^{2t+2} \to \mathbb R$ of order 2, the following limit holds almost surely for $t \ge 1$: 
\begin{equation}\label{eq:PLsecondstep}
    \begin{split}
        \lim_{N \to \infty}  \Bigg\vert 
        \frac{1}{N}& \sum_{i=1}^{N} \psi\left((\bx^1)_i, \ldots, (\bx^t)_i, \,  (\hat{\bx}^1)_i, \ldots, (\hat{\bx}^{t+1})_i, \, (\bX)_i\right)
        \\
        & -   \frac{1}{N} \sum_{i=1}^{N} \psi\left((\bu^1)_i, \ldots, (\bu^t)_i, \,  (\tilde{\bx}^1)_i, \ldots, (\tilde{\bx}^{t+1})_i, \, (\bX)_i\right)  \Bigg \vert =0,
    \end{split}
\end{equation}
where, we have defined for $s\in \{1, \ldots, t\}$,
\begin{equation}\label{eq:defu}
    \bu^s = \tilde{\bz}^s+(\bB_s)_{s, 1} \tilde{\bx}^1 + \sum_{i=2}^s (\bB_s)_{s, i}\tilde{\bx}^i+\mu_s \bX-\bar{\beta}_s \tilde{\bx}^{s-1}.
\end{equation}
From here till the end of the argument, all the limits hold almost surely, and we use $C$ to denote a generic positive constant, which can change from line to line and is independent of $N$. By using that $\psi$ is pseudo-Lipschitz, we have that
\begin{equation}
    \begin{split}
         \Bigg\vert 
         &\frac{1}{N} \sum_{i=1}^{N} \psi\left((\bx^1)_i, \ldots, (\bx^t)_i, \,  (\hat{\bx}^1)_i, \ldots, (\hat{\bx}^{t+1})_i, \, (\bX)_i\right)
        \\
        & \hspace{3em}-   \frac{1}{N} \sum_{i=1}^{N} \psi\left((\bu^1)_i, \ldots, (\bu^t)_i, \,  (\tilde{\bx}^1)_i, \ldots, (\tilde{\bx}^{t+1})_i, \, (\bX)_i\right)  \Bigg \vert\\
        & \le \frac{C}{N} \sum_{i=1}^{N} \left( 1+ |(\bX)_i| +2|(\hat{\bx}^1)_i| +\sum_{\ell=1}^t \Big( |(\bx^\ell)_i| +|(\bu^{\ell})_i|+ |(\hat{\bx}^{\ell+1})_i|  + |(\tilde{\bx}^{\ell+1})_i|  \Big) \right) \\
        &\hspace{3em} \cdot \left( \sum_{\ell=1}^t  \left(|(\bx^\ell)_i - (\bu^\ell)_i |^2  +  |(\hat{\bx}^{\ell+1})_i - (\tilde{\bx}^{\ell+1})_i |^2\right) \right)^{1/2} \\
        & \leq C(4t+3)\left[ 1 + \frac{\| \bX \|^2}{N} +  
        \sum_{\ell=1}^t \Big( \frac{\|\bx^\ell \|^2}{N} + \frac{\|\bu^\ell \|^2}{N} + \frac{\| \hat{\bx}^{\ell+1} \|^2}{N} + 
        \frac{\| \tilde{\bx}^{\ell+1} \|^2}{N} \Big) \right]^{1/2} \\
        &\hspace{3em}\cdot \left( \sum_{\ell=1}^t \left(\frac{\| \bx^\ell - \bu^{\ell}\|^2}{N} + \frac{\|\hat{\bx}^{\ell+1} - \tilde{\bx}^{\ell+1}\|^2}{N} \, \right)\right)^{1/2},  \label{eq:xz_diff}
    \end{split}
\end{equation}
where the last step uses twice Cauchy-Schwarz inequality. We now inductively show that as $N \to \infty$: \emph{(i)} each of the terms in the last line of \eqref{eq:xz_diff} converges to zero, and \emph{(ii)} the terms within the square brackets in \eqref{eq:xz_diff} all converge to finite, deterministic limits. 

\noindent \underline{Base case ($t=1$).} We have that
\begin{equation}\label{eq:b1man}
    \begin{split}
        \bx^1-\bu^1 &= \bY\hat{\bx}^1-\tilde{\bz}^1-(\bB_1)_{1, 1}\tilde{\bx}^1-\mu_1\bX \\
        &=\bZ\hat{\bx}^1+\sqrt{\lambda_*}\frac{\langle \bX, \hat{\bx}^1\rangle}{N}\bX-\bZ\tilde{\bx}^1+{\sf b}_{1, 1}\tilde{\bx}^1-(\bB_1)_{1, 1}\tilde{\bx}^1-\mu_1\bX\\
        &=\sqrt{\lambda_*}\frac{\langle \bX, \hat{\bx}^1\rangle}{N}\bX+{\sf b}_{1, 1}\tilde{\bx}^1-(\bB_1)_{1, 1}\tilde{\bx}^1-\mu_1\bX,
    \end{split}
\end{equation}
where the first equality uses \eqref{eq:AMP} and \eqref{eq:defu}, the second equality uses \eqref{eq:spikedmodel} and \eqref{eq:auxAMP}, and the third equality uses that $\tilde{\bx}^1=\hat{\bx}^1$. Hence, by triangle inequality,
\begin{equation}\label{eq:base1}
    \begin{split}
        \frac{\|\bx^1-\bu^1\|^2}{N}&\le 2\left(\sqrt{\lambda_*}\frac{\langle \bX, \hat{\bx}^1\rangle}{N}-\mu_1\right)^2\frac{\|\bX\|^2}{N}+2\left({\sf b}_{1, 1}-(\bB_1)_{1, 1}\right)^2\frac{\|\tilde{\bx}^1\|^2}{N}\\
        &\le C\left(\left(\sqrt{\lambda_*}\frac{\langle \bX, \hat{\bx}^1\rangle}{N}-\mu_1\right)^2+\left({\sf b}_{1, 1}-(\bB_1)_{1, 1}\right)^2\right),
    \end{split}
\end{equation}
where the last inequality uses that $\tilde{\bx}^1=\hat{\bx}^1$ and that $(\bX, \hat{\bx}^1)$ converge in $W_2$ to random variables with finite second moments. By using \eqref{eq:AMPinit} and recalling that $\mu_1=\sqrt{\lambda_*}\epsilon$ (cf. \eqref{eq:SEinit}), we have 
\begin{equation}\label{eq:base2}
    \lim_{N\to\infty}\sqrt{\lambda_*}\frac{\langle \bX, \hat{\bx}^1\rangle}{N}=\sqrt{\lambda_*}\epsilon=\mu_1.
\end{equation}
Furthermore, note that  $(\bB_1)_{1, 1}=\bar{\kappa}_1$ (cf. \eqref{eq:SEinit}) and ${\sf b}_{1, 1}=\kappa_1$ (cf. \eqref{eq:deftildeB}). Hence, by Assumption \ref{assump:noise_signal}, $\kappa_1\to\bar{\kappa}_1$, as $N\to\infty$. Hence, ${\sf b}_{1, 1}\to (\bB_1)_{1, 1}$ and, by combining this observation with \eqref{eq:base1} and \eqref{eq:base2}, we obtain that 
\begin{equation}\label{eq:it1}
    \lim_{N\to\infty}\frac{\|\bx^1-\bu^1\|^2}{N}=0.
\end{equation}
Next, by expressing $\hat{\bx}^2$ via \eqref{eq:AMP} and $\tilde{\bx}^2$ via \eqref{eq:auxAMP}, \eqref{eq:deftildeh} and \eqref{eq:defu}, we have that
\begin{equation}
    \hat{\bx}^2-\tilde{\bx}^2=h_2(\bx^1)-h_2(\bu^1).
\end{equation}
Thus, as $h_2$ is Lipschitz, \eqref{eq:it1} immediately implies that
\begin{equation}
    \lim_{N\to\infty}\frac{\|\hat{\bx}^2-\tilde{\bx}^2\|^2}{N}=0.
\end{equation}
An application of the triangle inequality gives that, for any $i\ge 1$,
\begin{equation}\label{eq:tr1}
    \begin{split}
        \|\bu^i\|- \|\bx^i-\bu^i\|&\le \|\bx^i\|\le \|\bu^i\|+ \|\bx^i-\bu^i\|, \\
         \|\tilde{\bx}^{i+1}\|- \|\hat{\bx}^{i+1}-\tilde{\bx}^{i+1}\| &\le \|\hat{\bx}^{i+1}\|\le \|\tilde{\bx}^{i+1}\|+ \|\hat{\bx}^{i+1}-\tilde{\bx}^{i+1}\|.
    \end{split}
\end{equation}
Thus, by using \eqref{eq:tr1} with $i=1$ and Proposition \ref{prop:auxSE}, we obtain that
\begin{equation}
    \begin{split}
        &\lim_{N\to\infty}\frac{\|\bx^1\|^2}{N}=\lim_{N\to\infty}\frac{\|\bu^1\|^2}{N}=\mathbb E[(\tilde{z}_1+(\bB_1)_{1, 1}\hat{x}_1+\mu_1 X)^2], \\
        &\lim_{N\to\infty}\frac{\|\hat{\bx}^2\|^2}{N}=\lim_{N\to\infty}\frac{\|\tilde{\bx}^2\|^2}{N}=\mathbb E[(\tilde{x}_2)^2],
    \end{split}
\end{equation}
which concludes the base step.

\noindent \underline{Induction step.} Assume towards induction that
\begin{align}
    &\lim_{N\to\infty}\frac{\|\bx^{j}-\bu^{j}\|^2}{N}=0,\quad \mbox{ for }j\in\{1, \ldots, t\},\label{eq:ind0}\\
    &\lim_{N\to\infty}\frac{\|\hat{\bx}^{j}-\tilde{\bx}^{j}\|^2}{N}=0,\quad \mbox{ for }j\in\{2, \ldots, t+1\},\label{eq:ind1}\\
    &\lim_{N\to\infty}\frac{\|\bx^{j}\|^2}{N}=\lim_{N\to\infty}\frac{\|\bu^{j}\|^2}{N}\nonumber\\
    &\hspace{5em}=\mathbb E\Bigg[\left(\tilde{z}_{j}+(\bB_{j})_{j, 1}\hat{x}_1+\sum_{i=2}^{j}(\bB_{j})_{j, i}\tilde{x}_i+\mu_{j}X-\bar{\beta}_{j}\tilde{x}_{j-1}\right)^2\Bigg], \,\,\, \mbox{ for }j\in\{1, \ldots, t\},\label{eq:ind1bis}\\
    &\lim_{N\to\infty}\frac{\|\hat{\bx}^{j}\|^2}{N}=\lim_{N\to\infty}\frac{\|\tilde{\bx}^{j}\|^2}{N}=\mathbb E[\tilde{x}_j^2], \quad \mbox{ for }j\in\{2, \ldots, t+1\}.\label{eq:ind2}
\end{align}
We now show that the following limits hold:
\begin{align}
    &\lim_{N\to\infty}\frac{\|\bx^{t+1}-\bu^{t+1}\|^2}{N}=0,\label{eq:indres1}\\
    &\lim_{N\to\infty}\frac{\|\hat{\bx}^{t+2}-\tilde{\bx}^{t+2}\|^2}{N}=0,\label{eq:indres2}\\
    &\lim_{N\to\infty}\frac{\|\bx^{t+1}\|^2}{N}=\lim_{N\to\infty}\frac{\|\bu^{t+1}\|^2}{N}\nonumber\\
    &\hspace{5.5em}=\mathbb E\Bigg[\left(\tilde{z}_{t+1}+(\bB_{t+1})_{t+1, 1}\hat{x}_1+\sum_{i=2}^{t+1}(\bB_{t+1})_{t+1, i}\tilde{x}_i+\mu_{t+1}X-\bar{\beta}_{t+1}\tilde{x}_{t}\right)^2\Bigg],\label{eq:indres3}\\
    &\lim_{N\to\infty}\frac{\|\hat{\bx}^{t+2}\|^2}{N}=\lim_{N\to\infty}\frac{\|\tilde{\bx}^{t+2}\|^2}{N}=\mathbb E[\tilde{x}_{t+2}^2].\label{eq:indres4}
\end{align}
By doing so, we will have proved also the induction step and consequently that \eqref{eq:PLsecondstep} holds. 

Using similar passages as in \eqref{eq:b1man}, we obtain
\begin{equation}
    \begin{split}
        \bx^{t+1}-\bu^{t+1} &= \bY\hat{\bx}^{t+1} - \beta_{t+1}\hat{\bx}^{t}-\tilde{\bz}^{t+1}-(\bB_{t+1})_{t+1, 1} \tilde{\bx}^1 - \sum_{i=2}^{t+1} (\bB_{t+1})_{t+1, i}\tilde{\bx}^i-\mu_{t+1} \bX+\bar{\beta}_{t+1} \tilde{\bx}^{t}\\
        &= \bZ\hat{\bx}^{t+1}+\sqrt{\lambda_*}\frac{\langle \bX, \hat{\bx}^{t+1}\rangle}{N}\bX - \beta_{t+1}\hat{\bx}^{t}-\bZ \tilde{\bx}^{t+1} + \sum_{i=1}^{t+1} {\sf b}_{t+1, i}\tilde{\bx}^i-(\bB_{t+1})_{t+1, 1} \tilde{\bx}^1 \\
        &\hspace{17em}- \sum_{i=2}^{t+1} (\bB_{t+1})_{t+1, i}\tilde{\bx}^i-\mu_{t+1} \bX+\bar{\beta}_{t+1} \tilde{\bx}^{t}.
    \end{split}
\end{equation}
Hence, by triangle inequality,
\begin{equation}\label{eq:decT}
    \begin{split}
&        \frac{\|\bx^{t+1}-\bu^{t+1}\|^2}{N}\le C\bigg(\frac{\|\bZ\hat{\bx}^{t+1}-\bZ\tilde{\bx}^{t+1}\|^2}{N}+\left(\sqrt{\lambda_*}\frac{\langle \bX, \hat{\bx}^{t+1}\rangle}{N}-\mu_{t+1}\right)^2\frac{\|\bX\|^2}{N}\\
        &\hspace{4em}+\sum_{i=1}^{t+1}\left({\sf b}_{t+1, i}-(\bB_{t+1})_{t+1, i}\right)^2\frac{\|\tilde{\bx}^i\|^2}{N}+(\bar{\beta}_{t+1}-\beta_{t+1})^2\frac{\|\hat{\bx}^t\|^2}{N}+(\bar{\beta}_{t+1})^2\frac{\|\tilde{\bx}^t-\hat{\bx}^t\|^2}{N}\bigg)\\
        &\hspace{2em}:= C(T_1+T_2+T_3+T_4+T_5).
    \end{split}
\end{equation}
Consider the first term. Since $\|\bZ\|_{\rm op}\le C$, the induction hypothesis \eqref{eq:ind1} implies that $T_1\to 0$ as $N\to\infty$.

Consider the second term. The following chain of equalities holds:
\begin{equation}\label{eq:T2}
    \lim_{N\to \infty}\sqrt{\lambda_*}\frac{\langle \bX, \hat{\bx}^{t+1}\rangle}{N}=\lim_{N\to \infty}\sqrt{\lambda_*}\frac{\langle \bX, \tilde{\bx}^{t+1}\rangle}{N}=\sqrt{\lambda_*} \mathbb E[X\,\tilde{x}_{t+1}]=\sqrt{\lambda_*} \mathbb E[X\,\hat{x}_{t+1}]=\mu_{t+1}.
\end{equation}
Here, the first equality uses \eqref{eq:ind1} together with the fact that $\|\bX\|^2/ N=1$; the second equality follows from Proposition \ref{prop:auxSE}; the third equality uses \eqref{eq:equivSE} and the definitions of $\hat{x}_{t+1}$ and $\tilde{x}_{t+1}$ in \eqref{eq:SElaws} and \eqref{eq:SElawsaux}, respectively; and the fourth equality uses the definition of $\mu_{t+1}$ in \eqref{eq:SEmu}. Finally, using \eqref{eq:T2} and again that $\|\bX\|^2/ N=1$ gives that $T_2\to 0$ as $N\to\infty$.

Consider the third term. The following chain of equalities holds, for $1\le j<i\le (t+1)$,
\begin{equation}\label{eq:T3}
    \lim_{N\to \infty} (\bPhi_{t+1})_{i, j} =\lim_{N\to\infty}\langle\partial_j \tilde{\bx}^i\rangle =\mathbb E[\partial_j\tilde{x}_i] = \mathbb E[\partial_j\hat{x}_i] = (\bar{\bPhi}_{t+1})_{i, j}
\end{equation}
Here, the first equality uses the definition \eqref{eq:defbPhi}; the second equality follows from Lemma \ref{lem:lipderiv}, as $\tilde{\bx}^i=\tilde{h}_i(\tilde{\bz}^1, \ldots, \tilde{\bz}^{i-1}, \hat{\bx}^1, \bX)$ converges in $W_2$ (and therefore in distribution) to $\tilde{x}_i=\tilde{h}_i(\tilde{z}_1, \ldots, \tilde{z}_{i-1}, \hat{x}_1, X)$ and $h_i$ satisfies Assumption \ref{ass:AMP}; the third equality uses \eqref{eq:equivSE} and the definitions of $\hat{x}_{i}$ and $\tilde{x}_{i}$ in \eqref{eq:SElaws} and \eqref{eq:SElawsaux}, respectively;  and the fourth equality uses the definition of $(\bar{\bPhi}_{t+1})_{i, j}$ in \eqref{eq:defB}. By Assumption \ref{assump:noise_signal}, as $N\to\infty$, $\kappa_j\to\bar{\kappa}_j$ for all $j$. Thus, by combining \eqref{eq:T3} with the definitions of $\bB_{t+1}$ and $\tilde{\bB}_{t+1}$ in \eqref{eq:defB} and \eqref{eq:deftildeB}, respectively, we conclude that, as $N\to\infty$, ${\sf b}_{t+1, i}\to(\bB_{t+1})_{t+1, i}$ for $i\in \{1, \ldots, t+1\}$. By using the induction hypothesis \eqref{eq:ind2}, which shows that $\|\tilde{\bx}^i\|^2/N$ converges to a finite limit, we conclude that $T_3\to 0$ as $N\to\infty$.

Consider the fourth term. By using the induction hypothesis \eqref{eq:ind0} and \eqref{eq:ind1bis}, together with \eqref{eq:xz_diff}, we obtain that $\bx^t$ and $\bu^t$ have the same $W_2$ limit given by Proposition \ref{prop:auxSE}, namely,
\begin{equation*}
    \bx^t \, \stackrel{\mathclap{W_2}}{\longrightarrow}  \,    \tilde{z}_{t}+(\bB_{t})_{t, 1}\hat{x}_1+\sum_{i=2}^{t}(\bB_{t})_{t, i}\tilde{x}_i+\mu_{t}X-\bar{\beta}_{t}\tilde{x}_{t-1}.
\end{equation*}
Thus, by recalling that $\beta_{t+1} = \langle h_{t+1}'(\bx^t)\rangle$, an application of Lemma \ref{lem:lipderiv} gives that 
\begin{equation}\label{eq:T4}
    \lim_{N\to\infty}\beta_{t+1} = \mathbb E\bigg[h_{t+1}'\bigg(\tilde{z}_t+(\bB_t)_{t, 1} \hat{x}_1 + \sum_{i=2}^t (\bB_t)_{t, i}\tilde{x}_{i}+\mu_t X-\bar{\beta}_t \tilde{x}_{t-1}\bigg)\bigg].
\end{equation}
Furthermore, by using \eqref{eq:equivSE} and recalling the definition of $\bar{\beta}_{t+1}$, we have that the RHS of \eqref{eq:T4} is equal to  $\bar{\beta}_{t+1}$. Hence, by using the induction hypothesis \eqref{eq:ind2}, we obtain that $T_4\to 0$ as $N\to\infty$. Finally, by using the induction hypothesis \eqref{eq:ind1}, we 
conclude that also $T_5\to 0$ as $N\to\infty$.

As $T_i\to 0$ for $i\in \{1, \ldots, 5\}$, \eqref{eq:decT} implies that \eqref{eq:indres1} holds. Next, as $h_{t+1}$ is Lipschitz, \eqref{eq:indres1} immediately implies \eqref{eq:indres2}. Then, by using \eqref{eq:tr1} with $i=t+1$ and Proposition \ref{prop:auxSE}, we obtain that \eqref{eq:indres3} and \eqref{eq:indres4} hold, thus concluding the inductive proof. The result we have just proved by induction, combined with \eqref{eq:xz_diff}, gives that \eqref{eq:PLsecondstep} holds. 

Another application of Proposition \ref{prop:auxSE}, together with \eqref{eq:PLsecondstep}, gives that
\begin{equation}\label{eq:fin1}
    \begin{split}
         \lim_{N \to \infty}  
         \frac{1}{N}& \sum_{i=1}^{N} \psi\left((\bx^1)_i, \ldots, (\bx^t)_i, \,  (\hat{\bx}^1)_i, \ldots, (\hat{\bx}^{t+1})_i, \, (\bX)_i\right) =\mathbb E [ \psi(u_1, \ldots, u_t, \tilde{x}_1, \ldots, \tilde{x}_{t+1}, X) ],
     \end{split}
\end{equation}
where we have defined for $s\in \{1, \ldots, t\}$,
\begin{equation}
    u_s = \tilde{z}_s+(\bB_s)_{s, 1} \tilde{x}_1 + \sum_{i=2}^s (\bB_s)_{s, i}\tilde{x}_i+\mu_s X-\bar{\beta}_s \tilde{x}_{s-1}.
\end{equation}
Finally, by using \eqref{eq:equivSE}, we have that
\begin{equation}\label{eq:fin2}
    \mathbb E [ \psi(u_1, \ldots, u_t, \tilde{x}_1, \ldots, \tilde{x}_{t+1}, X) ] = \mathbb E [ \psi(x_1, \ldots, x_t, \hat{x}_1, \ldots, \hat{x}_{t+1}, X) ].
\end{equation}
By combining \eqref{eq:fin1} and \eqref{eq:fin2}, we obtain that the desired result \eqref{eq:PL2_main_resultx} holds, which concludes the proof.

\end{proof}

\section{Implementation details and additional numerical results}\label{app:extra}

\subsection{Correct AMP: algorithm and corresponding state evolution}\label{app:cAMP}~\\ \vspace{-14pt}

 In our experiments, for both the correct and Gaussian AMP, we assume to have access to an initialization $\hat{\bx}^1\in \mathbb R^N$ s.t. \eqref{eq:AMPinit} holds. Then, for $t\ge 1$, the \emph{correct AMP} iteration reads
\begin{equation}\label{eq:cAMP}
\bx^t_{\rm c} = \bY\hat{\bx}^t_{\rm c}-\sum_{i=1}^{t}{\sf b}^{\rm c}_{t, i}\hat{\bx}^{i}_{\rm c},\quad \hat{\bx}_{\rm c}^{t+1} = h_{t+1}(\bx_{\rm c}^t, \ldots, \bx_{\rm c}^1).
\end{equation}
To obtain the coefficients $\{{\sf b}^{\rm c}_{t, i}\}_{i=1}^t$, we define the matrix $\bPhi^{\rm c}_{t}\in\mathbb R^{t\times t}$ as
\begin{equation}\label{eq:defbPhi2}
        (\bPhi^{\rm c}_{t})_{i, j}=0, \quad \mbox{ for }i\le j, \qquad (\bPhi^{\rm c}_{t})_{i, j}=\langle\partial_j \hat{\bx}_{\rm c}^i\rangle, \quad \mbox{ for }i> j,
\end{equation}
where, for $j<i$, the vector $\langle\partial_j \hat{\bx}_{\rm c}^i\rangle\in \mathbb R^N$ denotes the partial derivative of $h_i:\mathbb R^{i-1}\to \mathbb R$ with respect to the $j$-th input (applied component-wise). Then, the vector $({\sf b}^{\rm c}_{t, 1}, \ldots, {\sf b}^{\rm c}_{t, t})$ is given by the last row of the matrix $\bB^{\rm c}_{t}\in\mathbb R^{t\times t}$ defined as
\begin{equation}\label{eq:deftildeB2}
    \bB^{\rm c}_{t} = \sum_{j=0}^{t-1} \kappa_{j+1}(\bPhi^{\rm c}_{t})^j,
\end{equation}
where $\{\kappa_k\}_{k\ge 1}$ denotes the sequence of free cumulants associated to the matrix $\bY$. By using the results of \cite{fan2022approximate,zhong2021approximate} (e.g., Theorem 2.3 in \cite{zhong2021approximate}), one can obtain a state evolution result for the correct AMP \eqref{eq:cAMP}. More specifically, we have that
    \begin{equation}
        \begin{split}
                & (\bx_{\rm c}^1, \ldots, \bx_{\rm c}^t, \,  \hat{\bx}_{\rm c}^1, \ldots, \hat{\bx}_{\rm c}^{t+1}, \, \bX) \, \stackrel{W_2}{\longrightarrow}  \,  (x_{1}^{\rm c}, \ldots, x_{t}^{\rm c}, \, \hat{x}_{1}^{\rm c}, \ldots, \hat{x}_{{t+1}}^{\rm c}, \, X).
        \end{split}
        \label{eq:W2_resultc}
    \end{equation}
    The law of the random vector $(x_{1}^{\rm c}, \ldots, x_{t}^{\rm c}, \, \hat{x}_{1}^{\rm c}, \ldots, \hat{x}_{t+1}^{\rm c})$ is expressed via a sequence of vectors $\bmu^{\rm c}_t=(\mu^{\rm c}_1, 
\ldots, \mu^{\rm c}_t)$ and matrices $\bSigma^{\rm c}_t, \bDelta^{\rm c}_t\in \mathbb R^{t\times t}$ defined recursively as follows. We start with the initialization 
\begin{equation}\label{eq:SEinitc}
\mu^{\rm c}_1 = \sqrt{\lambda_*}\epsilon,\quad \bSigma^{\rm c}_1 =\bar{\kappa}_2 \mathbb E[\hat{x}_1^2], \quad \bDelta^{\rm c}_1 =\mathbb E[\hat{x}_1^2],
\end{equation}
where $\lambda_*$ is the SNR (see \eqref{eq:spikedmodel}), $\epsilon$ is given in \eqref{eq:AMPinit}, and $\{\bar{\kappa}_k\}_{k\ge1}$ are the free cumulants associated to the asymptotic spectral measure of the noise $\rho$. For $t\ge 1$, given $\bmu^{\rm c}_t, \bSigma^{\rm c}_t, \bDelta^{\rm c}_t$, let
\begin{equation}\label{eq:SElawsauxc}
    \begin{split}
    (x_{1}^{\rm c},& \ldots, x_{t}^{\rm c})=(\mu^{\rm c}_1, 
\ldots, \mu^{\rm c}_t)X+(w_{1}^{\rm c}, \ldots, w_{t}^{\rm c}),\\
        (w_{1}^{\rm c},& \ldots, w_{t}^{\rm c})\sim\mathcal N(\bzero, \bSigma^{\rm c}_t) \mbox{ and independent of }(X, \hat{x}_1),\\
         \hat{x}_{s}^{\rm c} &= h_{s}\left(x_{1}^{\rm c}, \ldots, x_{s-1}^{\rm c}\right), \quad \mbox{ for } s\in \{2, \ldots, t+1\},
    \end{split}
\end{equation}
Then, $\bar{\bPhi}^{\rm c}_{t+1}, \bDelta^{\rm c}_{t+1}\in\mathbb R^{(t+1)\times (t+1)}$ are matrices with entries given by  
\begin{equation}
    \begin{split}
        (\bar{\bPhi}^{\rm c}_{t+1})_{i, j}&=0, \quad \mbox{ for }i\le j,\qquad\qquad (\bar{\bPhi}^{\rm c}_{t+1})_{i, j}=\mathbb E[\partial_j \hat{x}^{\rm c}_i], \quad \mbox{ for }i> j,\\
        (\bDelta^{\rm c}_{t+1})_{i, j} &= \mathbb E[\hat{x}^{\rm c}_{i}\,\hat{x}^{\rm c}_{j}], \quad 1\le i, j\le t+1,
    \end{split}
\end{equation}
where $\partial_j \hat{x}^{\rm c}_i$ denotes the partial derivative $\partial_{x^{\rm c}_j} h_i(x^{\rm c}_1, \ldots, x^{\rm c}_{i-1})$. Finally, we compute $\mu_{t+1}^{\rm c}$ and $\bSigma^{\rm c}_{t+1}$ as
\begin{equation}\label{eq:defSigmat2}
\begin{split}
\mu_{t+1}^{\rm c}&=\mathbb E[X\hat{x}_{t+1}^{\rm c}],\\
    \bSigma^{\rm c}_{t+1} &= \sum_{j=0}^{2t}\bar{\kappa}_{j+2} \sum_{i=0}^j (\bar{\bPhi}^{\rm c}_{t+1})^i\bDelta^{\rm c}_{t+1}((\bar{\bPhi}^{\rm c}_{t+1})^\intercal)^{j-i}.
\end{split}
\end{equation}
As usual, the $t\times t$ top left sub-matrix of $\bSigma^{\rm c}_{t+1}$ is given by $\bSigma^{\rm c}_t$. 

\subsection{Choice of non-linearities}\label{app:den}~\\ \vspace{-14pt}

 For the denoiser $h_{t+1}(x_1, \ldots, x_t)$ of the correct AMP iteration \eqref{eq:cAMP}, we use the posterior-mean that takes into account all the past iterates, namely,
\begin{equation}\label{eq:cexpc}
h_{t+1}(x_1, \ldots, x_t) = \mathbb E[X \mid (x_1^{\rm c}, \ldots, x_t^{\rm c})=(x_1, \ldots, x_t)],
\end{equation}
where $X, x_1^{\rm c}, \ldots, x_t^{\rm c}$ are the state evolution random variables defined above. If the prior on $\bX$ is uniform on the sphere, then $X\sim\mathcal N(0, 1)$ and the conditional expectation \eqref{eq:cexpc} can be simplified as
\begin{equation}
h_{t+1}(x_1, \ldots, x_t) = \frac{(\bmu^{\rm c}_t)^\intercal(\bSigma^{\rm c}_t)^{-1}\bx_t}{1+(\bmu^{\rm c}_t)^\intercal(\bSigma^{\rm c}_t)^{-1}\bmu^{\rm c}_t},
\end{equation}
where we use the short-hand $\bx_t=(x_1, \ldots, x_t)$. The state evolution parameters $\bmu^{\rm c}_t$ and $\bSigma^{\rm c}_t$ needed to implement the denoiser $h_{t+1}$ are estimated consistently from the data.

For the denoiser $h_{t+1}(x_t)$ of the Gaussian AMP iteration \eqref{eq:AMP}, we use the posterior-mean denoiser that takes into account a single iterate, namely,  
\begin{equation}\label{eq:cexpc2}
h_{t+1}(x_t) = \frac{\mu_t^{\rm G}}{(\mu_t^{\rm G})^2+(\bSigma_t^{\rm G})_{t,t}}x_t.
\end{equation}
In \eqref{eq:cexpc2}, we use the supra-index ${\rm G}$ (as opposed to ${\rm c}$) to indicate that these quantities correspond to the Gaussian AMP \eqref{eq:AMP} (as opposed to the correct AMP \eqref{eq:cAMP}). As usual, the parameters $\mu_t^{\rm G}$ and $(\bSigma_t^{\rm G})_{t,t}$ are obtained from the data by using the recursion of the correct AMP specialized to the case of Gaussian noise. Let us highlight that this recursion can be implemented also in the mismatched setting, as it depends only on data. However, it does \emph{not} lead to consistent estimates of the state evolution parameters as derived in Theorem \ref{thm:SE}, because of the mismatch.

  \begin{figure}[t]
            \centering{               
                    \subfloat[$t=0.2$.\label{fig:interp1}]{
                \includegraphics[width=0.49\linewidth]{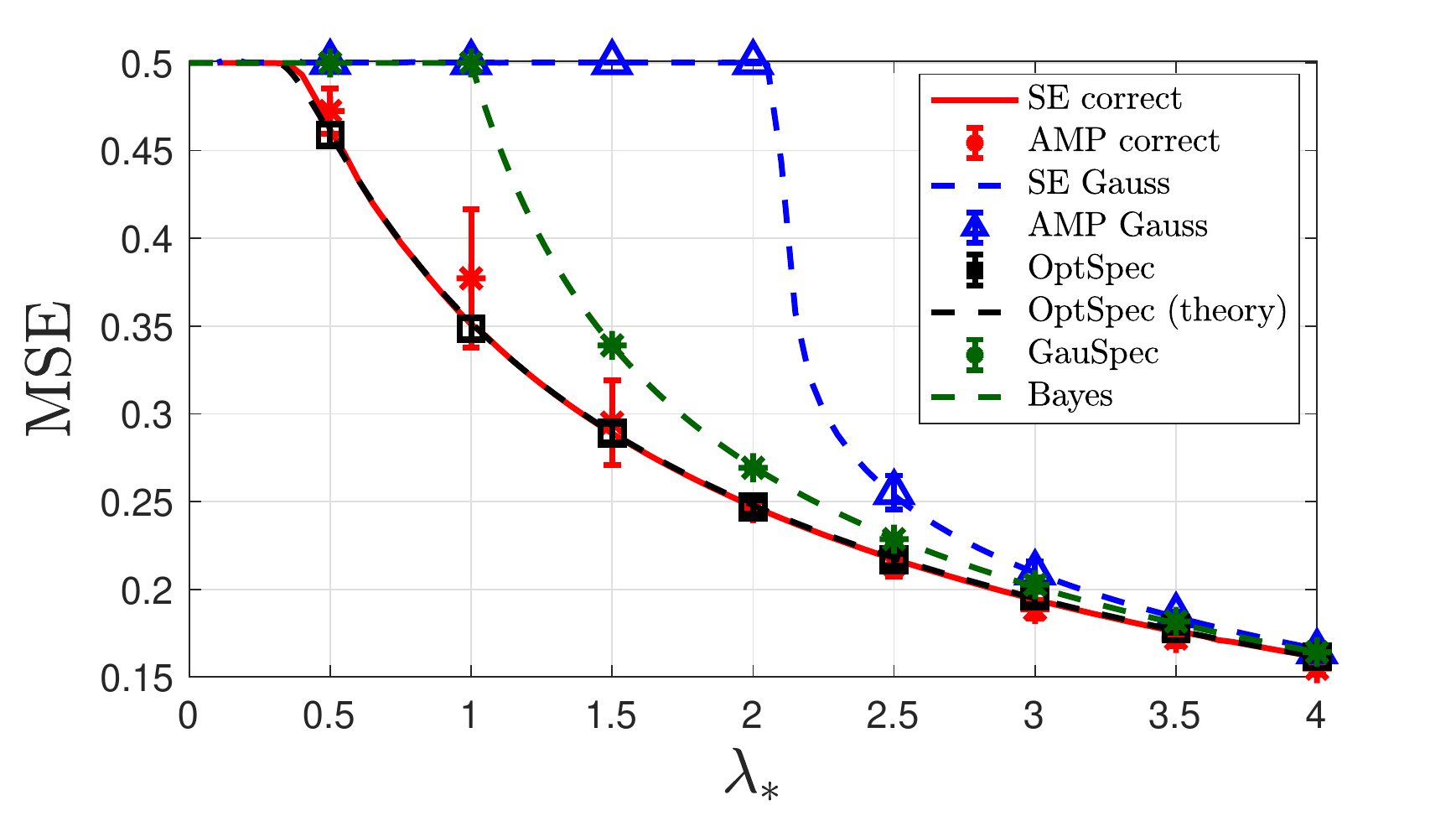}
                \includegraphics[width=0.49\linewidth]{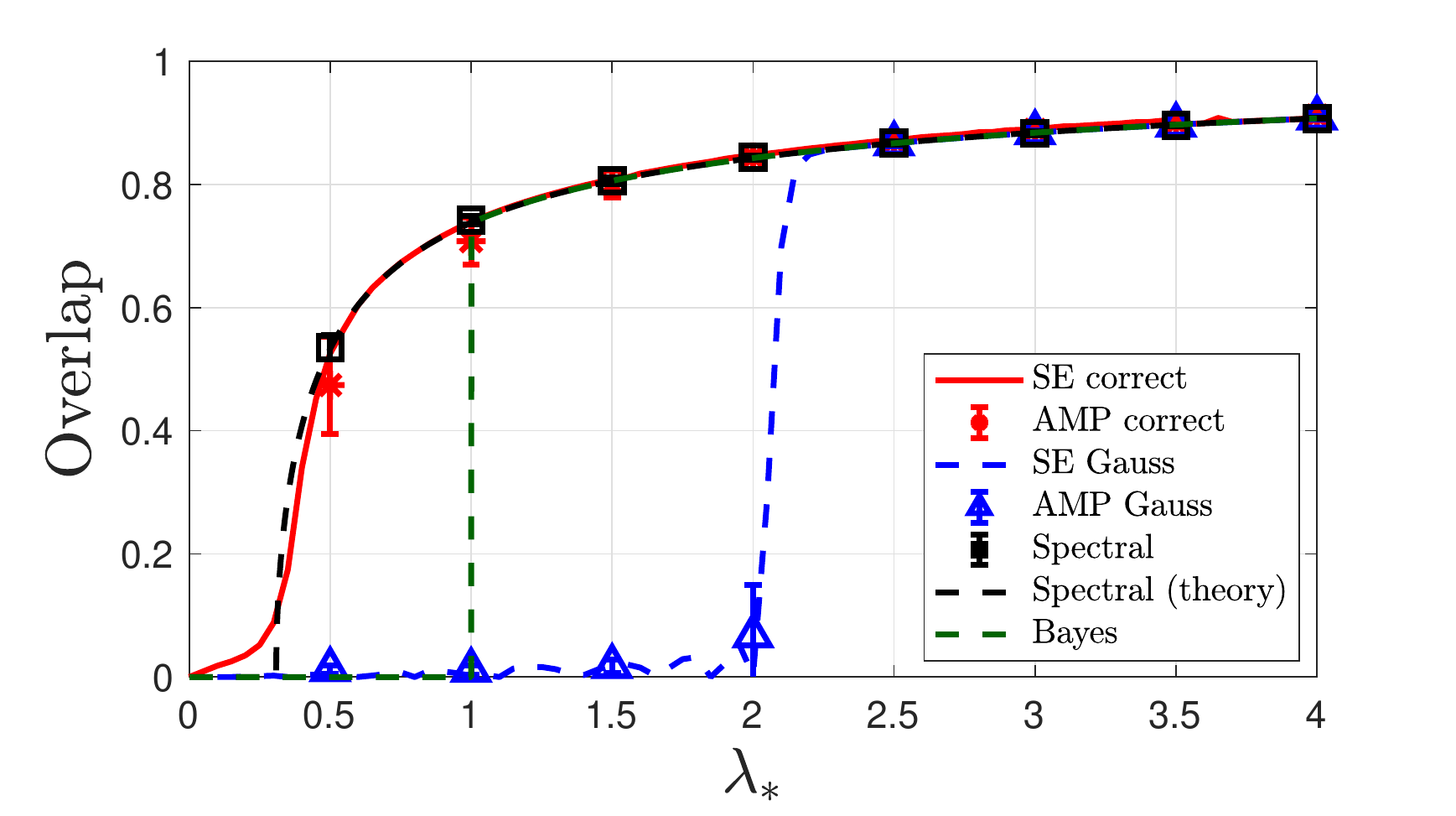}}

                    \subfloat[$t=0.5$.\label{fig:interp2}]{
                \includegraphics[width=0.49\linewidth]{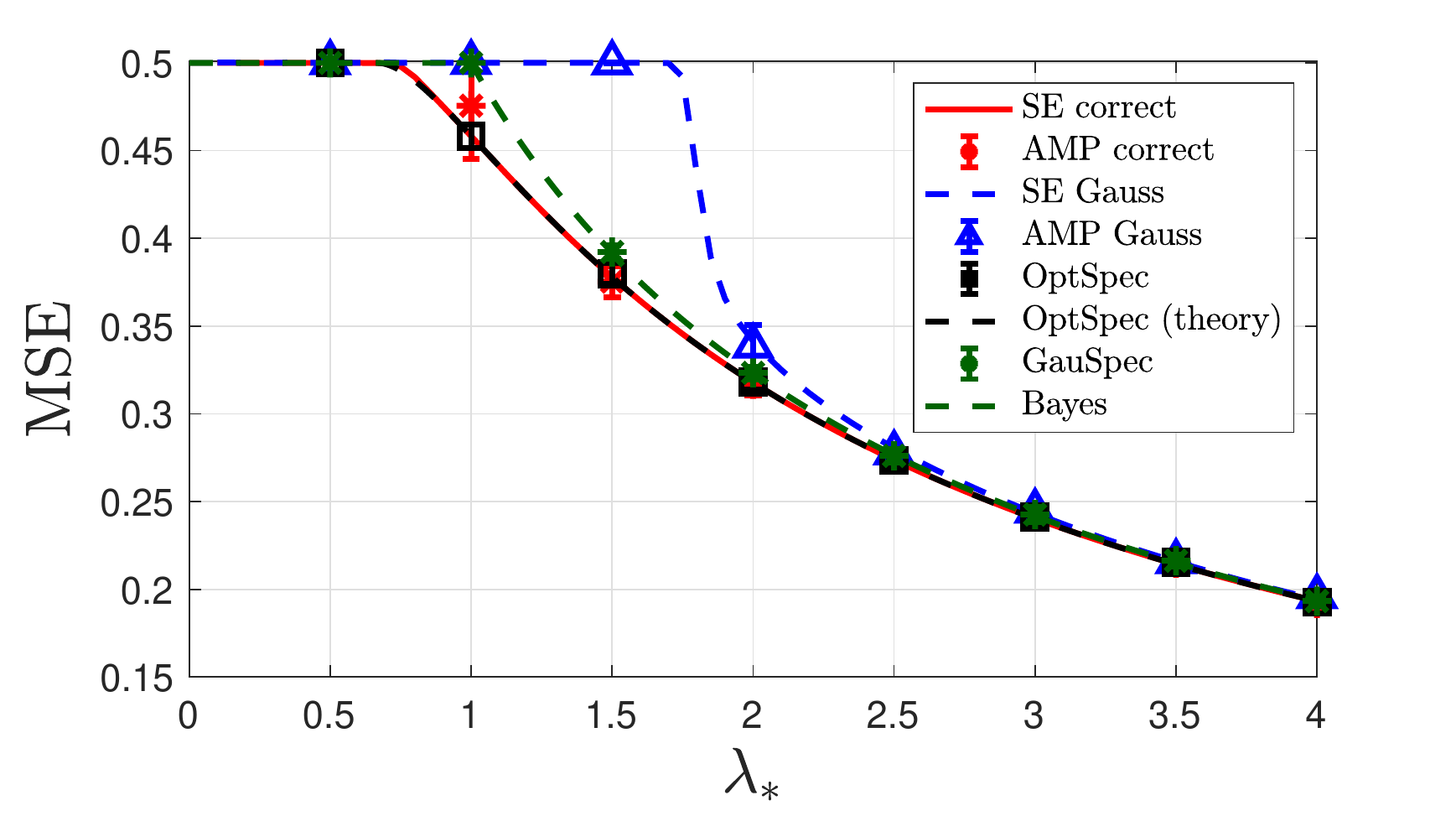}
                \includegraphics[width=0.49\linewidth]{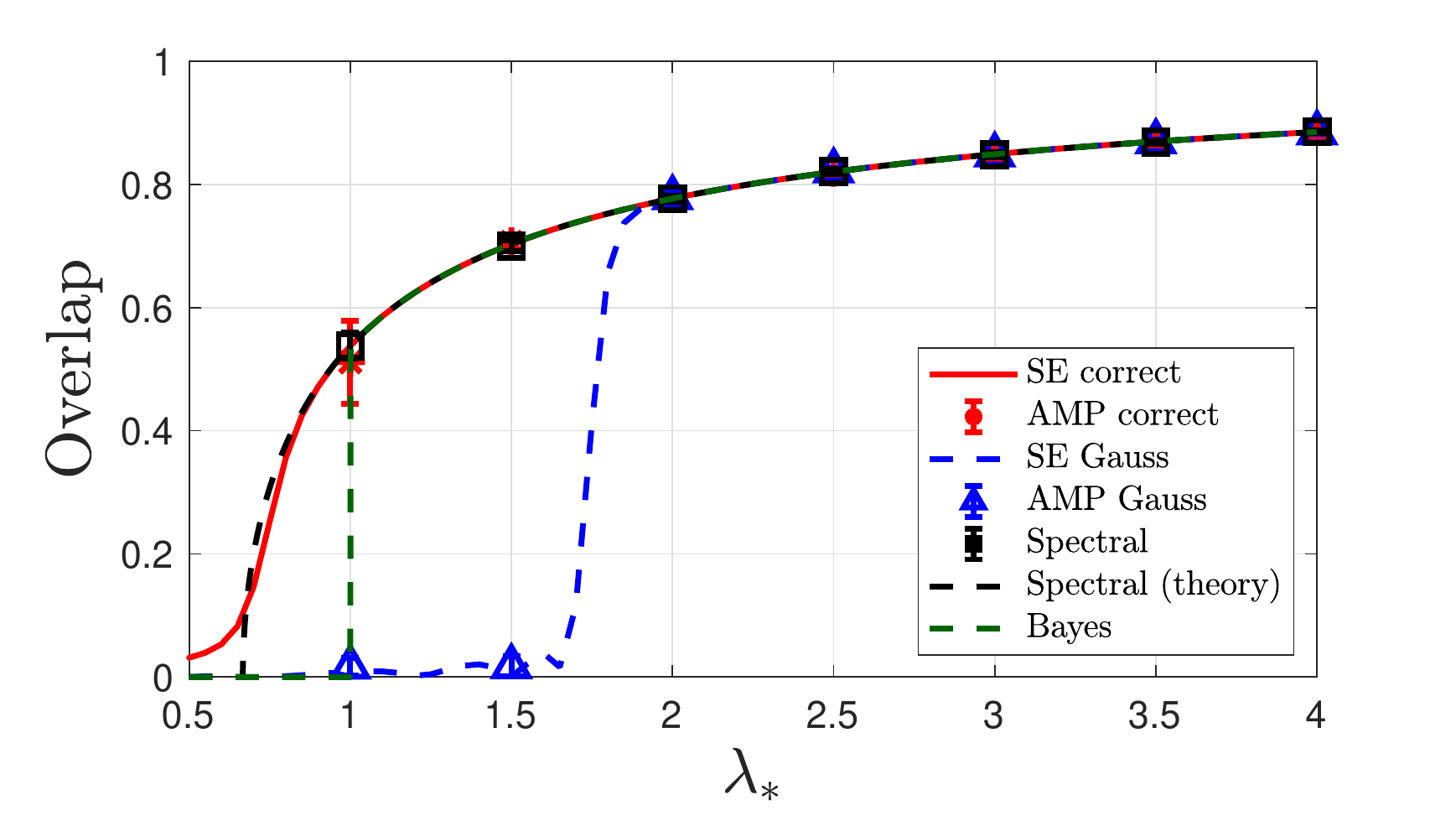}}
                
                                    \subfloat[$t=1$.\label{fig:interp3}]{
                   \includegraphics[width=0.49\linewidth]{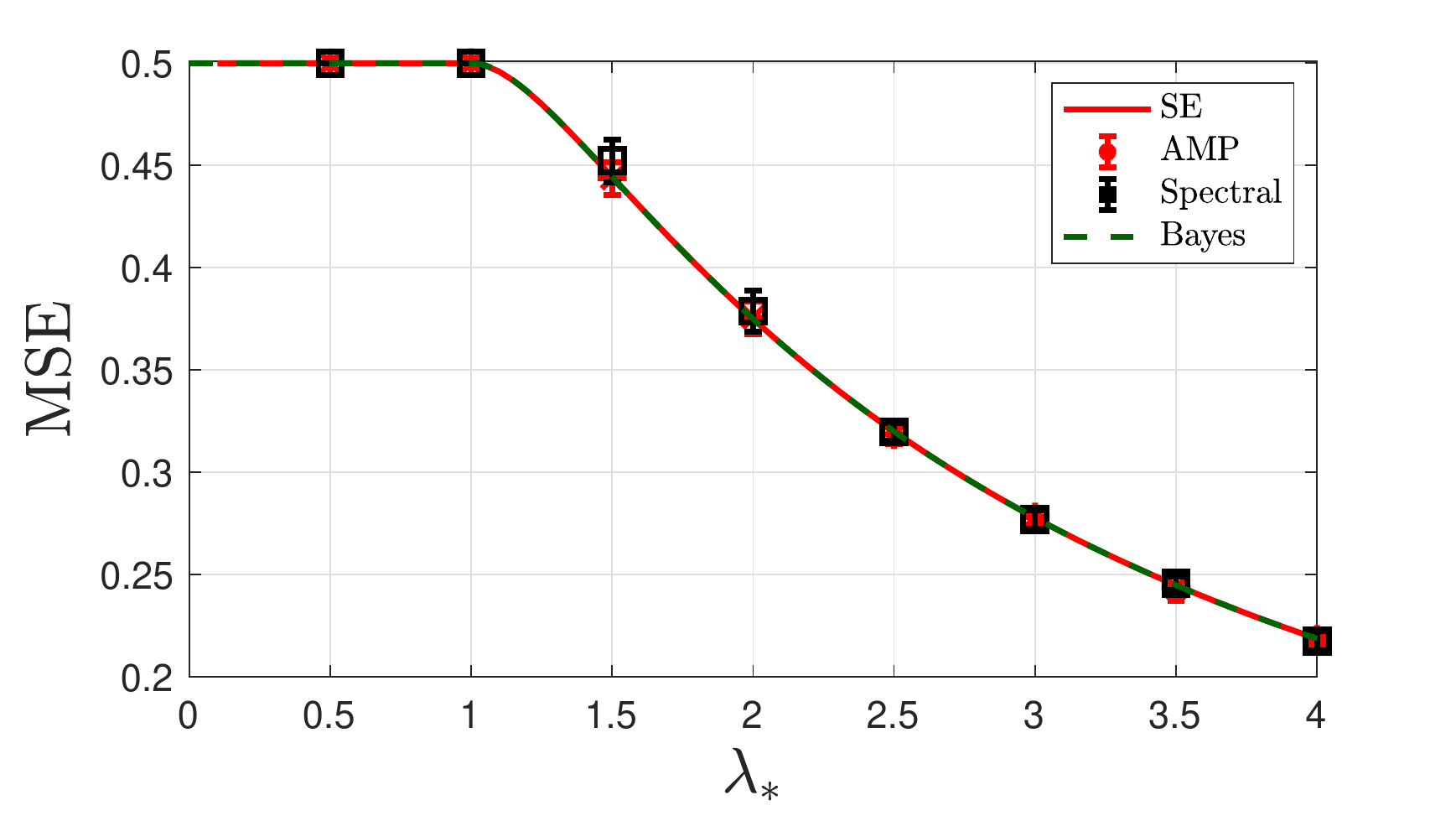}
                \includegraphics[width=0.49\linewidth]{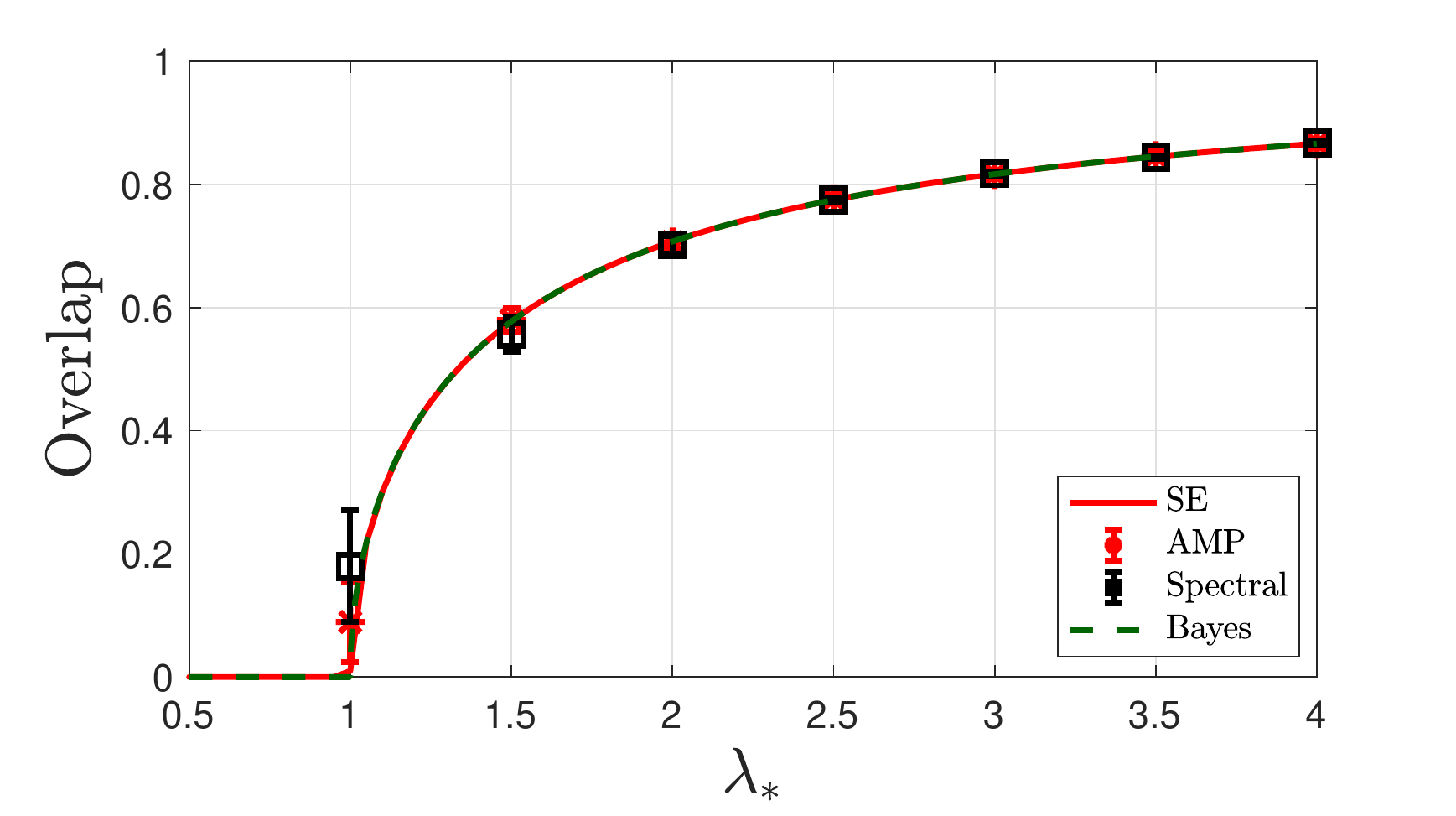}}
            }

            \caption{MSE (left column) and overlap (right column) as a function of the true SNR $\lambda_*$, when the noise spectrum is the free convolution of Rademacher and semicircle spectra for various values of the interpolating parameter $t$.}
            \label{fig:MSEint}
        \end{figure}

\subsection{Additional numerical results}\label{app:addnum}~\\ \vspace{-14pt}

Let the matrix $\bA\in \R^{ N \times N}$ be an orthogonal matrix with ``Rademacher spectrum'' $\rho=\frac12(\delta_{1}+\delta_{-1})$ (the eigenvalues are i.i.d. uniform $\pm 1$) and $\bW$ a standard Wigner matrix. Then, the noise is $\bZ_t:=\sqrt t\,\bW + \sqrt{1-t}\,\bA$ for $t\in[0,1]$. For $t=1$, this coincides with the pure Wigner case: here, the Bayes estimator is optimal, and the Gaussian AMP is also optimal unless a statistical-to-computational gap is present \cite{deshpande2017asymptotic,XXT,2016arXiv161103888L}. In contrast, for $t\in[0,1)$, there is a mismatch 
    and our results give a sharp asymptotic characterization of the Bayes and AMP estimators, cf. Theorem \ref{thm:lim_mse} and \ref{thm:SE}, respectively. 
    By additivity of the $R$-transform for (asymptotically) free random matrices \cite{bouchaudpotters}, denoting $R_t(x)$ the $R$-transform of $\bZ_t$, we obtain 
          $$R_{t}(x)= tx + \frac{\sqrt{4(1-t)x^2 +1}-1}{2x}.$$
          
    In Fig.  \ref{fig:MSEint}, we take $t\in \{0.2, 0.5, 1\}$. We note that, for this model, Assumption \ref{assump:noise_signal} clearly holds, and we have verified numerically that Assumption \ref{assump:dist_decay} holds too. As $t$ goes from $0$ to $1$, we get closer to a model without mismatch and, therefore, the performance gap between mismatched algorithms (Gaussian AMP, GauSpec, Bayes) and optimal ones (correct AMP, OptSpec) shrinks. As expected, all curves collapse at $t=1$. The phenomenology described at the end of Section \ref{sec:applic} can also be observed in this setting. 


\end{document}